\newtheorem{theorem}{Theorem}
\newtheorem{lemma}[theorem]{Lemma}
\newtheorem{definition}{Definition}
\newtheorem{remark}[definition]{Remark}
\begin{document}

\title{Two-way Networks: when Adaptation is Useless}
\author{\IEEEauthorblockN{Zhiyu Cheng, Natasha Devroye\\
University of Illinois at Chicago \\
zcheng3, devroye@uic.edu
\thanks{Portions of this work appeared in Allerton 2011 \cite{zcheng_Allerton},  ISIT 2012 \cite{zcheng_ISIT}, and Allerton 2012 \cite{zcheng_Allerton2012}. The work of Z. Cheng and N. Devroye was partially supported by NSF under award 1053933. The contents of this article are solely the responsibility of the authors and do not necessarily represent the official views of the NSF. 
}
}}

\maketitle

\begin{abstract}
Most wireless communication networks are two-way, where nodes act as both sources and destinations of messages. This allows for ``adaptation'' at or ``interaction'' between the nodes --  a node's channel inputs may be  functions of its message(s) and previously received signals, in contrast to feedback-free one-way channels where inputs are functions of messages only. 
How to best  adapt, or cooperate,  is key to two-way communication, rendering it complex and challenging. 
However, examples exist of channels where adaptation is {\it not} beneficial from a capacity perspective;  it is known that for the point-to-point two-way modulo 2 adder and Gaussian channels,  adaptation does not increase capacity. We ask whether analogous results hold for several multi-user two-way networks. 

We first consider deterministic two-way channel models: the binary modulo-2 addition channel and a generalization of this, and the linear deterministic channel  which models Gaussian channels at high SNR. For these deterministic models we obtain the capacity region for the two-way multiple access/broadcast channel, the two-way Z channel and the two-way interference channel (under certain ``partial'' adaptation constraints in some regimes).  
We permit all nodes to adapt their channel inputs to past outputs (except for portions of the linear high-SNR two-way interference channel where we only permit 2 of the 4 nodes to fully adapt). However, we show that this adaptation is useless from a capacity region perspective. That is, 
the two-way fully or partially adaptive capacity region consists of two parallel ``one-way'' regions operating simultaneously in opposite directions, achieved by strategies where the channel inputs at each use do not adapt to previous inputs. 
We next consider two noisy channel models: first, the  Gaussian two-way MAC/BC, where we show that adaptation can at most increase the sum-rate by $\frac{1}{2}$ bit in each direction. Next, for the two-way interference channel, partial adaptation is shown to be useless when the interference is very strong. In the strong and weak interference regimes, we show that the non-adaptive Han and Kobayashi scheme utilized in parallel in both directions achieves to within a constant gap for the symmetric rate of the fully (for some regimes) or partially (for the remaining regimes) adaptive models.
The central technical contribution is the derivation of new, computable outer bounds which allow for adaptation. Inner bounds follow from known, non-adaptive achievability schemes of the corresponding one-way channel models. 
\end{abstract}

\section{Introduction}

Two-way communication, where users A and B wish to exchange a stream of information, is a natural form of communication of relevance in present and future wireless networks. Applications include two-way high data-rate tele-medicine over wireless broadband links, mobile video conferencing over next generation cellular networks, the synchronization of data among terminals, and communication between a base station and clients. Indeed, much of our current wireless communication is already two-way in nature, but it is not treated as such in practice. Rather, current channel coding schemes orthogonalize the two directions, rendering the two-way channel equivalent to two one-way communication links. 
While this is simple to implement, whether such non-adaptive two-way coding schemes are optimal from a capacity perspective remains an open question. 

What makes two-way communications, in which two (or more) users exchange messages over the same shared channel, challenging are the possibilities that stem from having  nodes be both  sources and destinations of messages. This permits them to adapt their channel inputs to their past received signals. Such two-way adaptation was first considered in the point-to-point two-way channel by Shannon \cite{Shannon:1961}. Shannon's inner and outer bounds \cite{Shannon:1961} are not tight in general, and a general computable\footnote{By computable we mean single-letter expression without the use of unbounded cardinality auxiliary random variables. Multi-letter formulas for the capacity of two-way channels exist, see the expressions involving directed information over code-trees of \cite{Kramer:thesis}.} formula for the capacity region of the point-to-point  two-way channel remains open. 

 \begin{figure*}[h]
\begin{center}
\includegraphics[width=12cm]{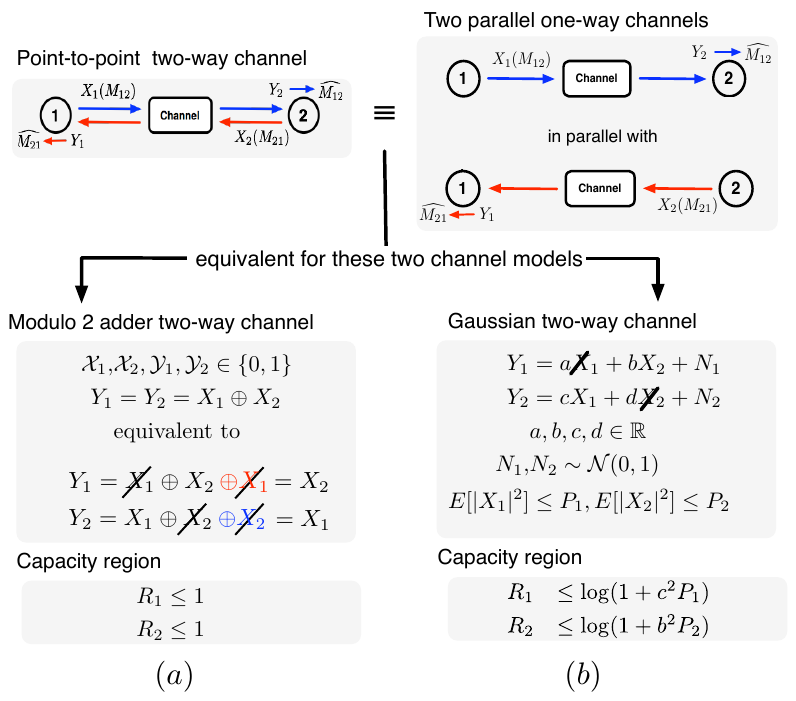}
\caption{Examples of point-to-point two-way channels breaking up into two parallel one-way channels.} 
\label{fig:intro}
\end{center}
\end{figure*}

However, encouragingly, capacity is known for several point-to-point two-way channel models where the interaction between one's own signal and that of the other user may be resolved.  For example, in the two-way modulo 2 binary adder channel where channel outputs  $Y_1 = Y_2 = X_1 \oplus  X_2$ for binary inputs $X_1,X_2$ and  $\oplus$ modulo 2 addition, the capacity region is one bit per user per channel use. Each user is able to ``undo'' the effect of the other as shown in Fig. \ref{fig:intro} (a), something that is not possible in one channel use for the  elusive binary multiplier channel with $Y_1=Y_2 = X_1 X_2$. In the binary modulo 2 adder channel, information independently flows in the $\rightarrow$ and the $\leftarrow$ ``directions''  and nodes need not interact, or adapt their current inputs to past outputs,  to achieve capacity.  In a similar fashion, the capacity of a two-way Gaussian point-to-point channel is equal to two parallel Gaussian channels as shown in Fig. \ref{fig:intro} (b), which may be achieved without the use of adaptation at the nodes \cite{Han:1984}. 

{\bf A note on terminology.} In this work, ``adaptation'' or ``interaction'' is said to take place when the next channel  input of a node is a non-trivial function of that node's past received signals. One may alternatively use the terms ``feedback'' or ``cooperation'' instead of adaptation or interaction. However, we feel that ``adaptation'' and ``interaction'' better highlights the nature of {\it two-way} communications where there is no real notion of feedback (which suggests backwards links which serve to aid communication in the forward direction) as all links may carry information for forward and backwards directions simultaneously. 
``Cooperation'' may also be used and more accurately reflects the fact that nodes may help each other in multi-user two-way channels, but has been 
used in many existing one-way communication scenarios. We feel fresh terminology to emphasize the fact that all nodes may adapt their transmission to each other is useful, as it does not contain any notion of directionality. 



\subsection{Contributions} We seek examples of  multi-user two-way channels rather than point-to-point two-way channels where, even though nodes may adapt current inputs to past outputs, this is {\it not} beneficial from a capacity region perspective. In two-way networks, one may expect adaptation to, in general, be useful and enlarge the capacity region. For example, in multi-user Gaussian channels one may  intuitively expect adaptation to allow for correlation between channel inputs which may translate to coherent gains, or allow for routing messages along different paths.  However, as we will see, there exist multi-user channels for which adaptation is useless.  In particular, we introduce three two-way channel models: 
\begin{enumerate}
\item the {\bf two-way Multiple Access / Broadcast  channel (MAC/BC)} in which there are 4 messages and 3 terminals forming a MAC channel in the $\rightarrow$ direction (2 messages) and a BC channel in the opposite $\leftarrow$ direction (2 messages); 

\item the {\bf two-way Z channel} in which there are 6 messages and  4 terminals forming a Z channel in the $\rightarrow$ direction (3 messages) and another Z channel in the opposite $\leftarrow$ direction (3 messages); 

\item the {\bf two-way interference channel (IC)} with 4 messages and 4 terminals forming an IC in the $\rightarrow$ direction (2 messages) and another IC in the $\leftarrow$ direction (2 messages).
\end{enumerate}

We emphasize that all nodes are permitted to adapt, i.e. channel inputs at node $j$ at time $i$ may be functions of the received signals at node $j$ from times 1 to $i-1$, and that data and ``feedback'' share the same links, i.e. there are no orthogonal feedback links. 
Our central {contributions}  are the derivation of the exact, computable, or approximate (to within a constant gap) capacity region of several two-way networks in which adaptation is useless (or leads to bounded gaps) from a capacity perspective. Typically two-way problems/networks result in multi-letter expressions or auxiliary random variables; our results do not.  

\smallskip
\noindent $\bullet$ We consider {\bf deterministic binary modulo 2 adder channels} for each of the three above channel models. These are the simplest examples of multi-user two-way channels where one might intuitively expect adaptation to be useless. 
For these channel models, and slight generalizations thereof, we obtain outer bounds, and demonstrate that non-adaptive time-sharing schemes between nodes transmitting in the same direction achieves capacity. Nodes transmitting data in opposite directions simultaneously transmit. 

\smallskip

\noindent $\bullet$ We next consider {\bf linear deterministic models} of the three two-way channels above which model Gaussian channels at high SNR  \cite{Avestimehr2009}
 and  again ask whether adaptation may increase the capacity regions beyond that of two parallel one-way multi-user channels in the $\rightarrow$ and $\leftarrow$ directions.  We will show that it does not for the first two channel models by obtaining their capacity regions. For the two-way interference channel, we show that {\it partial adaptation} where only two of the four nodes may adapt,  can ``block'' the two-way information flow and destroy the ability to relay / cooperate,  resulting in a capacity region equal to  two non-adaptive ICs. In addition, in some regimes of the relative link strengths, we obtain the capacity region for the symmetric model with {\it full adaptation} where all four nodes are permitted to adapt.

\smallskip

\noindent $\bullet$ We next consider two noisy Gaussian networks. {First, for the {\bf Gaussian two-way MAC/BC} we demonstrate that adaptation may only increase the sum-rate in each direction by up to $\frac{1}{2}$ bit. } Next, we consider the {\bf symmetric two-way Gaussian IC} where all ``direct'' links are equal and all ``cross-over'' links are equal. We derive new, computable outer bounds for the symmetric sum-rates for this Gaussian channel model and show that: a) adaptation is useless in very strong interference for the partially adaptive model, b) in strong but not very strong interference, non-adaptive schemes perform to within 1 bit per user per direction of the fully adaptive capacity region, and c) the particular non-adaptive Han and Kobayashi scheme of \cite{etkin_tse_wang} employed in each direction, achieves to within a  constant gap (2 bits per user per direction maximally) of fully or partially adaptive outer bounds in all other regimes. 
We provide examples of {\bf non-symmetric Gaussian two-way ICs} where adaptation may provide unbounded gain over non-adaptation, and where perfect output feedback may provide unbounded gain over adaptation. 

The emphasis of this work is on demonstrating when {\it adaptive} schemes are useless, and when, even if adaptation is permitted, it does not significantly increase the capacity region. 

\subsection{Related Work} 
This work builds on: point-to-point two-way channels, one-way multi-user deterministic channels, and one-way multi-user channels with feedback. Little work exists thus far on two-way multi-user channels. 

The capacity region of the general point-to-point discrete memoryless  two-way channel may be written in terms as a limit of multi-letter expressions as in \cite[Section 15]{Shannon:1961}, or \cite[Theorem 4.1]{Kramer:thesis}. Given the complexity in computing this capacity region, it is not entirely satisfying and the capacity region of the two-way channel is  generally considered to be open. The binary multiplier channel (BMC) \cite{Dueck:1979, Schalkwijk:1982, Schalkwijk:1983, Zhang:1986, Hekstra:1989} is a nice example of a deterministic, binary, common output two-way channel where capacity is not exactly known, though its capacity may be expressed in terms of directed information as in \cite[Corollary 4.1]{Kramer:thesis}. However, the capacity regions of particular two-way channels shown in Fig. \ref{fig:intro} are known; in both examples adaptation is useless and the capacity region decomposes into two parallel one-way channels. 
These models were the inspiration for asking whether such examples exist in multi-user two-way networks. 

The first of our three channel models is a two-way MAC/BC channel.   The capacity regions of the linear-deterministic one-way MAC and BC channels were obtained in \cite{bresler_tse}.  An achievable rate region and an outer bound of a similar two-way and adaptive multi-user half-duplex two-way channel is derived in  \cite{Dash:allerton,Dash:asilomar} for Gaussian and discrete memoryless channels (DMC), respectively.  In particular, the achievable rate region derived employs adaptation using Block Markov encoding, and the outer bound contains both auxiliary random variables and messages in its expression and is thus difficult to compute.  
These works differ from our model in that we assume full-duplex operation, have 2 broadcast messages rather than a common one. 
Other than \cite{Dash:allerton,Dash:asilomar}, the two-way MAC/BC has not been considered, and bears most resemblance to a combined MAC channel with feedback and BC channel with feedback (see references in \cite[Ch. 17, Bibliographic Notes]{ElGamalKim:book}, and in particular \cite{willems_FB, elgamal_deg_BC}), though we note that in our two-way model there are no ``free'' feedback links -- any feedback must travel over the same links as the data in the opposite direction, and hence the MAC and BC with feedback results are not directly applicable.

The second channel model we consider is the two-way Z channel, with 6 messages. The one-way Z channel (with 3 messages, rather than the Z Interference channel with 2 messages) was first studied in  \cite{Vishwanath2003Z}, 
 in which a general outer bound, and a matching inner bound for a special class of degraded Z channels are obtained. 
The capacity region of the one-way deterministic Z channel with invertibility constraints similar in flavor to those in \cite{elgamal_det_IC},   is found in \cite{VCadambe2009Z},   which will be of use here.  

The last channel model considered is the two-way linear deterministic IC in which there are 4 messages and 4 terminals forming ICs in the $\rightarrow$ and $\leftarrow$ directions. The capacity region of the one-way modulo 2 adder  IC is known  \cite{ElGamalKim:book} and is a special example of a more general class of deterministic IC for which capacity is known \cite{elgamal_det_IC}, including the one-way linear deterministic IC \cite{bresler_tse}.  
The work here is also related to one-way ICs with perfect output feedback \cite{sahai2009channel, Changho2010}, with rate-limited feedback \cite{Vahid-IC-rate-FB}, with generalized feedback \cite{tuninetti2010outer}, and interfering feedback \cite{sahai2009channel, Suh:ISIT2012}. In all these channel models only two messages are present and the ``feedback'' links, whether perfect, noisy, or interfering still serve only to further rates in the forward direction. The tradeoff between sending new information versus feedback on each of the links is not addressed.  The only other example of such a 4-message two-way interference channel besides our prior work \cite{zcheng_Allerton, zcheng_CISS, zcheng_ISIT, zcheng_Allerton2012} is in Section  VI of \cite{Suh:ISIT2012, Suh2012}, where an example of a linear deterministic scheme is  provided which shows that, at least for one particular asymmetric linear deterministic two-way IC  in weak  interference in the $\rightarrow$ and strong interference in the $\leftarrow$ direction, that adaptation can significantly improve the capacity region over non-interaction. The general capacity region of the linear deterministic two-way IC (with 4 messages)  remains open in general despite the example in \cite{Suh:ISIT2012, Suh2012} and the progress made here. 
One final word on terminology: we will refer to the 4 message two-way IC as the ``two-way IC'' and the 2 message channel of \cite{Suh2012, Suh:ISIT2012} -- considered in all sections but Section VI --  as the ``two-way interference channel with interfering feedback'' to emphasize that the rates are still flowing in one direction only. 
Further comparisons and relationships with ICs with/without feedback \cite{Sahai:2009, Changho2010, Suh2012, Suh:ISIT2012, etkin_tse_wang} will be made in Section \ref{IC} and \ref{ICG}.



\subsection{Outline}
Channel models are first introduced  in Section \ref{model}.  In Section \ref{MACBC} we consider the two-way MAC/BC channel and show that adaptation is useless for three deterministic channel models: 1) the binary modulo 2 adder channel, 2) a generalization of this which we term the ``deterministic, invertible and cardinality constrained'' model, and finally 3) the linear deterministic channel. 
 In Section \ref{Z} we again show that adaptation is useless for the same three deterministic models as in the MAC/BC (but now for the two-way Z channel). 
In Section \ref{IC} we move on to the deterministic IC. For the 1) binary modulo 2 channel we show that adaptation is useless, and show a similar result for its generalization 2) the ``deterministic, invertible and cardinality constrained''  model. For the 3) linear deterministic model we show that adaptation is useless for $2/3<\alpha$ (where $\alpha$ denotes the ratio of cross-over to direct links, as in \cite{etkin_tse_wang}), and show that partial adaptation is useless for the remaining channel conditions. We also obtain the general asymmetric capacity region for the linear deterministic channel model under partial adaptation, which is equal to two parallel ICs in opposite directions.  {In Section \ref{MACBCG} we show that adaptation may only increase the sum-capacity of the two-way Gaussian MAC/BC up to $\frac{1}{2}$ bit per direction. }
Finally, in Section \ref{ICG} we consider the Gaussian two-way IC and show that a non-adaptive scheme achieves within a constant gap (and in one case capacity) of any partially (sometimes fully) adaptive scheme. 
We conclude in Section \ref{conclusion} with some general observations and intuition as to when adaptation is useful, which may be extracted from these examples of two-way multi-user channel models. 

\section{Models, Definitions and Notations}
\label{model}

We consider three multi-user two-way channels, where all nodes act as both transmitters  (encoders) and receivers (decoders), as  shown in Fig. \ref{fig:channels-journal}, and described by:

\smallskip
\noindent $\bullet$ {\it the  two-way MAC/BC channel:} transmitters 1 and 3 send independent messages $M_{12}$ and $M_{32}$ to receiver 2, respectively, forming a MAC in the $\rightarrow$ direction. Transmitter 2 sends independent messages $M_{21}$ and $M_{23}$ to receivers 1 and 3, respectively, forming a BC in the $\leftarrow$ direction.

\smallskip
\noindent $\bullet$ {\it the two-way Z  channel:}  transmitters 1 and 4 send messages $M_{12}$ and $M_{43}$  to receivers 2 and 3 respectively. Transmitters 2 and 3 send messages $(M_{21},M_{23})$ and $(M_{32},M_{34})$ to receivers 1,3 and 2,4 respectively. 

\smallskip
\noindent $\bullet$ {\it  the two-way interference channel:} transmitters 1 and 3 send messages $M_{12}$ and $M_{34}$ to receivers 2 and 4, respectively, forming an IC in the $\rightarrow$ direction. Similarly,  transmitters 2 and 4 send messages $M_{21}$ and $M_{43}$ to receivers 1 and 3 respectively, forming another IC in the $\leftarrow$ direction.

For each of these models, let $M_{jk}$ denote the message from node $j$ to node $k$; all messages are independent and uniformly distributed over $\mathcal{M}_{jk} : =\{1,2,\cdots 2^{nR_{jk}}\}$, where the ranges of $j,k$ depend on the channel model (all subsets of $\{1,2,3,4\}$) and $R_{jk}$ is the rate of transmission from node $j$ to node $k$. For example, in the MAC/BC $R_{12}$ is the rate of message $M_{12}$ but $R_{13}$ and $M_{13}$ do not exist. 

 All channels are assumed to be memoryless and at each channel use, described by the input/output relationships in Fig. \ref{fig:channels-journal}. 
Let $X_j$ and $Y_k$ denote the channel input of node $j$ and output at node $k$ used to describe the model (per channel use). Let $X_{j,i} \; (Y_{j,i})$ denote the channel input (output) at node $j$ at channel use $i$, and $X_j^n : = (X_{1,1}, X_{1,2}, \cdots X_{1,n})$. Let $[x]^+ = \max(0,x)$. 
For the binary modulo 2 adder channels the input and output alphabets are $\{0,1\}$, and $\oplus$ denotes modulo 2 addition.
For the linear deterministic models, the channel inputs and outputs are binary vectors, and all addition will be bit-wise and modulo 2. We furthermore let $S$ denote an $N\times N$ lower shift matrix, where $N = \max(n_{jk})$ over the relevant $j,k$ for each channel model, and $n_{jk}$ defines the number of signal bit levels from transmitter $j$ to receiver $k$.  
We will also consider two other types of channel models: the ``deterministic, invertible and cardinality constrained''  deterministic channel models and the Gaussian two-way MAC/BC and interference channels. We will define those channel models in the appropriate sections. 

A node $j$ is said to employ {\it adaptation} or {\it interaction} if  the channel input at time $i$ is a function of the previously received outputs, 
\begin{equation}X_{j,i}=f_j(M_{j{\cal K}}, Y_j^{i-1}),\label{eq:adapt}\end{equation}
 where $f_j$ ($j\in \{1,2,3,4\}$) are deterministic functions, and $M_{j{\cal K}} := \{M_{jk}| k\in {\cal K}\}$ are the (sets of) messages from node $j$ to all the nodes in ${\cal K} \subset \{1,2,3,4\}$ where ${\cal K}$ depends on the channel model, and may be obtained from Fig. \ref{fig:channels-journal}. 
If a node behaves in a {\it non-adaptive} or {\it restricted} fashion then its inputs are functions of its messages only, i.e. $X_{j,i} = f_j(M_{jk})$. If some nodes adapt while the others do not, we refer to this as {\it partial adaptation}, and will specify which nodes adapt. 
Thus, unless otherwise noted, at each time step $0\leq i \leq n$, for $n$ the blocklength,  encoding functions are specified by the mappings $X_{j,i}(M_{j{\cal K}}, Y_j^{i-1})$. Receiver $k$ uses a decoding function $g_k:\mathcal{Y}_k^n \times \mathcal{M}_{k{\cal I}} \rightarrow\mathcal{\widehat{M}}_{{\cal J}k}$ to obtain estimates of all transmitted messages destined to received $k$, $\mathcal{\widehat{M}}_{{\cal J}k} := \{\widehat{M}_{jk}| j\in {\cal J}, {\cal J}\subset \{1,2,3,4\}\}$ depending on the model,  given knowledge of its own message(s) $M_{k{\cal I}}$ for suitable ${\cal I}\subset\{1,2,3,4\}$, which again depends on model. The capacity region of each channel model is the closure of the set of rate tuples for which there exist encoding and decoding functions (of the appropriate rates) which simultaneously drive the probability that any of the estimated messages is not equal to the true message, to zero as $n\rightarrow \infty$.

 \begin{figure*}
\begin{center}
\includegraphics[width=18cm]{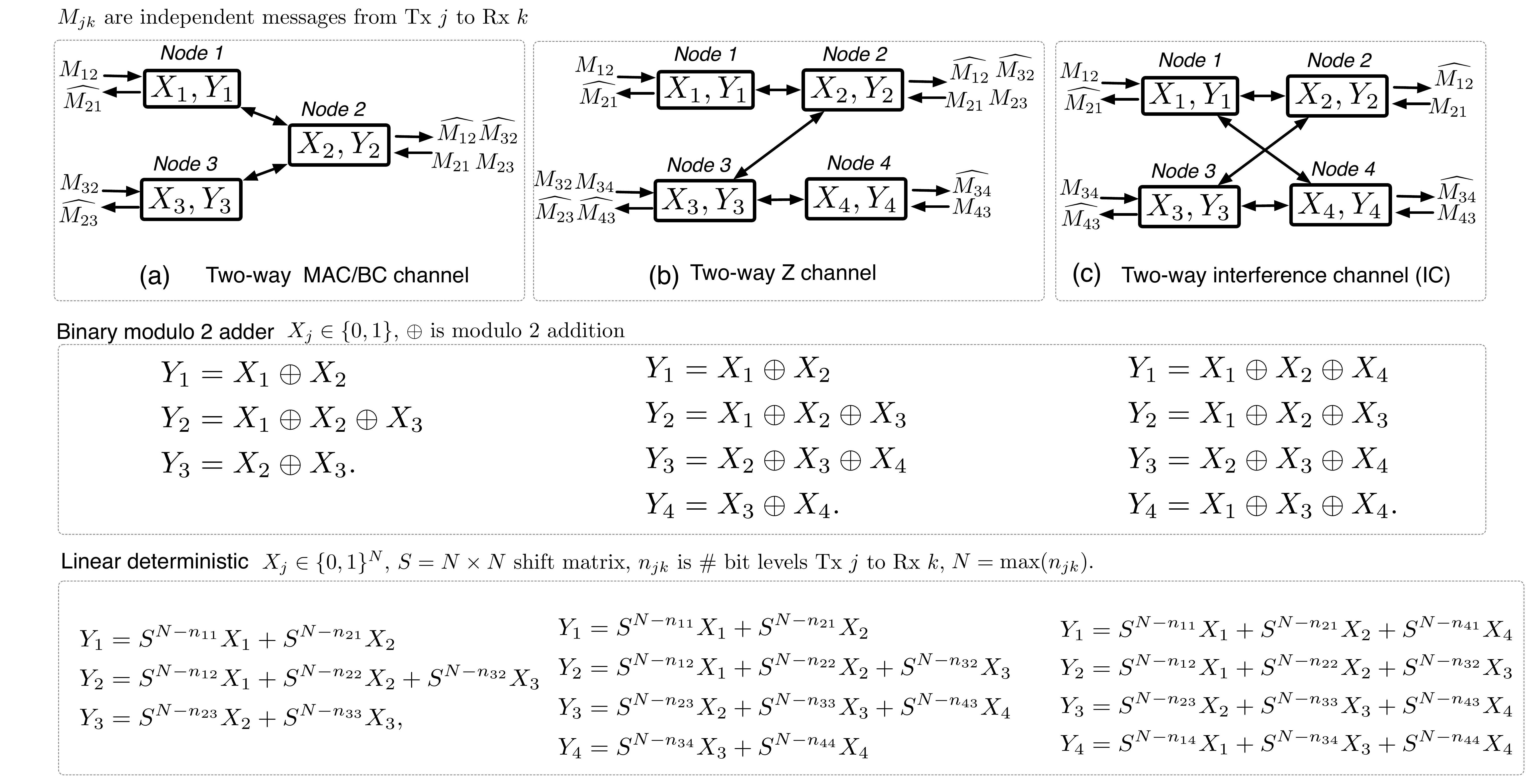}
\caption{Three multi-user two-way channel models and two of the classes under consideration.} 
\label{fig:channels-journal}
\end{center}
\end{figure*}

\section{Two-way MAC/BC}
\label{MACBC}
We first consider the 3 user, full-duplex two-way MAC/BCnetwork as shown in Fig. \ref{fig:channels-journal}(a).
As an introductory example, we first show that adaptation is useless for the modulo 2 adder MAC/BC and a slight generalization thereof, and capacity may be achieved via time-sharing.
Finally, we  consider the linear deterministic two-way MAC/BC and show that, once again, adaptation is useless. 
 
\subsection{An Introductory Example: Modulo 2 Adder MAC/BC}

In the  two-way modulo 2 adder MAC/BC, we emphasize that all three users may employ full adaptation -- i.e. all channel inputs at time $i$ may be a function of previously received channel outputs at that node. There are no additional orthogonal, or free,  ``feedback'' links.  The capacity region may be stated as follows.


\begin{theorem}
\label{thm:3user}
The capacity region of the two-way modulo 2 adder MAC/BC channel is the set of non-negative rate tuples ($R_{12},R_{32},R_{21},R_{23}$) such that
\begin{align}
& R_{12}+R_{32}\leq 1\label{3user1}\\
& R_{23}+R_{21}\leq 1.\label{3user2}
\end{align}
\end{theorem}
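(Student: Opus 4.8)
The plan is to establish the region in two parts: a converse showing that both sum-rate bounds \eqref{3user1}--\eqref{3user2} hold \emph{even when all three nodes are permitted to adapt} via \eqref{eq:adapt}, and a matching inner bound achieved by a purely non-adaptive time-sharing scheme, thereby proving that adaptation is useless. The single structural fact that drives the converse is this: because the channel is deterministic and each input $X_{j,i}$ is a deterministic function of that node's messages and its past received outputs, conditioning on all four messages $M_{12},M_{32},M_{21},M_{23}$ renders the entire transcript $\{X_j^n,Y_j^n\}$ deterministic (by induction on the channel uses starting from $i=1$, where $Y_j^0=\emptyset$). Hence any output entropy conditioned on all messages is zero, which is precisely what lets adaptation be absorbed harmlessly.

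For the forward (MAC) bound I would apply Fano's inequality at node $2$, which decodes $(M_{12},M_{32})$ from $(Y_2^n,M_{21},M_{23})$. Writing $n(R_{12}+R_{32})=H(M_{12},M_{32}\mid M_{21},M_{23})\le I(M_{12},M_{32};Y_2^n\mid M_{21},M_{23})+n\epsilon_n=H(Y_2^n\mid M_{21},M_{23})+n\epsilon_n$, the subtracted term $H(Y_2^n\mid M_{12},M_{32},M_{21},M_{23})$ vanishes by the deterministic fact. Since $Y_2$ is binary, $H(Y_2^n\mid M_{21},M_{23})\le\sum_i H(Y_{2,i})\le n$, giving $R_{12}+R_{32}\le 1$. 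Crucially this never used any independence or Markov property of the (possibly adaptive) inputs; conditioning on messages does all the work.

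For the reverse (broadcast) bound the step is to combine the two separate decoders at nodes $1$ and $3$. I would introduce a genie-aided super-receiver holding $(Y_1^n,Y_3^n,M_{12},M_{32})$; since node $1$ decodes $M_{21}$ from the subset $(Y_1^n,M_{12})$ and node $3$ decodes $M_{23}$ from $(Y_3^n,M_{32})$, this super-receiver decodes both, so Fano gives $H(M_{21},M_{23}\mid Y_1^n,Y_3^n,M_{12},M_{32})\le n\epsilon_n$. Then $n(R_{21}+R_{23})=H(M_{21},M_{23}\mid M_{12},M_{32})\le I(M_{21},M_{23};Y_1^n,Y_3^n\mid M_{12},M_{32})+n\epsilon_n\le H(Y_1^n,Y_3^n)+n\epsilon_n$, again using that the output entropy given all messages vanishes. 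The crux is then $H(Y_1^n,Y_3^n)\le n$, and I expect this to be the main obstacle: two distinct binary output sequences could a priori carry $2n$ bits, so the argument must lean on the broadcast structure, namely that after each of nodes $1,3$ cancels its own known input the only new information either receiver sees is node $2$'s single binary broadcast sequence $X_2^n$, whence $(Y_1^n,Y_3^n)$ carries at most $n$ bits.

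For achievability I would exhibit a scheme whose inputs are functions of messages only. In the $\rightarrow$ direction nodes $1$ and $3$ time-share the binary adder MAC: over a fraction $t$ of the block node $1$ sends an information bit while node $3$ is silent, and over the remainder node $3$ sends while node $1$ is silent, so node $2$ recovers the active bit and any $(R_{12},R_{32})$ with $R_{12}+R_{32}\le 1$ is attained. Simultaneously, in the $\leftarrow$ direction node $2$ time-shares its single broadcast bit, devoting a fraction $s$ of the uses to an information bit for node $1$ and the rest to node $3$, attaining any $(R_{21},R_{23})$ with $R_{21}+R_{23}\le 1$. Because each node subtracts its own input and the two directions produce no harmful cross-interference at the relevant receivers, these two one-way regions operate in parallel with all inputs depending on messages alone; matching this inner bound to the outer bound yields the claimed region and confirms that the adaptation permitted by \eqref{eq:adapt} is unnecessary.
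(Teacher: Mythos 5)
Your MAC-direction converse and your achievability argument are both correct, and they essentially match the paper: the paper proves Theorem \ref{thm:3user} by invoking the cut-set bound for the converse and non-adaptive time-sharing with i.i.d.\ Bernoulli(1/2) codebooks for the inner bound, while your Fano/first-principles derivation of \eqref{3user1} is precisely the alternative route the paper sketches in Remark 1. The problem is the BC-direction converse. After writing $n(R_{21}+R_{23}) \leq I(M_{21},M_{23};Y_1^n,Y_3^n|M_{12},M_{32}) + n\epsilon_n$, you drop the conditioning and reduce everything to the claim $H(Y_1^n,Y_3^n)\leq n$. That claim is not merely the ``main obstacle'' you flag it as --- it is false. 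With $Y_1 = X_1\oplus X_2$ and $Y_3 = X_2\oplus X_3$, taking $X_1,X_2,X_3$ i.i.d.\ Bernoulli(1/2) (which is exactly what the capacity-achieving scheme does) makes $(Y_{1,i},Y_{3,i})$ uniform on $\{0,1\}^2$, so $H(Y_1^n,Y_3^n)=2n$. Your stated intuition --- after nodes 1 and 3 cancel their own inputs, the only new information is $X_2^n$ --- is sound, but the cancellation requires knowing $X_1^n$ and $X_3^n$, which are not constants: they are functions of $(M_{12},Y_1^{i-1})$ and $(M_{32},Y_3^{i-1})$ under adaptation \eqref{eq:adapt}. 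Discarding the conditioning on $(M_{12},M_{32})$ throws away exactly the information that enables the cancellation.

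The repair is to keep the conditioning and expand by the chain rule: $H(Y_1^n,Y_3^n|M_{12},M_{32}) = \sum_{i=1}^n H(Y_{1,i},Y_{3,i}|Y_1^{i-1},Y_3^{i-1},M_{12},M_{32})$. In each term the conditioning determines $X_1^i$ (iterating node 1's encoding functions on $M_{12}$ and $Y_1^{i-1}$ from time 1 onward) and likewise $X_3^i$ from $(M_{32},Y_3^{i-1})$; given these, $(Y_{1,i},Y_{3,i}) = (X_{1,i}\oplus X_{2,i},\, X_{2,i}\oplus X_{3,i})$ is a deterministic function of the single bit $X_{2,i}$, so each term is at most $1$ bit, giving the required bound $n$ and hence \eqref{3user2}. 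This is also the mechanism the paper's cut-set proof supplies automatically: for the cut separating node 2 from nodes 1 and 3, the bound is $R_{21}+R_{23}\leq I(X_2;Y_1,Y_3|X_1,X_3)\leq H(X_2)\leq 1$, with the far-side inputs $X_1,X_3$ appearing in the conditioning --- precisely the conditioning your chain of inequalities discards.
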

\begin{proof}
The outer bound follows from the cut-set bound. The inner bound follows by time-sharing as in Fig. \ref{fig:TS} {\it without adaptation}: $\alpha$ time-shares between channel inputs $X_1$ and $X_3$ for the MAC channel in the $\rightarrow$ direction, while $\beta$ time-shares between the messages $M_{21}$ and $M_{23}$ 
 in the BC in the $\leftarrow$ direction. Both directions ignore the received signals and use i.i.d. Bernoulli(1/2) codebooks.  
\end{proof}



{\begin{remark}
We note that the outer bound may alternatively be derived from either 1) Fano's inequality and first principles, taking into account the ability of the nodes to adapt (we provide one such example for the linear deterministic MAC/BC channel in Theorem \ref{thm:mac/bc} for completeness), or 2)  yet another alternative is to provide Tx 1 and 3 with both $M_{21}, M_{23}$ and perfect channel output feedback $Y_{2,i-1}$  and to provide Tx 2 with channel output feedback $Y_{1,i-1}$ and $Y_{3,i-1}$ at time $i$. Then, one can mimic the outer bound for a class of MACs with FB for 
the $\rightarrow$ direction as derived by Willems \cite{willems_FB} and the physically degraded BC with feedback of \cite{elgamal_deg_BC} for the $\leftarrow$ BC direction (which goes through without a problem for the modulo 2 adder and linear deterministic models). 
Willems' class of channels is one for which (in our notation), at least one of $H(X_1|Y_2,X_3,X_2)$ or $H(X_3|Y_2,X_1,X_2)$ is zero for all input distributions. We note that we provide Tx 1 and Tx 3 with $M_{21}, M_{23}$ in addition to the output feedback in order to be able to construct the inputs $X_{2,i}$, so that $X_2$ would be placed in the conditioning of the bounds of \cite{willems_FB}. 
 Note also that while the capacity of Willems' class of discrete memoryless channels with feedback is expressed in terms of an auxiliary random variable $U$ which is the result of the feedback and its ability to correlate channel inputs. In general, this would result in a larger region than the MAC channel without feedback. However, for our binary modulo 2 channel law, even with conditioning on $X_2$, and the fact that $X_1,X_2,X_3$ may all be correlated, these evaluate to the same region; adaptation is useless. 
\end{remark}
}

%
\begin{remark} The capacity region of Theorem \ref{thm:3user} is the same as that of a modulo 2 adder MAC and a modulo 2 adder BC channel in parallel, which do {\it not} interact. That is, the capacity of a one-way modulo 2 adder MAC is $R_{12}+R_{32}\leq 1$, while that of a one-way modulo 2 adder BC (which is actually just a BC with $Y_1=X_2=Y_3$ is $R_{23}+R_{21}\leq 1$. No adaptation is needed to achieve these regions. In fact, 
we notice that capacity is achieved by time-sharing amongst the data traveling in the same ``direction'' (i.e. between nodes 1 and 3, and between messages $M_{21}$ and $M_{23}$) but not between the two directions themselves. That is, transmission may take place simultaneously between the two directions, as is the case in the point-to-point modulo 2 adder and Gaussian channel models, where no time-sharing is needed between the two transmission directions $\rightarrow$ and $\leftarrow$.
\end{remark}
\subsection{A more general model for deterministic MAC/BC}
Adaptation is useless for the simple modulo 2 adder MAC/BC channel and capacity is achieved using time-sharing in each direction. We ask whether there exists a larger class of channels for which this holds. We answer this positively by considering the ``deterministic, invertible and alphabet restricted'' class of two-way MAC/BCs  with:
\begin{align*}
&Y_1=F_1(X_1,X_2)\\
&Y_2=F_2(X_1,X_2,X_3)\\
&Y_3=F_3(X_2,X_3)
\end{align*}
where $F_m(),m\in \{1,2,3\}$ are deterministic functions which also satisfy
\begin{itemize}
\item {\bf P1:} $|{\cal X}_1| = |{\cal X}_2|= |{\cal X}_3| = |{\cal X}| = |{\cal Y}_1|= |{\cal Y}_2| = |{\cal Y}_3| = |{\cal Y}| = \kappa$ for known $\kappa \in {\mathbb N}^+$. 
\item {\bf P2:} Given $X_1$, $Y_1$ is invertible, i.e. $\exists$ a function $G_1$ s.t. $X_2=G_1(X_1,Y_1)$. Similarly, we assume $\exists G_{21}, G_{23}, G_3$: $X_1 = G_{21}(X_2,X_3,Y_2)$, $X_3 = G_{23}(X_1,X_3,Y_2)$, and $X_2 = G_3(X_3,Y_3)$. These conditions exclude two-way channels such as the binary multiplier channel.
\item {\bf P3:}  $\exists x_3^*$ such that given $X_3 = x_3^*$, $X_1,X_2$ both uniform on their alphabets implies both $Y_1$ and $Y_2$ uniform on their alphabets. Similarly, $\exists x_1^*$ such that given $X_1 = x_1^*$, $X_3,X_2$ both uniform on their alphabets implies $Y_2, Y_3$ uniform on their alphabets. This ensures we can achieve the full $\log(\kappa)$, and is true only for channels with a high degree of symmetry. 
\end{itemize}
Under these conditions (which we only claim are sufficient and not necessarily necessary), the capacity region of the deterministic MAC/BC is given in the following Theorem.
\begin{theorem}
The capacity region of the two-way ``deterministic, invertible and alphabet restricted''  MAC/BC satisfying the conditions {\bf P1}, {\bf P2}, {\bf P3} is the set of non-negative $(R_{12}, R_{32}, R_{21}, R_{43})$ satisfying: 
\begin{align}
&R_{12}+R_{32}\leq \log\kappa \label{general1}\\
&R_{21}+R_{23}\leq \log\kappa, \label{general2}
\end{align}
which may be achieved via time-sharing (in each direction). 
\label{thm:generalMABC}
\end{theorem}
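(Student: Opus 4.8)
The plan is to mirror the proof of Theorem~\ref{thm:3user}: a non-adaptive time-sharing inner bound that runs both directions simultaneously, matched by an outer bound that explicitly permits every node to adapt through \eqref{eq:adapt} but which collapses because the channel is deterministic and invertible. The guiding intuition is that condition {\bf P2} lets each node strip its own contribution from its received signal, so the forward MAC and backward BC never actually interfere and adaptation buys nothing.

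For achievability I would use i.i.d.\ uniform codebooks with no node consulting its received signal. In the $\rightarrow$ (MAC) direction, nodes $1$ and $3$ time-share: on a $\lambda$-fraction of the block node $1$ sends a uniform $X_1$ while node $3$ freezes its input to $x_3^*$, and on the remaining fraction the roles reverse with node $1$ frozen at $x_1^*$. Since node $2$ knows its own input $X_2$ and the frozen symbol, {\bf P2} (via $G_{21}$, resp.\ $G_{23}$) lets it invert $Y_2$ to recover the active transmitter's symbol exactly, while {\bf P3} guarantees the uniform input induces a uniform $Y_2$ so the full $\log\kappa$ is attained; varying $\lambda$ sweeps out \eqref{general1}. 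In the $\leftarrow$ (BC) direction, node $2$ time-shares its own codeword between $M_{21}$ and $M_{23}$; receiver $1$ recovers $X_2$ from $Y_1$ through $G_1$ using its own $X_1$, and receiver $3$ recovers $X_2$ through $G_3$, giving \eqref{general2}. Because of the invertibility, these two schemes coexist without interference and require no adaptation.

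For the outer bound I would argue from Fano's inequality while allowing all nodes to adapt. For \eqref{general1}, condition the decoder at node $2$ on its own messages $M_{21},M_{23}$, which it genuinely holds, so that $n(R_{12}+R_{32}) \le I(M_{12},M_{32};Y_2^n\mid M_{21},M_{23}) + n\epsilon_n = H(Y_2^n\mid M_{21},M_{23}) - H(Y_2^n\mid M_{12},M_{32},M_{21},M_{23}) + n\epsilon_n$. The subtracted term is zero: once all four messages are fixed, the deterministic channel together with the adaptive maps $f_1,f_2,f_3$ determines every input and output recursively in time, hence $H(Y_2^n\mid M_{12},M_{32},M_{21},M_{23})=0$. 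Therefore $n(R_{12}+R_{32}) \le H(Y_2^n) + n\epsilon_n \le n\log\kappa + n\epsilon_n$, and letting $n\to\infty$ gives \eqref{general1}.

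For \eqref{general2} I would give a single genie receiver the pair $(Y_1^n,Y_3^n)$ together with $M_{12},M_{32}$ (each output is available at its own node and the cross terms only loosen the bound). The two Fano steps yield $H(M_{21},M_{23}\mid Y_1^n,Y_3^n,M_{12},M_{32}) \le 2n\epsilon_n$, so $n(R_{21}+R_{23}) \le I(M_{21},M_{23};Y_1^n,Y_3^n\mid M_{12},M_{32}) + 2n\epsilon_n = H(Y_1^n,Y_3^n\mid M_{12},M_{32}) + 2n\epsilon_n$, again using that fixing all four messages makes the outputs deterministic. The naive bound $H(Y_1^n,Y_3^n\mid \cdot)\le 2n\log\kappa$ is too weak, so the crux is to prove $H(Y_1^n,Y_3^n\mid M_{12},M_{32}) = H(X_2^n\mid M_{12},M_{32}) \le n\log\kappa$. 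This is where {\bf P2} does the real work and where I expect the main difficulty, since the adaptation of nodes $1$ and $3$ must be threaded through a causal induction: given $M_{12}$ and $X_2^n$ one forward-simulates $X_{1,i}=f_1(M_{12},Y_1^{i-1})$ and $Y_{1,i}=F_1(X_{1,i},X_{2,i})$ to reconstruct $Y_1^n$, while given $M_{12}$ and $Y_1^n$ one runs the same recursion but inverts each step via $X_{2,i}=G_1(X_{1,i},Y_{1,i})$ to recover $X_2^n$ (symmetrically for $Y_3^n$ via $G_3$). Thus, conditioned on $(M_{12},M_{32})$, the pair $(Y_1^n,Y_3^n)$ and $X_2^n$ are deterministic functions of one another, which gives the entropy equality and, since $|\mathcal{X}_2|=\kappa$, closes the bound.
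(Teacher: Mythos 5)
Your proposal is correct, and it reaches the same conclusion by a route the paper names but never executes. The achievability half is identical to the paper's: the paper time-shares between nodes $1$ and $3$ in the $\rightarrow$ direction and between messages $M_{21}$ and $M_{23}$ in the $\leftarrow$ direction (Fig.~\ref{fig:TS}), using {\bf P2} to invert at the receivers and {\bf P3} to guarantee the full rate $\log\kappa$, exactly as you do. For the converse, however, the paper simply invokes the cut-set bound (remarking that {\bf P1} caps any coherent-gain effect and that {\bf P2} makes Willems-type bounds applicable), whereas you carry out the Fano-based derivation that the paper only mentions as an alternative. Your MAC-direction step --- determinism of all inputs and outputs given the four messages, so $H(Y_2^n\mid \text{all messages})=0$ and the bound reduces to $H(Y_2^n)\le n\log\kappa$ via {\bf P1} --- is sound, and your BC-direction argument is the genuinely novel piece: the causal two-way reconstruction showing that, conditioned on $(M_{12},M_{32})$, the pair $(Y_1^n,Y_3^n)$ and $X_2^n$ determine each other (forward-simulating $f_1,f_3,F_1,F_3$ in one direction, inverting through $G_1,G_3$ in the other), which yields $H(Y_1^n,Y_3^n\mid M_{12},M_{32})=H(X_2^n\mid M_{12},M_{32})\le n\log\kappa$. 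This is precisely the multi-letter analogue of what the single-letter cut-set evaluation $I(X_2;Y_1,Y_3\mid X_1,X_3)=H(Y_1,Y_3\mid X_1,X_3)\le\log\kappa$ does (there too, {\bf P2} kills the $H(Y_3\mid Y_1,X_1,X_3)$ term). What each approach buys: the paper's citation of the cut-set theorem is shorter and automatically accommodates adaptation; your derivation is self-contained, makes explicit exactly where {\bf P1} and {\bf P2} enter the converse, and spells out the BC-direction bound that the paper leaves implicit everywhere (its explicit Fano arguments, e.g.\ in Theorem~\ref{thm:mac/bc}, treat only the MAC direction).
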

\begin{proof}
The outer bound follows directly from the cut set, or may be directly derived from Fano's inequality or as a result of MAC and BC channels with feedback, as in Remark 1. 
{The restriction on the alphabet sizes condition {\bf P1} prohibits ``coherent gain'' - like phenomena in the outer bound, where correlation between user inputs may be beneficial, as in for example the Gaussian MAC channel with feedback.  
Essentially, this guarantees that the outer bound of Willems for the MAC with feedback \cite{willems_FB} results in the same region as the MAC without feedback. Condition ${\bf P2}$ also guarantees that Willems' outer bound is applicable, though for the outer bound again, the cut-set bound is sufficient. }

Our achievability scheme consists of time-sharing between user 1 and user 3 in the $\rightarrow$ (MAC) direction, while simultaneously time-sharing between sending data to user 1 and 3 in the $\leftarrow$ (BC) direction, as in 
%
%
 Fig. \ref{fig:TS}. There, we see two time-sharing coefficients $0\leq \alpha, \beta\leq 1$, where $\alpha$ time-shares in the $\rightarrow$ direction and $\beta$ in the $\leftarrow$ direction. Let us consider the rates achieved in time slot (1), of duration $\alpha$ (WLOG we have taken $\alpha<\beta$). Node 1 encodes $M_{12}$ into $X_1$ uniformly distributed over the $\log(\kappa)$ input symbols; node 2 encodes $M_{21}$ into $X_2$ uniformly distributed over $\log(\kappa)$ input symbols and node 3 fixes $X_3=x_3^*$ (rate 0). We claim this scheme achieves the rates $R_{12} = R_{21} =  \alpha \log(\kappa)$, $R_{23} = R_{32} = 0$. Consider $R_{12}$: node 2 receives $Y_2=F_2(X_1,X_2,x_3^*)$. Since node 2 knows $X_2$ and knows that $X_3=x_3^*$,  by {\bf P2},  it may construct $X_1 = G_{21}(X_2,x_3^*,Y_2)$ to decode $M_{12}$. By {\bf P3} this may be done at full rate $\alpha \log(\kappa)$. Similar arguments for time slots (2) and (3) demonstrate that the rates in \eqref{general1}-\eqref{general2} are achievable.

\begin{figure}[h]
\begin{center}
\includegraphics[width=3in]{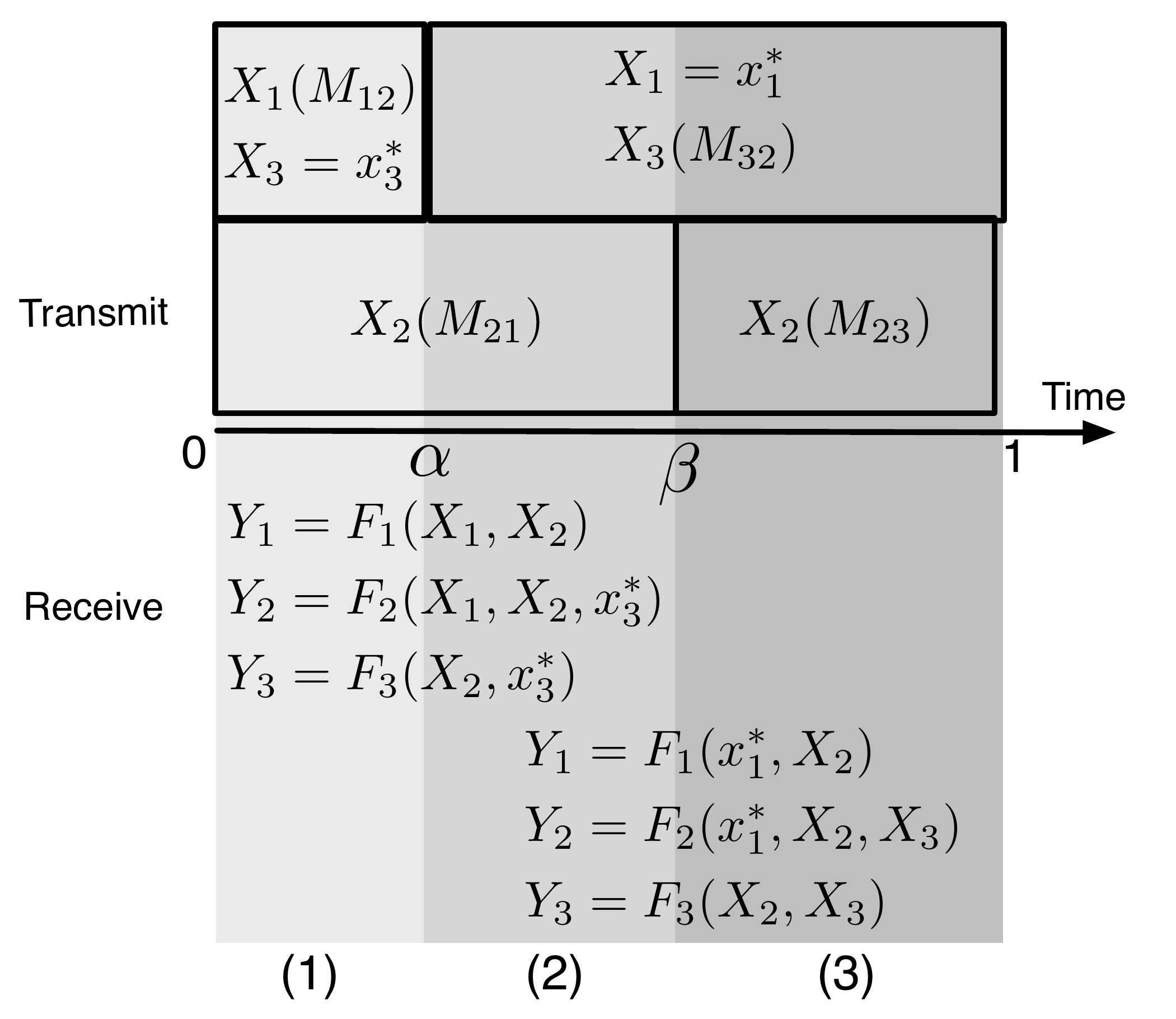}
\caption{Time-sharing based achievability for the proof of Theorem \ref{thm:generalMABC}.} 
\label{fig:TS}
\end{center}
\end{figure}

\end{proof}

One example, besides the binary modulo 2 adder channel, is the channel with input alphabets $\{0,1,\cdots \kappa-1\}$ for some $\kappa$ and $Y_1 = X_1+X_2 \mod \kappa$, $Y_2 = X_1+X_2+X_3 \mod \kappa$, and $Y_3 = X_2+X_3\mod \kappa$. 

\begin{remark}
We comment on restricting the cardinality of the input (which we stress, may not be necessary). This restriction was brought about by simply considering the two-way MAC/BC 
 binary adder channel (not modulo), with inputs ${\cal X}_1 = {\cal X}_3=\{0,1\} $ and outputs $Y_2= X_1+X_3$ with alphabet $\{0,1,2\}$ in the MAC direction. 
In this channel model, it is easy to derive inner and  outer bounds both of the form $R_{12}+R_{32}\leq H(X_1+X_3)$.  In general, one would hope, like Shannon did for the point-to-point two-way channel \cite{Shannon:1961}, to derive multi-user inner and outer bounds of the same {\it form}. However, even if one is able to do so (which may be too much to hope for in general, but may be reasonable for certain classes of deterministic models), we are left with the distributions over which these bounds are taken.  That is, back to our example, to claim that adaptation is useless, the inner bound should be taken over independent input distributions $p(x_1)p(x_3)$, while the outer bound, in general allowing for adaptation,  is taken over joint distributions $p(x_1,x_3)$ (unless one restricts the set of input distributions perhaps via dependence-balance-bound-like techniques \cite{Hekstra:1989}, an open problem). Inner and outer bounds taken over these different sets of distributions do {\it not} match for the binary adder channel with ternary output. As such, we restricted the channels to those for which a form of cooperation (or adaptation) between users cannot possibly help -- which is the case for the modulo adder channels, and as we will see, the similar, in terms of properties, linear deterministic channels. 


%
\end{remark}

\subsection{Linear Deterministic MAC/BC}
The two-way linear deterministic MAC/BC channel is defined by the input/output equations as in Fig. \ref{fig:channels-journal}(a).
We recall that all nodes are permitted to adapt, so that at channel use $i$, 
$ X_{1,i}=f_1(M_{12},Y_1^{i-1})$, $X_{2,i}=f_2(M_{21},M_{23}, Y_2^{i-1})$, 
and $ X_{3,i}=f_3(M_{32},Y_3^{i-1})$.
The capacity region may be stated as follows:

\begin{theorem}
\label{thm:mac/bc}
The capacity region of the two-way linear deterministic MAC/BC is the set of non-negative rate tuples $(R_{12},R_{32},R_{21},R_{23})$ such that
\begin{equation}  \mbox{MAC } \rightarrow \left\{ \begin{array}{l}
 R_{12}\leq n_{12}, \;\; R_{32}\leq n_{32},\\
 R_{12}+R_{32}\leq \max (n_{12},n_{32})
 \end{array}\right. \label{1}
\end{equation}
\begin{equation}  \mbox{BC } \leftarrow \left\{ \begin{array}{l}
 R_{21}\leq n_{21}, \;\;  R_{23}\leq n_{23}\\
R_{21}+R_{23}\leq \max (n_{21},n_{23}).
 \end{array}\right. \label{2}
\end{equation}
\end{theorem}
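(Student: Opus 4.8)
The plan is to establish the region \eqref{1}--\eqref{2} as the Cartesian product of the one-way linear deterministic MAC region (in the $\rightarrow$ direction) and the one-way linear deterministic BC region (in the $\leftarrow$ direction), thereby showing that adaptation is useless. The inner bound requires no new ideas; essentially all of the difficulty lies in producing a converse that remains valid when every node's input adapts to its own past outputs, as in \eqref{eq:adapt}.

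For achievability I would run the two known non-adaptive one-way schemes of \cite{bresler_tse} simultaneously and argue they do not interfere. In the $\rightarrow$ direction nodes $1$ and $3$ use the capacity-achieving LD MAC code and node $2$ decodes from $Y_2^n$; because node $2$ generated $X_2^n$ itself it subtracts its own contribution before decoding, so it sees a clean one-way LD MAC and the bounds in \eqref{1} are met. Dually, in the $\leftarrow$ direction node $2$ uses the degraded-BC superposition code and receivers $1$ and $3$ each cancel their own inputs $X_1^n,X_3^n$ (which they generated) before decoding, recovering a clean one-way LD BC meeting \eqref{2}. Since each receiver's self-interference is perfectly known and the channel is deterministic and linear, the two directions coexist with no rate loss and no adaptation is used.

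For the MAC converse I would use Fano's inequality directly, accounting for adaptation as in approach $2)$ of Remark~1. Starting from $n(R_{12}+R_{32})=H(M_{12},M_{32})$ and using that node $2$ decodes both messages, Fano gives $n(R_{12}+R_{32})\le I(M_{12},M_{32};Y_2^n\mid M_{21},M_{23})+n\epsilon_n$. Because every input is a deterministic function of the messages and past outputs, conditioning on all four messages makes $Y_2^n$ deterministic, so the mutual information collapses to $H(Y_2^n\mid M_{21},M_{23})\le\sum_i H(Y_{2,i}\mid Y_2^{i-1},M_{21},M_{23})$. The crucial adaptation-handling step is that, given $M_{21},M_{23},Y_2^{i-1}$, node $2$ can reconstruct its own adaptive input $X_{2,i}=f_2(M_{21},M_{23},Y_2^{i-1})$ for free; conditioning further on $X_{2,i}$ and removing its known contribution leaves $H(Y_{2,i}\mid X_{2,i})$, which in the linear deterministic model occupies at most $\max(n_{12},n_{32})$ active bit levels, yielding the sum bound in \eqref{1}. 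The individual bounds $R_{12}\le n_{12}$ and $R_{32}\le n_{32}$ then follow by additionally supplying the other transmitter's sequence as a genie, so that only the $n_{12}$ (resp.\ $n_{32}$) levels carrying $X_{1,i}$ (resp.\ $X_{3,i}$) remain.

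The main obstacle is the BC converse, precisely because the adaptive input $X_{2,i}$ is correlated with the opposite-direction messages $M_{12},M_{32}$ through the feedback $Y_2^{i-1}$ (which depends on $X_1,X_3$), so the $\leftarrow$ direction is not a clean feedback-free BC. I would resolve this by conditioning on $M_{12},M_{32}$ throughout and, as in approach $3)$ of Remark~1, handing these messages to node $2$ so that $X_1^n,X_3^n$ become known side information rather than an unknown coupling; the $\leftarrow$ channel then reduces to a physically degraded LD BC with output feedback at the transmitter, for which the feedback-does-not-enlarge-capacity argument of \cite{elgamal_deg_BC} applies. Invoking degradedness of the LD BC (WLOG $n_{21}\ge n_{23}$, so after self-cancellation receiver $3$'s observed levels are a subset of receiver $1$'s) then gives $R_{21}+R_{23}\le\max(n_{21},n_{23})$ together with the individual bounds in \eqref{2}. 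Verifying that conditioning on the opposite-direction messages genuinely decouples the feedback, rather than merely displacing the dependence, is the delicate point I expect to spend the most care on.
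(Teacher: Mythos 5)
Your proposal is correct, and its achievability argument and MAC-direction converse coincide with the paper's proof: the paper likewise runs the two non-adaptive one-way schemes with self-interference cancellation, and proves the $\rightarrow$ bounds by Fano plus the observation that $X_2^i$ (and then $X_3^i$, iteratively from $M_{32}$ and $X_2^{i-1}$) can be reconstructed from the conditioning $\bigl(Y_2^{i-1},M_{21},M_{23},M_{32}\bigr)$, leaving $H\bigl(S^{N-n_{12}}X_{1,i}+S^{N-n_{32}}X_{3,i}\bigr)\le \max(n_{12},n_{32})$. Where you genuinely diverge is the BC direction, which you single out as the main obstacle and attack by enhancement: give node 2 the messages $M_{12},M_{32}$ and both receivers' outputs, render the channel physically degraded, and invoke the feedback converse of \cite{elgamal_deg_BC}. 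The paper instead handles the $\leftarrow$ direction ``analogously'' to the $\rightarrow$ direction, i.e.\ by the very same direct Fano computation: starting from $I(M_{21},M_{23};Y_1^n,Y_3^n\mid M_{12},M_{32})$, each receiver's input $X_1^i$ (resp.\ $X_3^i$) is a deterministic function of its own conditioning, and after cancelling them both per-letter entropies collapse to shifts of the \emph{same} vector, $H\bigl(S^{N-n_{21}}X_{2,i},\,S^{N-n_{23}}X_{2,i}\bigr)\le\max(n_{21},n_{23})$, so no degraded-BC-with-feedback machinery is needed; the degradedness you invoke is already built into this joint-entropy step. Your route is nevertheless sound --- it is exactly alternative 2) of Remark 1 (you call it approach 3), and your worry about whether conditioning on $M_{12},M_{32}$ truly decouples the adaptation is resolved the way the paper resolves it for the Gaussian MAC/BC in Theorem~\ref{thm:MACBCG}: with $U_i:=(M_{23},Y_1^{i-1},Y_3^{i-1})$ augmented by the opposite-direction messages, $X_{2,i}$ remains a deterministic function of $(U_i,M_{21},\ldots)$ and the factorization $p(u)p(x_2\mid u)$ survives, so \cite[Lemma 3]{elgamal_deg_BC} goes through. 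What your route buys is generality --- it is the argument that extends to noisy (Gaussian) models, where the paper itself must use it; what the paper's route buys is economy, since for the deterministic model both directions fall to one two-line entropy computation.
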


\begin{proof}
Achievability may be argued via \cite{Avestimehr2009} by mimicking a one-way MAC and one-way BC channel in opposite directions and noting that each user may subtract off its own transmitted signal from its received signal. 
The outer bounds may be obtained by the cut-set, or via an alternative direct proof. We include an example of this alternative (to the cut-set) outer bound proof below, out of interest and to demonstrate how adaptation may be taken into account.
\begin{align*}
& n(R_{12}-\epsilon) \leq I(M_{12};Y_2^n|M_{21},M_{23},M_{32})\\
& \overset{(a)}{\leq} \sum_{i=1}^n [H(Y_{2,i}|Y_2^{i-1},M_{21},M_{23},M_{32},X_2^i)]\\
& \overset{(b)}{=} \sum_{i=1}^n [H(Y_{2,i}|Y_2^{i-1},M_{21},M_{23},M_{32},X_2^i,X_3^i)]\\
& \leq \sum_{i=1}^n [H(S^{N-n_{12}}X_{1,i})]\leq n \, (n_{12}),
\end{align*}
where (a) follows since given $(Y_2^{i-2},M_{21},M_{23})$, we may construct $X_2^i$, which cancels out the ``self-interference'' term $X_{2,i}$ in $Y_{2,i}$. We note that the self-interference term can be always cancelled out in this way in the converse of additive models. Step (b) follows from the fact that given $M_{32}, X_2^i$, we may construct $X_3^i$. The other single rate bounds follow similarly. 
\begin{align*}
&n(R_{12}+R_{32}-\epsilon) \leq I(M_{12},M_{32};Y_2^n|M_{21},M_{23})\\
& \leq \sum_{i=1}^n [H(Y_{2,i}|Y_2^{i-1},M_{21},M_{23},X_2^i)]\\
& \leq \sum_{i=1}^n [H(S^{N-n_{12}}X_{1,i}+S^{N-n_{32}}X_{3,i})]\leq n (\max (n_{12},n_{32})).
\end{align*}
We may analogously obtain the other sum-rate bound.
\end{proof}

\begin{remark}
Without adaptation, the channel would correspond to a MAC channel simultaneously transmitting with a BC channel with restricted nodes. 
This coincides with our outer bound with adaptation, which may furthermore be achieved in one channel use: adaptation is useless, and  the capacity region is a four dimensional region that is equivalent to the capacity region of the linear deterministic MAC and the linear deterministic BC in opposite directions.
\end{remark}

\section{Two-way Z Channel}
\label{Z}
We now consider the 4 user, full-duplex network as shown in Fig. \ref{fig:channels-journal}(b). The 6 message network, resembles a cascade of three two-way channels, in the shape of a Z (in each direction). Again, we first introduce the modulo 2 adder model and its generalization, and then the linear deterministic model, showing that adaptation is useless through the re-derivation of adaptive outer bounds. 
 \subsection{An Introductory Example: Modulo 2 Adder Two-way Z Channel}
 
The two-way modulo 2 adder Z channel is discrete and memoryless, and all four users may employ full adaptation.
The capacity region of this channel is stated as follows: 
\begin{theorem}
\label{mod2thm:z}
The capacity region of the two-way modulo 2 adder Z  channel is the set of non-negative rate tuples ($R_{12},R_{21},R_{23},R_{32},R_{34},R_{43}$) such that
\begin{align}
& R_{12}+R_{32}+R_{34}\leq 1\label{z1}\\
& R_{21}+R_{23}+R_{43}\leq 1\label{z2}
\end{align}
\end{theorem}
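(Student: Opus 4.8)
The plan is to establish the two sum-rate constraints as a converse and to match them with a non-adaptive scheme. I read the modulo 2 adder Z channel law off Fig.~\ref{fig:channels-journal}(b) as $Y_1=X_2$, $Y_2=X_1\oplus X_3$, $Y_3=X_2\oplus X_4$, $Y_4=X_3$. The two bounds \eqref{z1} and \eqref{z2} are exchanged by the relabeling $1\leftrightarrow 4$, $2\leftrightarrow 3$ that swaps the $\rightarrow$ and $\leftarrow$ directions, so I would prove \eqref{z1} in full and obtain \eqref{z2} by symmetry. Write $W_\leftarrow:=(M_{21},M_{23},M_{43})$ for the backward messages.

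For the inner bound I would note that the channel law feeds the forward inputs $X_1,X_3$ only into the forward outputs $Y_2,Y_4$ and the backward inputs $X_2,X_4$ only into $Y_1,Y_3$; the two directions couple only if an encoder lets a forward input depend on a past backward output, and achievability simply never uses this. Each direction then runs the capacity-achieving time-sharing scheme of the corresponding one-way modulo 2 adder Z channel, simultaneously and with i.i.d.\ Bernoulli$(1/2)$ inputs that ignore the received signals. In the $\rightarrow$ direction I time-share the block among $M_{12}$ (node~3 sets $X_3=0$, node~2 decodes from $Y_2=X_1$), $M_{32}$ (node~1 sets $X_1=0$, node~2 decodes from $Y_2=X_3$), and $M_{34}$ (node~1 sets $X_1=0$, node~4 decodes from $Y_4=X_3$), which sweeps out all $(R_{12},R_{32},R_{34})$ with sum at most $1$; the $\leftarrow$ direction is symmetric.

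For the converse I would combine Fano's inequality at the two forward decoders, each conditioned on $W_\leftarrow$ (node~2 knows $M_{21},M_{23}$ and node~4 knows $M_{43}$, all subsets of $W_\leftarrow$), and move the pair $M_{12},M_{32}$ into the conditioning of the node-4 term using message independence, to get
\[ n(R_{12}+R_{32}+R_{34}) \le I(M_{12},M_{32};Y_2^n\mid W_\leftarrow)+I(M_{34};Y_4^n\mid M_{12},M_{32},W_\leftarrow)+n\epsilon_n. \]
Expanding both terms and using $Y_4=X_3$ together with $H(X_3^n\mid\text{all six messages})=0$ (the deterministic channel and deterministic encoders make the whole transcript a function of all messages, by induction on time) reduces the right side to $H(Y_2^n\mid W_\leftarrow)-H(Y_2^n\mid M_{12},M_{32},W_\leftarrow)+H(X_3^n\mid M_{12},M_{32},W_\leftarrow)$. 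The key step is then a reconstruction claim: given $(M_{12},M_{21},M_{23})$ the sequences $X_3^n$ and $Y_2^n$ determine each other, so the last two entropies are equal and cancel, leaving $H(Y_2^n\mid W_\leftarrow)\le n$ and hence \eqref{z1}.

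The bijection is where the adaptation must be handled, and I expect it to be the main obstacle. Without adaptation $X_1^n$ and $X_3^n$ are functions of the forward messages alone and the cancellation is immediate; with adaptation $X_{1,i}=f_1(M_{12},Y_1^{i-1})$ depends on $Y_1=X_2$ and $X_{2,i}=f_2(M_{21},M_{23},Y_2^{i-1})$ depends on $Y_2=X_1\oplus X_3$, so the node-1 and node-2 encoders form a coupled recursion driven only by the exogenous sequence $X_3^n$. Given $(M_{12},M_{21},M_{23})$ and $X_3^n$ I would unwind this recursion inductively in $i$ to regenerate $X_1^n,X_2^n$ and hence $Y_2^n=X_1^n\oplus X_3^n$; conversely, given $(M_{12},M_{21},M_{23})$ and $Y_2^n$ I would regenerate $X_2^n$ (hence $Y_1^n$, hence $X_1^n$) and recover $X_3^n=Y_2^n\oplus X_1^n$. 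Verifying that this unwinding is well defined, and choosing the genie so that the two $H(X_3^n\mid\cdots)$ terms are literally identical, is the crux; the self-interference-cancellation manoeuvre used for the linear deterministic MAC/BC in Theorem~\ref{thm:mac/bc} is the template I would follow.
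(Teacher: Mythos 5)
Your proposal is correct, and its converse takes a genuinely different route from the paper's, although both hinge on the same reconstruction idea. The paper splits the sum rate by \emph{encoder}: it starts from $n(R_{12}+R_{32}+R_{34}-\epsilon)\leq I(M_{12};Y_2^n|M_{21},M_{23},M_{43})+I(M_{32},M_{34};Y_4^n,Y_2^n|M_{43},M_{12},M_{21},M_{23})$, i.e.\ it groups node 3's two messages together and hands that term the asymmetric \emph{output} genie $Y_2^n$ (cf.\ Remark \ref{remark:Ztype}); after dropping negative terms and cancelling $\pm H(Y_2^n|M_{12},M_{21},M_{23},M_{43})$, the whole argument reduces to showing $H(Y_4^n|Y_2^n,M_{12},M_{21},M_{23},M_{43})=0$, which is done by reconstructing $X_2^n$ from $(M_{21},M_{23},Y_2^{n-1})$, then $X_1^n$ from $(M_{12},X_2^n)$ (the Markov chain of Fig.~\ref{fig:zchain1}), then $X_3^n$ from $Y_2^n$, and finally $Y_4^n$ using $M_{43}$. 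You instead split by \emph{decoder} --- $(M_{12},M_{32})$ at node 2 and $M_{34}$ at node 4 --- use only message genies (no output genie at all), and close with the exact identity $H(Y_2^n|M_{12},M_{32},W_\leftarrow)=H(X_3^n|M_{12},M_{32},W_\leftarrow)$, valid because given $(M_{12},M_{21},M_{23})$ the coupled recursion of encoders 1 and 2, driven by the exogenous sequence $X_3^n$, puts $Y_2^n$ and $X_3^n$ in bijection. The ``backward'' direction of your bijection is exactly the paper's reconstruction chain; the ``forward'' direction is the extra piece your route needs, and it is a valid induction on time. What each buys: yours is cleaner and more symmetric (a pure entropy cancellation, no discarded negative terms, no output genie); the paper's one-directional reconstruction-plus-genie template is the one that survives for the linear deterministic Z channel (Theorem~\ref{thm:z}), where $Y_2^n$ only reveals the top $n_{32}$ bits of $X_3^n$, exact cancellation fails, and the residual $[n_{34}-n_{32}]^+$ term appears.

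One small caveat: the paper's modulo-2 Z channel retains self-interference (its proof steps use $Y_2^n=X_1^n\oplus X_2^n\oplus X_3^n$ and $Y_{4,i}=X_{3,i}\oplus X_{4,i}$), whereas you read the post-cancellation law off the figure. This is immaterial here: each node can reconstruct its own input from its messages and past outputs, so both directions of your bijection go through with the extra terms, and the identification $H(Y_4^n|M_{12},M_{32},W_\leftarrow)=H(X_3^n|M_{12},M_{32},W_\leftarrow)$ still holds, now using $M_{43}\in W_\leftarrow$ to strip $X_4^n$ from $Y_4^n$.
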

The proof is found in Appendix \ref{mod2Z} and is not a direct consequence of the cut-set outer bound. 
\begin{remark}
\label{remark:Ztype}
We note that the proof of the sum-rate outer bound of the $Z$ channel in Theorems \ref{mod2thm:z}, \ref{thm:z}, and the sum-rate bounds of the two-way IC in Theorems \ref{mod2thm:intf}, \ref{thm:ICG}, \ref{thm:intf}, \ref{thm:outer-full} all follow the same general idea of giving an asymmetric genie to one receiver, as initially done in \cite{elgamal_det_IC} for the one-way IC, and quite similar to the $Z$ channel outer bound in \cite{VCadambe2009Z},  and in particular \cite{sahai2009channel, Changho2010} for the one-way IC with feedback (IC with FB). That is, in the $\rightarrow$ direction, we provide one of the receivers with the message of the non-desired message in the $\rightarrow$ direction (as in the IC with FB) as well as all messages of the $\leftarrow$ direction (particular to the two-way channels, as no $\leftarrow$ messages in one-way channels), and the desired signal received at the other receiver of the $\rightarrow$ direction (similar to the genies given in the IC with FB). The additional messages and receiver output (relative to one-way models) are needed to create various inputs, as may be done with less side-information in one-way models. \end{remark}

\begin{remark} We again notice that since time-sharing achieves the above region, adaptation does not enlarge the capacity region.  We again see that the messages in the $\rightarrow$ and the $\leftarrow$ directions may be simultaneously communicated, but that the messages within one direction must be time-shared. 
\end{remark}
%

\subsection{A More General Model for Two-way Z Channel}
Similar to the more general ``deterministic, invertible and restricted'' class of two-way MAC/BC channels where it was shown that non-adaptive time-sharing achieves capacity, we may extend the two-way Z modulo 2 adder model to a more general class of two-way Z channels.  The converse follows along similar lines as for the modulo 2 adder channel. In terms of achieving the outer bounds  $R_{12}+R_{32}+R_{34}\leq \log\kappa$ and $R_{21}+R_{23}+R_{43}\leq \log \kappa$, one sufficient condition involves restricting the input and output alphabet sizes to be equal (eliminating some of the potential benefits of adaptation via user cooperation), as well as several symmetry constraints akin to extensions of {\bf P2} and {\bf P3}. Again, one example of such a channel model is the modulo $\kappa$ channel. We omit the full statement as it follows in a straightforward and analogous fashion to Theorems \ref{thm:generalMABC} and \ref{mod2thm:z}.

\subsection{Linear Deterministic Two-way Z Channel}

The linear deterministic two-way Z channel is defined by the input / output equations in Fig. \ref{fig:channels-journal}(b).
The capacity region is again that of two parallel Z channels in opposite directions; adaptation is useless.
\begin{theorem} 
\label{thm:z}
The capacity region of the two-way linear deterministic Z channel is the set of all rate-tuples ($R_{12},R_{21},R_{23},R_{32},R_{34},R_{43}$) which satisfy the following:

\begin{equation*}  \mbox{Z } \rightarrow \left\{ \begin{array}{l}
 R_{12}\leq n_{12},  \;\; R_{32}\leq n_{32}, \;\; R_{34}\leq n_{34}\\
 R_{12}+R_{32}\leq \max (n_{12},n_{32})\\
   R_{32}+R_{34}\leq \max (n_{32},n_{34})\\
   R_{12}+R_{32}+R_{34}\leq \max (n_{12},n_{32})+[n_{34}-n_{32}]^+
 \end{array}\right.
\end{equation*}
\begin{equation*}  \mbox{Z } \leftarrow \left\{ \begin{array}{l}
  R_{43}\leq n_{43}, \;\; R_{23}\leq n_{23}, \;\; R_{21}\leq n_{21}\\
R_{43}+R_{23}\leq \max (n_{43},n_{23})\\
 R_{23}+R_{21}\leq \max (n_{23},n_{21})\\
   R_{43}+R_{23}+R_{21}\leq \max (n_{43},n_{23})+[n_{21}-n_{23}]^+. \end{array}\right.
\end{equation*}


\end{theorem}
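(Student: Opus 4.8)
The plan is to prove that the stated region is both achievable without adaptation and a valid outer bound even when all four nodes adapt, so that, exactly as in Theorem \ref{thm:mac/bc}, the two-way region collapses to two parallel one-way Z channels. For achievability I would run the one-way linear deterministic Z channel scheme of \cite{VCadambe2009Z} (whose linear-deterministic achievability rests on the level-allocation ideas of \cite{Avestimehr2009, bresler_tse}) simultaneously and non-adaptively in the $\rightarrow$ and $\leftarrow$ directions. The enabling fact, as in Theorem \ref{thm:mac/bc}, is that the channel is additive and each node knows its own transmitted sequence, so every node may cancel its self-interference from its received vector; after this cancellation the signal seen in each direction is precisely that of the corresponding one-way Z channel and the two directions decouple. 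Hence the full one-way Z region is attained in each direction with no interaction.

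For the converse I would use Fano's inequality together with the self-interference-cancellation device of Theorem \ref{thm:mac/bc}, handling adaptation by keeping the appropriate past outputs in the conditioning so that adaptive inputs become reconstructible. For the single-rate bounds (e.g. $R_{34}\le n_{34}$) and the two-message bounds $R_{12}+R_{32}\le\max(n_{12},n_{32})$ and $R_{32}+R_{34}\le\max(n_{32},n_{34})$, I would condition on all $\leftarrow$-direction messages and on the relevant interfering $\rightarrow$-message, then write the bounded mutual information as a sum of per-letter entropies $\sum_i H(Y_{\cdot,i}\mid Y_{\cdot}^{i-1},\ldots)$. Because $Y_2^{i-1}$ (resp.\ $Y_4^{i-1}$) sits in the conditioning, the adaptive inputs $X_{2,i}=f_2(M_{21},M_{23},Y_2^{i-1})$ and $X_{4,i}=f_4(M_{43},Y_4^{i-1})$ are deterministic functions of conditioned quantities and may be cancelled from the additive output, leaving only $S^{N-n_{12}}X_{1,i}+S^{N-n_{32}}X_{3,i}$ at node $2$ and $S^{N-n_{34}}X_{3,i}$ at node $4$; bounding the entropy of these residuals by the number of occupied levels gives the stated right-hand sides, mirroring the MAC/BC converse.

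The main obstacle is the three-message bound $R_{12}+R_{32}+R_{34}\le\max(n_{12},n_{32})+[n_{34}-n_{32}]^+$, which is genuinely a Z-channel bound and is \emph{not} implied by the cut-set: the naive joint-receiver estimate, which hands both outputs to a single super-receiver, only yields $H(Y_2^n,Y_4^n\mid M_\leftarrow)\le n(\max(n_{12},n_{32})+n_{34})$ and is therefore loose, because it ignores that node $2$ and node $4$ decode in a distributed manner. To obtain the sharp bound I would invoke the asymmetric genie of Remark \ref{remark:Ztype}: provide one $\rightarrow$-receiver with the non-desired $\rightarrow$-message, the desired output of the other $\rightarrow$-receiver, and all $\leftarrow$-messages, which both makes the relevant adaptive inputs reconstructible and supplies exactly the side-information needed to split node $3$'s contribution. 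The $[n_{34}-n_{32}]^+$ term should then emerge by separating the signal of $X_3$ into its top $n_{32}$ levels, already accounted for by the $\max(n_{12},n_{32})$ term seen at node $2$, and its residual $[n_{34}-n_{32}]^+$ levels, which reach node $4$ only. The delicate points, which I expect to consume most of the work, are (i) ordering the entropy chain so that at every step the inputs being cancelled are functions of quantities already in the conditioning --- in particular verifying that the genie outputs and messages are precisely what is required to ``create'' those inputs despite the coupling $X_1\to Y_2\to X_2\to Y_3\to X_3\to Y_4\to X_4$ introduced by adaptation --- and (ii) realizing the level split through the shift matrix $S$ so that the truncation $[\cdot]^+$ appears with the correct constant and no residual cross terms survive.

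Finally, every bound for the $\leftarrow$-direction Z channel follows by the symmetric relabeling $1\leftrightarrow 4$, $2\leftrightarrow 3$ (equivalently $n_{12}\leftrightarrow n_{43}$, $n_{32}\leftrightarrow n_{23}$, $n_{34}\leftrightarrow n_{21}$), so no separate argument is needed. Matching these adaptive outer bounds with the non-adaptive inner bound establishes that the capacity region is the stated pair of parallel one-way Z regions and that adaptation is useless.
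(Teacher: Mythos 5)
Your proposal is correct and follows essentially the same route as the paper's proof: non-adaptive achievability by running the one-way deterministic Z-channel scheme of \cite{VCadambe2009Z} in each direction with self-interference cancellation, cut-set (or simple genie) arguments for all but the triple-rate bounds, and for the triple-rate bound the asymmetric genie of Remark \ref{remark:Ztype} --- conditioning on the non-desired $\rightarrow$-message, all $\leftarrow$-messages, and the other receiver's output $Y_2^n$ --- with the $[n_{34}-n_{32}]^+$ term emerging exactly as you describe, via $H(S^{N-n_{34}}X_{3,i}\mid S^{N-n_{32}}X_{3,i})\leq [n_{34}-n_{32}]^+$ after the conditioned quantities are used to reconstruct $X_1^n, X_2^n$ and hence the top $n_{32}$ levels of $X_3$. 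The only work left is the bookkeeping you already flagged as delicate, which is precisely what the paper's entropy chain carries out.
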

%

\begin{proof}
We first note that the capacity of a class of deterministic Z channels is shown in \cite[Th. 3.1]{VCadambe2009Z}. To show achievability of the above, we use the achievability scheme of \cite[Th. 3.1]{VCadambe2009Z} in each $\rightarrow$ and $\leftarrow$ direction with non-adaptive nodes.
Due to the additive nature of the channel, each receiver may cancel or subtract out its own ``self-interference'' term $S^{N-n_{jj}}X_j$ from its received signal. By making the appropriate correspondences,  the above is achievable and equivalent to two one-way Z channels. 

For the converse,  note that all but the triple-rate bounds may be obtained by the cut-set bound, or independently by giving the appropriate side-information or genie to the receivers (as illustrated in previous models). 
The non-cut-set triple rate bound may be obtained as follows:
\begin{align*}
& n(R_{12}+R_{32}+R_{34}-\epsilon) \leq I(M_{12};Y_2^n|M_{21},M_{23},M_{43}) +I(M_{32},M_{34};Y_2^n, Y_4^n|M_{43},M_{12},M_{21},M_{23}) \\
&\leq H(Y_2^n|M_{21},M_{23},M_{43})+H(Y_4^n|M_{43},M_{12},M_{21},M_{23},Y_2^n)\\
&\leq \sum_{i=1}^n [H(Y_{2,i}|Y_2^{i-1},M_{21},M_{23},X_2^i)+H(Y_{4,i}|M_{12},M_{21},M_{23},M_{43},Y_4^{i-1},X_4^i,Y_2^n, X_2^n)]
\end{align*}
\begin{align*}
& \overset{(a)}{\leq} \sum_{i=1}^n [H(S^{N-n_{12}}X_{1,i}+S^{N-n_{32}}X_{3,i})\\
& \ \ +H(S^{N-n_{34}}X_{3,i}|M_{12},M_{21},M_{23},M_{43},Y_4^{i-1},X_4^i,S^{N-n_{12}}X_{1,i}+S^{N-n_{22}}X_{2,i}+S^{N-n_{32}}X_{3,i},X_2^n,X_1^n)]\\
& \leq \sum_{i=1}^n [H(S^{N-n_{12}}X_{1,i}+S^{N-n_{32}}X_{3,i}) +H(S^{N-n_{34}}X_{3,i}|S^{N-n_{32}}X_{3,i})]\\
&\leq n(\max (n_{12},n_{32})+[n_{34}-n_{32}]^+).
\end{align*}
In (a), $X_1^n$ in the second entropy term follows since given, $M_{12}$ and $X_2^n$, we may construct $X_1^n$.
\end{proof}

\begin{remark}
Again, we are always able to achieve the desired rates in Theorem \ref{thm:z} in only one channel use, therefore adaptation is useless.   The capacity region of this channel, a 6 dimensional region, is exactly equivalent to the capacity region of the two one-way linear deterministic Z channels. 
 \end{remark}
 
\section{Deterministic Two-way Interference Channels}
\label{IC}
The last deterministic multi-user two-way network we consider is a 4 user, 4 message, full-duplex network as shown in Fig. \ref{fig:channels-journal}(c).
 This channel model merges elements of two-way, feedback,  and  interference, and forms two parallel interference channels in the $\rightarrow$ and $\leftarrow$ directions. Again, we first introduce the modulo 2 adder model of this channel and show that adaptation is useless, generalizing this to a slightly larger class of symmetric channels. This generalization is not as straightforward as for the  MAC/BC and Z channels, and hence is discussed in somewhat more depth. 
 Finally, for the symmetric linear deterministic two-way interference channel, we show that full adaptation is useless when the interference is very strong, strong, and in some of the weak regimes, while in all other regimes we show that partial adaptation is useless (i.e. if only 2 of the nodes adapt, might as well have none of the nodes adapt). 
\subsection{An Introductory Example: Modulo 2 Adder Two-way IC}
We are again motivated by the two-way, modulo 2 adder IC, perhaps the simplest example of a two-way IC channel in which adaptation is useless, and capacity is achieved through time-sharing.
\begin{theorem}
\label{mod2thm:intf}
The capacity region of the two-way modulo 2 adder interference channel is the set of non-negative rate tules ($R_{12},R_{21},R_{34},R_{43}$) such that
\begin{align}
& R_{12}+R_{34}\leq 1\label{intf1}\\
& R_{21}+R_{43}\leq 1.\label{intf2}
\end{align}
\end{theorem}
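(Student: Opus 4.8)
The plan is to establish the two sum-rate constraints \eqref{intf1}--\eqref{intf2} by matching a \emph{non-adaptive} inner bound against an \emph{adaptive} outer bound, exploiting the defining feature of the modulo 2 adder IC: the two receivers in each direction observe a common output. Concretely, the $\rightarrow$ receivers see $Y_{2,i}=Y_{4,i}=X_{1,i}\oplus X_{3,i}$ while the $\leftarrow$ receivers see $Y_{1,i}=Y_{3,i}=X_{2,i}\oplus X_{4,i}$, so the two directions are carried on disjoint sets of inputs. Since the statement is symmetric under interchanging the two directions (relabeling $1\leftrightarrow2$, $3\leftrightarrow4$), it suffices to prove $R_{12}+R_{34}\le1$; the bound $R_{21}+R_{43}\le1$ then follows by relabeling.

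For the inner bound I would use time-sharing without adaptation, exactly in the spirit of Theorems \ref{thm:3user} and \ref{thm:generalMABC}. In the $\rightarrow$ direction a coefficient $\alpha$ splits the block so that node $1$ transmits an i.i.d.\ Bernoulli$(1/2)$ codeword while node $3$ is silent (then $Y_2=X_1$ and node $2$ decodes $M_{12}$ at rate $\alpha$), and node $3$ transmits alone on the remaining fraction (then $Y_4=X_3$ and node $4$ decodes $M_{34}$ at rate $1-\alpha$); this sweeps out $R_{12}+R_{34}=1$. A second coefficient $\beta$ does the same for $M_{21},M_{43}$ in the $\leftarrow$ direction. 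Because the $\rightarrow$ outputs depend only on $(X_1,X_3)$ and the $\leftarrow$ outputs only on $(X_2,X_4)$ (each node first cancelling its own contribution, which it knows), both directions run simultaneously at no cost, and every input is a function of messages alone.

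The substantive part is the outer bound, where I must rule out any gain from the permitted adaptation $X_{j,i}=f_j(M_{j{\cal K}},Y_j^{i-1})$. Following the genie recipe flagged in Remark \ref{remark:Ztype}, I would hand both $\rightarrow$ receivers all $\leftarrow$ messages $(M_{21},M_{43})$ as side information, and additionally hand node $4$ the competing message $M_{12}$. Fano's inequality together with the independence of the messages gives, up to $n\epsilon_n$ terms,
\begin{align*}
n(R_{12}+R_{34}) &\le I(M_{12};Y_2^n\mid M_{21},M_{43}) + I(M_{34};Y_4^n\mid M_{12},M_{21},M_{43}).
\end{align*}
Now I invoke the common-output identity $Y_4^n=Y_2^n$ to replace $Y_4^n$ in the second term and collapse the two mutual informations, by the chain rule, into $I(M_{12},M_{34};Y_2^n\mid M_{21},M_{43})\le H(Y_2^n)\le n$, yielding $R_{12}+R_{34}\le1$. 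The decisive point is that this bound never required reconstructing the adaptive inputs: the only channel-dependent inequality used is $H(Y_2^n)\le n$, which holds for \emph{any} encoding functions, adaptive or not. (If one prefers to keep $Y_4^n$ formally distinct, the same cancellation follows by noting that conditioned on all four messages every input, hence every output, is deterministic, so $H(Y_4^n\mid M_{12},M_{34},M_{21},M_{43})=0$, while $H(Y_2^n\mid M_{12},M_{21},M_{43})=H(Y_4^n\mid M_{12},M_{21},M_{43})$ because the two output sequences coincide.)

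I expect the only real obstacle to be bookkeeping: choosing the genie so that the two mutual-information terms share the \emph{same} output sequence after conditioning, which is precisely what renders the cross-genie $Y_2^n$ to receiver $4$ vacuous here, since it already equals $Y_4^n$. This degeneracy is special to the modulo 2 adder; I would anticipate that for the linear deterministic two-way IC of Theorem \ref{thm:intf} the corresponding genie $Y_2^n$ becomes genuinely informative and the argument correspondingly more delicate.
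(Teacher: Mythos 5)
Your proposal is correct and follows essentially the paper's own route: the non-adaptive time-sharing inner bound (one coefficient splitting nodes 1 and 3, another splitting nodes 2 and 4) is exactly the paper's achievability, and your genie-aided Fano converse exploiting the common output $Y_2^n=Y_4^n$ is precisely the alternative converse the paper invokes for this theorem and writes out in the proof of Theorem \ref{thm:ICG} (there with $\log\kappa$ in place of $1$). The only cosmetic differences are that the paper's first-cited converse is the cut-set bound, and that the paper cancels the positive and negative entropy terms explicitly where you collapse the two mutual informations with the chain rule.
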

\begin{proof}
We may achieve this region using two time-sharing random variables; one between nodes 1 and 3, and a second between nodes 2 and 4. The converse follows by the cut-set bound, or may alternatively be derived as done in the next subsection for a more general class of channels.  
%
%
\end{proof}

%

\subsection{Comments on a more general class of two-way deterministic ICs}
We ask whether the above two-way modulo 2 IC results may be extended to a more general class of deterministic ICs in which adaptation is useless and capacity is achieved through time-sharing. In both the MAC/BC and Z channel models we were able to accomplish this by imposing certain cardinality, invertibility and symmetry constraints. One example of a channel in this class is the modulo-$\kappa$ (for some $\kappa$) channel. We now extend results to the two-way IC, but note that we must make two additional restrictions: 1) we do not consider ``self-interference'' (which we did in the previous two models), and 2) we impose symmetry of the outputs (common output in each direction). 
Both of these conditions are sufficient for obtaining sum-rate outer bounds equal to $\log\kappa$ in each direction (where $\kappa$ is the input/output alphabet size); whether they are necessary remains open. 

Consider a class of deterministic two-way interference channels without self-interference,
described by:
\begin{align*}
&Y_1=F_\rightarrow(X_2,X_4) = Y_3  \mbox{ (there is no self-interference, symmetric channel)} \\
&Y_2=F_\leftarrow(X_1,X_3) = Y_4  \mbox{ (there is no self-interference, symmetric channel)}
\end{align*}
where $F_\rightarrow, F_\leftarrow$ are deterministic functions. Further restrict the class of channels to those with:
\begin{itemize}
\item {\bf P1IC:} $|{\cal X}_1| = |{\cal X}_2|= |{\cal X}_3| = |{\cal X}_4| = |{\cal Y}_1|= |{\cal Y}_2| = |{\cal Y}_3| =  |{\cal Y}_4| = \kappa$ for known $\kappa \in {\mathbb N}^+$. 
\item {\bf P2IC:} ``Invertibility'' constraints reminiscent of Costa and El Gamal \cite{elgamal_det_IC}.  In the notation of \cite{elgamal_det_IC}, we assume $f_1 = f_2 = F_\rightarrow$ (and similarly, in the reverse direction we have $f_1=f_2=F_\leftarrow$), and that $g_1 = g_2$ are the identity functions, i.e. $g_1(X_1) = X_1$ and $g_2(X_3) = X_3$ (and similarly for the reverse direction). 
Then we require that, given $X_1$, $Y_2$ is invertible, i.e. $\exists$ a function $G_{2}$ s.t. $X_3=G_2(X_1,Y_2)$. Similarly, we assume $\exists G_{1},G_3, G_4$: $X_4 = G_1(X_2,Y_1)$, $X_2 = G_{3}(X_4,Y_3)$, and $X_1 = G_{4}(X_3,Y_4)$. 
\item {\bf P3IC:}  to ensure that we may attain the outer bound through time-sharing, we impose that $F_\rightarrow$ is a function such that $\exists x_3^*$ such that $X_1$ and $Y_2 = Y_4 =  F_\rightarrow(X_1,X_3=x_3^*)$ are in 1-to-1 correspondence, and $\exists  x_1^*$ such that $X_3$ and $Y_2 = Y_4 =  F_\rightarrow(X_1=x_1^*,X_3)$ are in 1-to-1 correspondence,  (and similarly for $F_\leftarrow$). 
\end{itemize}

For this class of channels, the capacity is given by the following:
\begin{theorem}
The capacity region of the two-way ``deterministic, invertible and alphabet restricted''  IC satisfying the conditions {\bf P1IC}, {\bf P2IC}, {\bf P3IC} is the set of non-negative rates $(R_{12}, R_{34}, R_{21}, R_{43})$ satisfying: 
\begin{align}
&R_{12}+R_{34}\leq \log\kappa \label{general1IC}\\
&R_{21}+R_{43}\leq \log\kappa, \label{general2IC}
\end{align}
which may be achieved via time-sharing (in each direction). 
\label{thm:ICG}
\end{theorem}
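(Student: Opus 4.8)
The plan is to prove the two sum-rate constraints \eqref{general1IC} and \eqref{general2IC} independently and to show that each coincides with the corresponding decoupled one-way bound, exactly as in Theorem \ref{thm:generalMABC} for the MAC/BC. Since achievability and converse both decouple across the two directions, it suffices to establish $R_{12}+R_{34}\le\log\kappa$; the bound $R_{21}+R_{43}\le\log\kappa$ then follows by the symmetric argument with the roles of $F_\rightarrow$ and $F_\leftarrow$, and of the node pairs $\{1,3\}$ and $\{2,4\}$, interchanged. Throughout I would keep full adaptation in force and show it buys nothing.

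For achievability I would reuse the non-adaptive time-sharing template of Fig. \ref{fig:TS} in each direction simultaneously. In the direction carrying $M_{12},M_{34}$ (received at nodes $2,4$ through $F_\leftarrow$), pick $\alpha\in[0,1]$: for a fraction $\alpha$ of the block let node $1$ send $X_1$ uniform on $\mathcal X$ while node $3$ freezes $X_3=x_3^*$; for the remaining fraction let node $3$ send $X_3$ uniform while node $1$ freezes $X_1=x_1^*$. By {\bf P3IC} the maps $X_1\mapsto F_\leftarrow(X_1,x_3^*)$ and $X_3\mapsto F_\leftarrow(x_1^*,X_3)$ are bijections, so receiver $2$ reads off $M_{12}$ and receiver $4$ reads off $M_{34}$, each at the full $\log\kappa$ bits per used symbol by {\bf P1IC}. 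This yields every point of $R_{12}+R_{34}=\log\kappa$, and smaller rates by transmitting less. Because there is no self-interference, the transmissions $X_1,X_3$ used here never appear in the outputs $Y_1=Y_3=F_\rightarrow(X_2,X_4)$, so the identical construction may be run for $M_{21},M_{43}$ in the opposite direction at the same time with an independent coefficient $\beta$. Both schemes ignore past outputs, so adaptation is not used.

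For the converse I would exploit the common-output assumption $Y_2=Y_4=:Y_\rightarrow$. By Fano's inequality, and since handing each receiver the opposite-direction messages as a genie can only help it decode, I get $n(R_{12}-\epsilon_n)\le I(M_{12};Y_2^n\,|\,M_{21},M_{43})$ and $n(R_{34}-\epsilon_n)\le I(M_{34};Y_4^n\,|\,M_{12},M_{21},M_{43})$. Adding these, substituting $Y_2^n=Y_4^n=Y_\rightarrow^n$, and applying the chain rule collapses the sum to the single term
\[
I(M_{12},M_{34};Y_\rightarrow^n\,|\,M_{21},M_{43})=H(Y_\rightarrow^n\,|\,M_{21},M_{43})-H(Y_\rightarrow^n\,|\,M_{12},M_{34},M_{21},M_{43}).
\]
The crucial point is that conditioning on all four messages makes the entire adaptive system deterministic — each $X_{j,i}=f_j(\cdot\,,Y_j^{i-1})$ and each output is then fixed by induction on $i$ — so the second entropy vanishes. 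The first is bounded by $H(Y_\rightarrow^n)\le n\log\kappa$ using {\bf P1IC}, giving $R_{12}+R_{34}\le\log\kappa$ and matching the inner bound.

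The main obstacle is the converse's handling of adaptation: one cannot drop the conditioning on the backward messages $M_{21},M_{43}$, because through the feedback-like dependence $X_{1,i},X_{3,i}=f(\cdot\,,Y^{i-1})$ the forward output $Y_\rightarrow^n$ is \emph{not} determined by $M_{12},M_{34}$ alone. Choosing the genie to be exactly the opposite-direction message pair is what forces $H(Y_\rightarrow^n\,|\,\text{all messages})=0$ while leaving the remaining term a legitimate single-letterizable output entropy; this is the two-way specialization of the asymmetric-genie idea recorded in Remark \ref{remark:Ztype}, simplified here because the common output $Y_2=Y_4$ removes any need to also hand over the other receiver's signal. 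Thus {\bf P1IC} together with the common-output structure carry the converse, while {\bf P3IC} carries the matching inner bound; in writing up the argument I would double-check that {\bf P2IC}, although part of the hypotheses, is not actually invoked by this particular chain.
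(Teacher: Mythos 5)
Your proposal is correct and follows essentially the same route as the paper: the identical non-adaptive time-sharing inner bound (using \textbf{P3IC} and \textbf{P1IC}, with the two directions running simultaneously since there is no self-interference), and a converse that conditions each receiver on the opposite-direction messages, merges the two mutual informations via the chain rule using the common output $Y_2=Y_4$, and bounds the result by $n\log\kappa$ via \textbf{P1IC}. Your only departure is cosmetic but clean: where the paper drops negative entropy terms after introducing the genie $Y_2^n$, you observe that $H(Y_2^n\mid M_{12},M_{34},M_{21},M_{43})=0$ exactly, because the deterministic encoders and deterministic channel make all signals functions of the four messages; your closing remark that \textbf{P2IC} is not needed for this particular argument is also consistent with the paper, which invokes it only to place the channel in the Costa--El Gamal class, while noting achievability "may alternatively be directly verified."
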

\begin{proof}
Let us consider only the $\rightarrow$ direction for now. Under the above restrictions, the capacity region of the class of deterministic (one-way) interference channels in \cite{elgamal_det_IC} may be simplified to 
\begin{equation} R_{12}+R_{34} \leq \log\kappa  \label{eq:sum_in} \end{equation}
which may be achieved by time-sharing between the inputs $X_1$ uniform over the $\kappa$ input symbols, while $X_3=x_3^*$ and vice versa. 
That the rates \eqref{eq:sum_in} are achievable may alternatively be directly verified. 

We find the matching outer bound:
\begin{align*}
&n(R_{12}+R_{34}-\epsilon)\\
&\leq I(M_{12};Y_2^n|M_{21},M_{43})+I(M_{34};Y_4^n,Y_2^n|M_{12},M_{21},M_{43})\\
&= I(M_{12};Y_2^n|M_{21},M_{43})+I(M_{34};Y_2^n|M_{21},M_{12},M_{43})+I(M_{34};Y_4^n|M_{21},M_{12},M_{43},Y_2^n)\\
&\overset{(a)}{\leq} \sum_{i=1}^n [H(Y_{2,i}|Y_2^{i-1},M_{21},M_{43})-H(Y_{2,i}|Y_2^{i-1},M_{12},M_{21},M_{43})+H(Y_{2,i}|Y_2^{i-1},M_{12},M_{21},M_{43})]\\
&\leq \sum_{i=1}^n [H(Y_{2.i})]\\
&= n\max_{p(x_1,x_3)} H(Y_2) \leq n \log \kappa,
\end{align*}
where in (a) we dropped two negative entropy terms, and were able to replace $Y_{4,i}$ by $Y_{2,i}$, allowing us to cancel the 2nd and 3rd terms. This is the central reason why we have restricted $Y_2=Y_4$ and $Y_1=Y_3$, whether one may somehow cancel these terms when this is not the case is open.  Restricting the alphabet size as in {\bf P1IC} yields the final inequality. 

\end{proof}

\begin{remark}
We have proposed a slightly more general model for deterministic two-way ICs in which adaptation is useless. However, it should be pointed out that our conditions are sufficient but by no means necessary.  For instance, consider a binary multiplier two-way interference channel described by $Y_1=Y_3=X_2X_4$ and $Y_2=Y_4=X_1X_3$, with all inputs and outputs binary. It is not difficult to show that adaptation is useless for this model and the capacity of this channel is equivalent to the capacity of two one-way binary multiplier interference channels in parallel,  the same capacity region as in Theorem \ref{mod2thm:intf}. In addition, we will show in Section \ref{ICG} that adaptation is also useless for the Gaussian two-way interference channel with partial adaptation when the two-way interference is very strong; this channel is not in the class of channels considered above either. 
\end{remark}
%


\subsection{Linear Deterministic Two-way IC}
The two-way linear deterministic interference channel is defined by the input / output equations in Fig. \ref{fig:channels-journal}(c). 
In this section we will be considering the general linear deterministic IC, as well as the ``symmetric'' linear deterministic IC for which $p:=n_{12}=n_{21}=n_{34}=n_{43}$,  $q:=n_{14}=n_{41}=n_{23}=n_{32}$, and $\alpha: = q/p$. This will allow us to compare the symmetric, normalized sum capacity of various one and two-way interference channels, defined as $C_{sym}(\alpha) : = \frac{R_{12}+R_{34}}{2}$.

Recall our definition of partial adaptation (nodes 1 and 3 are fixed or ``restricted'') of Section \ref{model}:
\begin{align}
& X_{1,i}=f_1(M_{12}), \;\; X_{2,i}=f_2(M_{21},Y_2^{i-1}) \label{eq:partial1}\\
& X_{3,i}=f_3(M_{34}), \;\; X_{4,i}=f_4(M_{43},Y_4^{i-1}) \label{eq:partial2}
\end{align}
We first prove a  Lemma regarding partial adaptation, which is key in showing that partial adaptation is useless, and that the inability of {\it certain} nodes to adapt essentially ``blocks'' the ability of adaptation to help at all. 
\begin{lemma}
Under partial adaptation conditions \eqref{eq:partial1} -- \eqref{eq:partial2}, for some deterministic functions $f_5$ and $f_6$, 
\begin{align}
X_{2,i}& = f_5(M_{12}, M_{21}, M_{34}) \perp M_{43}, \;\; \forall i \label{eq:X2}\\
X_{4,i}& = f_6(M_{43}, M_{34}, M_{12}) \perp M_{21}, \;\; \forall i \label{eq:X4}
\end{align}
where $\perp$ denotes independence.
\label{lemma:partial}
\end{lemma}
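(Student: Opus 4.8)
The plan is to prove both claims simultaneously by induction on the time index $i$, the engine being the combination of two facts: nodes $1$ and $3$ are restricted (their inputs carry only $M_{12}$ and $M_{34}$), and the interference-channel topology delivers no signal from node $4$ to node $2$, nor from node $2$ to node $4$. The base case $i=1$ is immediate: since $Y_2^{0}$ and $Y_4^{0}$ are empty, \eqref{eq:partial1}--\eqref{eq:partial2} give $X_{2,1}=f_2(M_{21})$ and $X_{4,1}=f_4(M_{43})$, each already of the asserted form.

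For the inductive step I would assume that for every $j<i$ the input $X_{2,j}$ is a function of $(M_{12},M_{21},M_{34})$ and $X_{4,j}$ is a function of $(M_{43},M_{34},M_{12})$. The crucial structural observation is that in the two-way linear deterministic IC the signals arriving at node $2$ originate only from node $1$ and node $3$ (plus, at most, node $2$'s own self-interference), so that $Y_{2,j}=S^{N-n_{12}}X_{1,j}+S^{N-n_{32}}X_{3,j}$, augmented by a term in $X_{2,j}$ if self-interference is present; in particular $Y_{2,j}$ never contains $X_{4,j}$. Since $X_{1,j}=f_1(M_{12})$ and $X_{3,j}=f_3(M_{34})$ by restriction, and $X_{2,j}$ is a function of $(M_{12},M_{21},M_{34})$ by the induction hypothesis, each coordinate $Y_{2,j}$ with $j<i$, and hence the entire past $Y_2^{i-1}$, is a deterministic function of $(M_{12},M_{21},M_{34})$. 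Feeding this into $X_{2,i}=f_2(M_{21},Y_2^{i-1})$ exhibits $X_{2,i}$ as a function of $(M_{12},M_{21},M_{34})$, which defines $f_5$ and proves \eqref{eq:X2}. The mirror-image argument, using that $Y_4^{i-1}$ depends only on $X_1$, $X_3$ (and possibly self-interference $X_4$), and hence only on $(M_{34},M_{12})$ together with the inductively controlled $X_{4,j}$, yields \eqref{eq:X4} and defines $f_6$.

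The independence assertions then follow for free from the functional characterizations: because all six messages are mutually independent, $M_{43}$ is independent of the triple $(M_{12},M_{21},M_{34})$, so its deterministic image $X_{2,i}=f_5(M_{12},M_{21},M_{34})$ is independent of $M_{43}$; symmetrically $X_{4,i}=f_6(M_{43},M_{34},M_{12})$ is independent of $M_{21}$.

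The step I expect to demand the most care is the inductive bookkeeping around the self-interference term: I must verify that permitting $Y_{2,j}$ to contain node $2$'s own past inputs does not smuggle in a dependence on $M_{43}$. This is exactly where the induction hypothesis on $X_{2,j}$ is indispensable, and it is also where the topology does the real work---the absence of any link carrying $X_4$ into node $2$ is what ``blocks'' $M_{43}$ from ever reaching node $2$'s encoder. This blocking is the structural reason that partial adaptation cannot outperform no adaptation, and it is the mechanism the rest of the section will exploit.
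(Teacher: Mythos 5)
Your proof is correct and is essentially the paper's own argument: the paper expresses $X_{2,i}=f^*(M_{21},M_{12},M_{34},X_2^{i-1})$ using the restriction on nodes 1 and 3 and the absence of any link from node 4 into node 2, then ``iterates'' down to $X_{2,1}=f_2(M_{21})$ --- precisely the induction you formalize, including the handling of the self-interference term. Your version is merely more explicit about the base case and about why the functional form implies independence from $M_{43}$ (resp. $M_{21}$), which the paper leaves implicit.
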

\begin{proof}
Note that $X_{2,i} = f_2(M_{21}, Y_2^{i-1})$ and $Y_2^{i-1} = S^{N - n_{12}}X_{1}^{i-1} + S^{N-n_{22}} X_2^{i-1}+S^{N-n_{32}}X_3^{i-1}$. Since $X_1^{i-1}$ and $X_3^{i-1}$ are functions only of $M_{12}$ and $M_{34}$ respectively, we may conclude that  there exists a function $f^*$ such that $X_{2,i} = f^*(M_{21}, M_{12}, M_{34}, X_2^{i-1})$. Iterating this argument, and noting that $X_{2,1}$ is only a function of $M_{21}$, we obtain the theorem. The result for $X_{4,i}$ follows by a similar argument.
\end{proof}

\medskip
\begin{theorem}
\label{thm:intf}
The capacity region of the two-way linear deterministic interference channel under partial adaptation constraints is the set of ($R_{12},R_{21},R_{34},R_{43}$) which satisfy the following:

\begin{align}  
 &R_{12}\leq n_{12},   \;\; R_{34}\leq n_{34}\tag{IC$\rightarrow$ a}\\
 &R_{12}+R_{34}\leq \max (n_{12},n_{32})+[n_{34}-n_{32}]^+\tag{IC$\rightarrow$ b}\\
  &R_{12}+R_{34}\leq \max (n_{34},n_{14})+[n_{12}-n_{14}]^+\tag{IC$\rightarrow$ c}\\
   &R_{12}+R_{34}\leq \max ([n_{12}-n_{14}]^+,n_{32})+\max ([n_{34}-n_{32}]^+,n_{14})\tag{IC$\rightarrow$ d}\\
   &2R_{12}+R_{34}\leq \max (n_{12},n_{32})+[n_{12}-n_{14}]^++\max ([n_{34}-n_{32}]^+,n_{14})\tag{IC$\rightarrow$ e}\\
   &R_{12}+2R_{34}\leq \max (n_{34},n_{14})+[n_{34}-n_{32}]^++\max ([n_{12}-n_{14}]^+,n_{32})\tag{IC$\rightarrow$ f}
\end{align}
\begin{align}  
 & R_{21}\leq n_{21},  \;\; R_{43}\leq n_{43}\tag{IC$\leftarrow$ a}\\
 &R_{21}+R_{43}\leq \max (n_{21},n_{41})+[n_{43}-n_{41}]^+\tag{IC$\leftarrow$ b}\\
  &R_{21}+R_{43}\leq \max (n_{43},n_{23})+[n_{21}-n_{23}]^+\tag{IC$\leftarrow$ c}\\
   &R_{21}+R_{43}\leq \max ([n_{21}-n_{23}]^+,n_{41})+\max ([n_{43}-n_{41}]^+,n_{23})\tag{IC$\leftarrow$ d}\\
   &2R_{21}+R_{43}\leq \max (n_{21},n_{41})+[n_{21}-n_{23}]^++\max ([n_{43}-n_{41}]^+,n_{23})\tag{IC$\leftarrow$ e}\\
   &R_{21}+2R_{43}\leq \max (n_{43},n_{23})+[n_{43}-n_{41}]^++\max ([n_{21}-n_{23}]^+,n_{41}). \tag{IC$\leftarrow$ f}
\end{align}
\end{theorem}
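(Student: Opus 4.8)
The plan is to obtain the region in two pieces: a purely non-adaptive achievability argument, and a genie-aided converse that accounts for the adaptation permitted at nodes 2 and 4, with Lemma \ref{lemma:partial} serving as the central device that ``decouples'' the two directions into two one-way problems. The punchline is that achievability and converse both coincide with two parallel one-way linear deterministic ICs, so partial adaptation is useless.

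For achievability I would run the non-adaptive one-way linear deterministic IC scheme of \cite{bresler_tse} independently in each direction: nodes 1 and 3 (carrying $M_{12},M_{34}$) form the $\rightarrow$ IC into receivers 2 and 4, while nodes 2 and 4 (carrying $M_{21},M_{43}$) form the $\leftarrow$ IC into receivers 1 and 3. Since the channel is additive and each node knows its own transmitted sequence, receiver 2 subtracts $S^{N-n_{22}}X_2^n$ and receiver 4 subtracts $S^{N-n_{44}}X_4^n$ (and symmetrically receivers 1 and 3 subtract their own $X_1^n,X_3^n$), so after self-interference cancellation each direction is \emph{exactly} a one-way linear deterministic IC. Hence all twelve bounds are achievable with no adaptation whatsoever.

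For the converse the key observation is that Lemma \ref{lemma:partial} collapses the two-way problem once we condition on the opposite direction's messages. To bound the $\rightarrow$ rates I condition throughout on $(M_{21},M_{43})$: then $X_1^n$ is a function of $M_{12}$, $X_3^n$ a function of $M_{34}$, and by \eqref{eq:X2}--\eqref{eq:X4} both $X_2^n$ and $X_4^n$ are functions of $(M_{12},M_{34})$ alone; moreover in each per-letter entropy I may reconstruct $X_2^i$ from $(M_{21},Y_2^{i-1})$ and $X_4^i$ from $(M_{43},Y_4^{i-1})$ and cancel these self-interference terms. The effective channel at receivers 2 and 4 is then precisely a one-way IC with independent inputs $X_1(M_{12})$, $X_3(M_{34})$, so the bounds (IC$\rightarrow$ a)--(IC$\rightarrow$ f) follow from the one-way converse of \cite{bresler_tse,elgamal_det_IC}. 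As a representative case I would re-derive (IC$\rightarrow$ b) in the adaptive style of Theorem \ref{thm:z}, starting from $n(R_{12}+R_{34}-\epsilon)\le I(M_{12};Y_2^n|M_{21},M_{43}) + I(M_{34};Y_2^n,Y_4^n|M_{12},M_{21},M_{43})$; the middle entropy terms cancel and the final term vanishes by determinism, leaving $H(Y_2^n|M_{21},M_{43})+H(Y_4^n|Y_2^n,M_{12},M_{21},M_{43})$, which I bound by $n\max(n_{12},n_{32})$ (after cancelling $X_2$) and $n[n_{34}-n_{32}]^+$ (after cancelling $X_4$, reconstructing $X_1^n$ from $M_{12}$, and extracting $S^{N-n_{32}}X_{3,i}$ from $Y_2^n$ together with the reconstructed $X_1^n,X_2^n$). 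The $\leftarrow$ bounds follow symmetrically by conditioning on $(M_{12},M_{34})$, under which $X_2^n$ reduces to a function of $M_{21}$ only and $X_4^n$ to a function of $M_{43}$ only, giving a clean non-adaptive one-way IC in the reverse direction.

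The main obstacle is the converse, and specifically arguing that the freedom of nodes 2 and 4 to adapt cannot enlarge the region; this is exactly what Lemma \ref{lemma:partial} resolves, since the restriction at nodes 1 and 3 propagates forward so that, after conditioning on the opposite direction's messages, the adaptive inputs carry none of the ``backward-flowing'' information that cooperation/relaying would require. The remaining effort is bookkeeping: re-deriving the weighted bounds (IC$\rightarrow$ e), (IC$\rightarrow$ f) and the mixed bound (IC$\rightarrow$ d) with genie choices mirroring the one-way IC converse, each time verifying that every input appearing in a conditioning term can be reconstructed from the conditioned messages together with the receiver's own past outputs.
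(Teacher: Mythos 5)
Your proposal is correct, and its converse takes a genuinely different route from the paper's. The achievability halves coincide (run a non-adaptive one-way scheme in each direction after every receiver strips its own self-interference; the paper invokes Han--Kobayashi, you invoke the one-way linear deterministic IC capacity, which is the same thing here). For the converse, however, the paper never makes your reduction explicit: it re-derives all twelve bounds from first principles with bespoke genie arguments --- the asymmetric genie $Y_2^n$ for the bounds (IC$\rightarrow$~b), (IC$\rightarrow$~c), and Costa--El~Gamal-type genies $S^{N-n_{14}}X_1^n$, $S^{N-n_{32}}X_3^n$ together with $M_{21},M_{43}$ for (IC$\rightarrow$~d)--(IC$\rightarrow$~f) --- each time checking which inputs can be reconstructed and cancelled, invoking Lemma~\ref{lemma:partial} only where needed. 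You instead package everything into one reduction: conditioned on $(M_{21},M_{43})$, the forward problem \emph{is} a one-way IC with independent non-adaptive inputs, since nodes 1,3 are restricted and the stripped output $\tilde Y_{2,i}:=S^{N-n_{12}}X_{1,i}+S^{N-n_{32}}X_{3,i}$ is in bijection with $Y_2^i$ given $M_{21}$ (and similarly at node 4); conditioned on $(M_{12},M_{34})$, Lemma~\ref{lemma:partial} collapses the adaptive encoders at nodes 2,4 to non-adaptive functions of $M_{21}$ and $M_{43}$ respectively, so both directions inherit the known one-way converse wholesale. This is cleaner and more modular, though a careful write-up must (i) transport Fano through the bijection, and (ii) in the $\leftarrow$ direction handle the fact that the conditional encoders and decoders depend on the realization $(m_{12},m_{34})$, e.g.\ by selecting a realization whose conditional error probability vanishes. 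The price of your modularity is information the paper's inline derivations retain: the paper's proofs of (IC$\rightarrow$~b), (IC$\rightarrow$~c) and of the symmetric single-rate bound never use the restriction at nodes 1 and 3, hence hold under \emph{full} adaptation --- exactly what the paper exploits to conclude that full adaptation is useless for $2/3\leq\alpha$. Your reduction is intrinsically a partial-adaptation argument and cannot recover that stronger statement, but it fully suffices for the theorem as stated.
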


\begin{proof}
For achievability, note that self-interference may be removed at each receiver due to this channel model's linearity, in which case the physical channel model reduces to two one-way IC in opposite directions. We may thus apply the well-known Han-Kobayashi scheme \cite{H+K} in each direction, ignoring the ability of the nodes to adapt, achieving the expression in (IC$\rightarrow$) and (IC$\leftarrow$).

Now we prove the converse. Single-rates follow as in \eqref{1}, and using Lemma \ref{lemma:partial} (where we use partial adaptation). For the sum-rate (IC$\rightarrow$ b): 
\begin{align*}
&n(R_{12}+R_{34}-\epsilon)\\
&\overset{(a)}{\leq} I(M_{12};Y_2^n|M_{21},M_{43})+I(M_{34};Y_4^n,Y_2^n|M_{12},M_{21},M_{43})\\
&\leq I(M_{12};Y_2^n|M_{21},M_{43})+I(M_{34};Y_2^n|M_{21},M_{12},M_{43})+H(Y_4^n|M_{21},M_{12},M_{43},Y_2^n)\\
&\overset{(b)}{=}I(M_{12};Y_2^n|M_{21},M_{43})+I(M_{34};Y_2^n|M_{21},M_{12},M_{43})\\
&+\sum_{i=1}^n[H(S^{N-n_{34}}X_{3,i}|M_{21},M_{12},M_{43},Y_4^{i-1},X_4^i,Y_2^n,X_2^n,X_1^i)]\\
&\leq \sum_{i-1}^n [H(Y_{2,i}|Y_2^{i-1},M_{21},X_2^i)-H(Y_{2,i}|Y_2^{i-1},M_{12},M_{21},M_{43})+H(Y_{2,i}|Y_2^{i-1},M_{12},M_{21},M_{43})\\
&+H(S^{N-n_{34}}X_{3,i}|M_{21},M_{12},M_{43},Y_4^{i-1},X_4^i,S^{N-n_{12}}X_1^n+S^{N-n_{22}}X_2^n+S^{N-n_{32}}X_3^n,X_2^n,X_1^i)]\\
&\leq \sum_{i=1}^n [H(S^{N-n_{12}}X_{1,i}+S^{N-n_{32}}X_{3,i})+H(S^{N-n_{34}}X_{3,i}|S^{N-n_{32}}X_{3,i})]\\
&\leq n(\max (n_{12},n_{32})+[n_{34}-n_{32}]^+)\\
&\overset{(c)}{=} n(\max (p,q)+[p-q]^+).
\end{align*}

We introduce the genie $Y_2^n$ in the second mutual information term in (a), i.e. we provide asymmetric side information to only one receiver. In (b), we add $X_1^i$ in the entropy term because of the iterated argument that, given $M_{12},X_2^n,X_4^i$, we can construct $X_1^i$. For (c), we assumed a symmetric channel. 



\begin{remark} 
Note that we do {\bf not} need partial adaptation in this bound, and so these conclusions actually hold for {\bf full adaptation.} This implies that for the symmetric channel, full adaptation is useless when two-way interference is strong ($1\leq \alpha\leq 2, \alpha=q/p$) and weak in some interval ($2/3\leq \alpha\leq 1, \alpha=q/p$) where this outer bound may be achieved. Interestingly, when $2/3\leq \alpha\leq 2$, the ``V'' curve is also the capacity for the linear deterministic symmetric interference channel with  feedback \cite{Changho2010}.
If we add another asymmetric genie $Y_4^n$ in the first term in (a), then we obtain the second sum-rate bound (IC$\rightarrow$ c). 
\end{remark}
 
It may further be shown that for symmetric channels,  adaptation is also useless when two-way interference is very strong ($\alpha>2, \alpha=q/p$). To show this, we re-derive the single-rate bounds this time {\it not} assuming partial adaptation (allowing for full adaptation), and using symmetry in the last step:
\begin{align*}
 n(R_{12}-\epsilon)
& \leq I(M_{12};Y_2^n,Y_3^n|M_{21},M_{34})\\
& \leq H(Y_2^n,Y_3^n|M_{21},M_{34})\\
& = \sum_{i=1}^n [H(Y_{2,i},Y_{3,i}|Y_2^{i-1},Y_3^{i-1},M_{21},M_{34},X_2^i,X_3^i)]\\
& = \sum_{i=1}^n [H(S^{N-n_{12}}X_{1,i},S^{N-n_{43}}X_{4,i}|Y_2^{i-1},Y_3^{i-1},M_{21},M_{34},X_2^i,X_3^i)]\\
& \leq \sum_{i=1}^n [H(S^{N-n_{12}}X_{1,i},S^{N-n_{43}}X_{4,i})]\\
& = n\max (n_{12},n_{43})\\
&=np
\end{align*}
Under very strong interference constraints,  this is also known to be achievable.  Thus, we have obtained the capacity  for the  symmetric linear deterministic two-way IC when $\alpha\geq 2/3$, where we see that full adaptation is useless.  We will comment more on this in Remark \ref{curves}, and in Fig. \ref{fig:curves}.

{We now continue with the sum-rate outer bound (IC$\rightarrow$ d), which uses a similar genie to that in  Costa and El Gamal's \cite{elgamal_det_IC} capacity result for a class of deterministic ICs, i.e. gives to one receiver the interference created at the {\it other} receiver by the desired message. The same type of genie (though this time noisy) is used in the new outer bound for the Gaussian one-way interference channel by Etkin, Tse and Wang \cite{etkin_tse_wang}. The main difference is that we also provide the transmitters in the $\rightarrow$ direction the messages in the $\leftarrow$ direction, or $M_{21}$ and $M_{43}$, in order to be able to create $X_2$ and $X_4$ and remove these from the entropy terms, obtaining only entropies of combinations of the variables in the $\rightarrow$ direction (of $X_1, X_3$) for the sum-rate bound for the $\rightarrow$ direction.}

 \begin{align*}
&n(R_{12}+R_{34}-\epsilon)\leq I(M_{12};Y_2^n,S^{N-n_{14}}X_1^n,M_{21},M_{43})+I(M_{34};Y_4^n,S^{N-n_{32}}X_3^n,M_{21},M_{43})\\
& \overset{(d)}{=} H(Y_2^n|S^{N-n_{14}}X_1^n,M_{43},M_{21})+H(S^{N-n_{14}}X_1^n|M_{43},M_{21})-H(Y_2^n,S^{N-n_{14}}X_1^n|M_{12},M_{21},M_{43})\\
& +H(Y_4^n|S^{N-n_{32}}X_3^n,M_{43},M_{21})+H(S^{N-n_{32}}X_3^n|M_{43},M_{21})-H(Y_4^n,S^{N-n_{32}}X_3^n|M_{34},M_{21},M_{43})\\
& \overset{(e)}{=} H(Y_2^n|S^{N-n_{14}}X_1^n,M_{43},M_{21})+H(Y_4^n|S^{N-n_{32}}X_3^n,M_{43},M_{21}) \\
& + \sum_{i=1}^n H(S^{N-n_{14}}X_{1,i}|S^{N-n_{14}}X_{1}^{i-1},M_{43},M_{21}, M_{34})-H(Y_{2,i}|Y_2^{i-1}, M_{12},M_{21},M_{43}, X_2^i, X_1^i)\\
& + \sum_{i=1}^n H(S^{N-n_{32}}X_{3,i}|S^{N-n_{32}}X_{3}^{i-1}, M_{43},M_{21}, M_{12})-H(Y_{4,i}|Y_4^{i-1}, M_{34},M_{21},M_{43}, X_4^i, X_3^i)\\
& \overset{(f)}{=} H(Y_2^n|S^{N-n_{14}}X_1^n,M_{43},M_{21})+H(Y_4^n|S^{N-n_{32}}X_3^n,M_{43},M_{21}) \\
& + \sum_{i=1}^n H(S^{N-n_{14}}X_{1,i}|S^{N-n_{14}}X_{1}^{i-1},M_{43},M_{21}, M_{34}, X_3^i, X_4^i)-H(S^{N-n_{32}}X_{3,i}|S^{N-n_{32}}X_3^{i-1}, M_{12},M_{21},M_{43},  X_2^i, X_1^i)\\
& + \sum_{i=1}^n H(S^{N-n_{32}}X_{3,i}|S^{N-n_{32}}X_{3}^{i-1}, M_{43},M_{21}, M_{12}, X_1^i, X_2^i)-H(S^{N-n_{14}}X_{1,i}|S^{N-n_{14}}X_1^{i-1},  M_{34},M_{21},M_{43}, X_4^i, X_3^i)\\
& = H(Y_2^n|S^{N-n_{14}}X_1^n,M_{43},M_{21})+H(Y_4^n|S^{N-n_{32}}X_3^n,M_{43},M_{21})\\
& \leq \sum_{i=1}^n [H(S^{N-n_{12}}X_{1,i}+S^{N-n_{32}}X_{3,i}|S^{N-n_{14}}X_{1,i})+H(S^{N-n_{14}}X_{1,i}+S^{N-n_{34}}X_{3,i}|S^{N-n_{32}}X_{3,i})]\\
& \leq n(\max ([n_{12}-n_{14}]^+,n_{32})+\max ([n_{34}-n_{32}]^+,n_{14})),
\end{align*}
where (d) follows from the independence of the messages, (e)  by the chain rule of entropy and the fact that we can create $X_1^i$ given $M_{12}$, and we can create $X_2^i$ given $M_{21}, Y_2^{i-1}$ (similarly for $X_3^i, X_4^i$). We have also added the independent messages $M_{34}$ and $M_{12}$ to the 3rd and 5th term conditioning. For (f), we have expanded the $Y_{2,i}$ and $Y_{4,i}$ in the entropy terms of the 4th and 6th terms and removed the contributions from the conditioning. 
In the 3rd term, we can create $X_3^i$ from $M_{34}$ and $X_4^i$ from $S^{N-n_{14}}X_1^{i-1}$, $X_3^i$ and $M_{43}$ (similarly for the 5th term creating $X_1^i$ and $X_2^i$). 

The sum-rate bound in the opposite direction (which we must consider given the fact that under partial adaptation, not everything is symmetric):
\begin{align*}
&n(R_{21}+R_{43}-\epsilon) \leq I(M_{21};Y_1^n,S^{N-n_{23}}X_2^n,M_{12},M_{34})+I(M_{43};Y_3^n,S^{N-n_{41}}X_4^n,M_{12},M_{34})\\
& \overset{(f)}{=}H(Y_1^n|S^{N-n_{23}}X_2^n,M_{12},M_{34})+H(S^{N-n_{23}}X_2^n|M_{12},M_{34})-H(Y_1^n,S^{N-n_{23}}X_2^n|M_{12},M_{34},M_{21})\\
& +H(Y_3^n|S^{N-n_{41}}X_4^n,M_{12},M_{34})+H(S^{N-n_{41}}X_4^n|M_{12},M_{34})-H(Y_3^n,S^{N-n_{41}}X_4^n|M_{12},M_{34},M_{43})\\
& \overset{(g)}{=}H(Y_1^n|S^{N-n_{23}}X_2^n,M_{12},M_{34})+H(S^{N-n_{23}}X_2^n|M_{12},M_{34},M_{43})-H(S^{N-n_{41}}X_4^n|M_{12},M_{34},M_{21})\\
& +H(Y_3^n|S^{N-n_{41}}X_4^n,M_{12},M_{34})+H(S^{N-n_{41}}X_4^n|M_{12},M_{34},M_{21})-H(S^{N-n_{23}}X_2^n|M_{12},M_{34},M_{43})\\
& \leq \sum_{i=1}^n [H(S^{N-n_{21}}X_{2,i}+S^{N-n_{41}}X_{4,i}|S^{N-n_{23}}X_{2,i})+H(S^{N-n_{43}}X_{4,i}+S^{N-n_{23}}X_{2,i}|S^{N-n_{41}}X_{4,i})]\\
& \leq n(\max ([n_{21}-n_{23}]^+,n_{41})+\max ([n_{43}-n_{41}]^+,n_{23})),
\end{align*}

where (f) follows from the independence of the messages. Equation (g) follows since $X_1$ and $X_3$ are functions only of $M_{12}$ and $M_{34}$ and from Lemma \ref{lemma:partial}.

\begin{remark} We needed partial adaptation (Lemma \ref{lemma:partial}) in the proof of the previous two bounds (IC$\rightarrow$ d) and (IC$\leftarrow$ d). 
 In the above, nodes 1 and 3 were restricted. 
 By symmetry, we may obtain the same result if nodes 2 and 4 were restricted.
 
 Finally, 
\end{remark}
\begin{align*}
& n(2R_{12}+R_{34}-\epsilon)\\
& \leq I(M_{12};Y_2^n|M_{21},M_{43})+I(M_{12};Y_2^n,Y_4^n|M_{21},M_{43},M_{34})+I(M_{34};Y_4^n,S^{N-n_{32}}X_3^n|M_{21},M_{43})\\
&=H(Y_2^n|M_{21},M_{43})-H(Y_2^n|M_{21},M_{43},M_{12})+H(Y_4^n|M_{21},M_{43},M_{34})\\
& +H(Y_2^n|M_{21},M_{43},M_{34},Y_4^n)+H(Y_4^n,S^{N-n_{32}}X_3^n|M_{21},M_{43})-H(Y_4^n,S^{N-n_{32}}X_3^n|M_{34},M_{21},M_{43})\\
& \overset{(h)}{=}H(Y_2^n|M_{21},M_{43})-H(Y_2^n|M_{21},M_{43},M_{12})+H(S^{N-n_{32}}X_3^n|M_{43},M_{21},M_{12})\\
& +H(Y_4^n|S^{N-n_{32}}X_3^n,M_{43},M_{21})+H(Y_4^n|M_{21},M_{43},M_{34})\\
& -H(Y_4^n,S^{N-n_{32}}X_3^n|M_{34},M_{21},M_{43})+H(Y_2^n|M_{21},M_{43},M_{34},Y_4^n)\\
& \overset{(i)}{\leq} \sum_{i=1}^n [H(S^{N-n_{12}}X_{1,i}+S^{N-n_{32}}X_{3,i})+H(S^{N-n_{14}}X_{1,i}+S^{N-n{34}}X_{3,i}|S^{N-n_{32}}X_{3,i})+H(S^{N-n_{12}}X_{1,i}|S^{N-n_{14}}X_{1,i})]\\
& =n(\max (n_{12},n_{32})+\max ([n_{34}-n_{32}]^+,n_{14})+[n_{12}-n_{14}]^+),
\end{align*}
where  (h) follows from the definition of partial adaptation and Lemma \ref{lemma:partial} (skipping a transition to multi-letter for brevity), and (i) by canceling the 2nd and 3rd terms, as well as the 5th and the 6th terms.
We may similarly prove the other bounds of this form (IC$\rightarrow$ f), (IC$\leftarrow$ e) and (IC$\leftarrow$ f).\end{proof}

We again see that, under partial adaptation constraints, adaptation is useless and we obtain the capacity region of two one-way ICs. Essentially, {\it partial} adaptation prevented messages being relayed by other messages (which was also impossible in the MAC/BC and Z channels). For example,  under full adaptation, message $M_{12}$ may be relayed from Tx1 to Rx 2 through nodes 3 and 4. This path is ``blocked'' by the partial adaptation assumption,  as node 3 could not adapt to carry $M_{12}$. However, it should be pointed out that this is not necessary in general: full adaptation in the two-way modulo 2 adder IC is useless as we showed in the previous subsection, but the path is not blocked.

\subsection{Symmetric rate comparison with other interference channel models} 
For symmetric deterministic linear ICs, we may compare the symmetric sum-capacity $C_{sym}$ of various one-way and two-way models. 
Recalling $\alpha : = q/p$, we plot $C_{sym}$ as a function of $\alpha$ for the IC \cite{etkin_tse_wang}, the IC with noiseless output feedback \cite{Changho2010}, the IC with rate-limited feedback \cite{Vahid-IC-rate-FB} (for a fixed value of $\beta = 0.125$ in the notation of \cite{Vahid-IC-rate-FB}), and the two-way IC with full adaptation considered here (for $\alpha\geq 2/3$ only). Several observations may be made: the two-way IC with {\it partial} adaptation behaves like  two one-way interference channels operating in parallel over the forward and backwards link, as seen by the coinciding lines for the one-way and two-way IC with partial adaptation. This tells us that allowing partial adaptation is useless -- i.e. may as well not adapt. Interestingly, the same holds true even for full adaptation for $\alpha>2/3$. This was also concluded for the linear deterministic one-way interference channel with interfering feedback links in \cite{sahai2009channel}; what is interesting is that we can just as well squeeze in extra information messages in the feedback link (in the two-way interference channel model) rather than use the backwards links for feedback.  The symmetric sum-capacity for the {\it fully} adaptive two-way IC remains open for $\alpha<2/3$; it is solved for partial adaptation.

Recently, the work in \cite{Suh:ISIT2012, Suh2012} has considered a one-way interference channel with interfering feedback links (again forming an interference channel), a generalization of some of the deterministic interference channels with feedback considered in \cite{sahai2009channel}, where the feedback link spends fraction $\lambda$ of its time sending feedback, and uses the remaining $(1-\lambda)$ for other things (such as for example sending independent backwards messages, though adaptation as in \eqref{eq:adapt} is not considered). This is quite different from our model which integrates sending feedback and messages over all links, allows for adaptation, and does not force this separation.  While the symmetric sum-capacity for this two-message interference channel with interfering feedback links is obtained in \cite{Suh:ISIT2012, Suh2012} in our notation for $\alpha\geq 2/3$, it is a function of this parameter $\lambda$ and is thus not plotted here. 
\begin{figure*}
\centerline{\includegraphics[width=12cm]{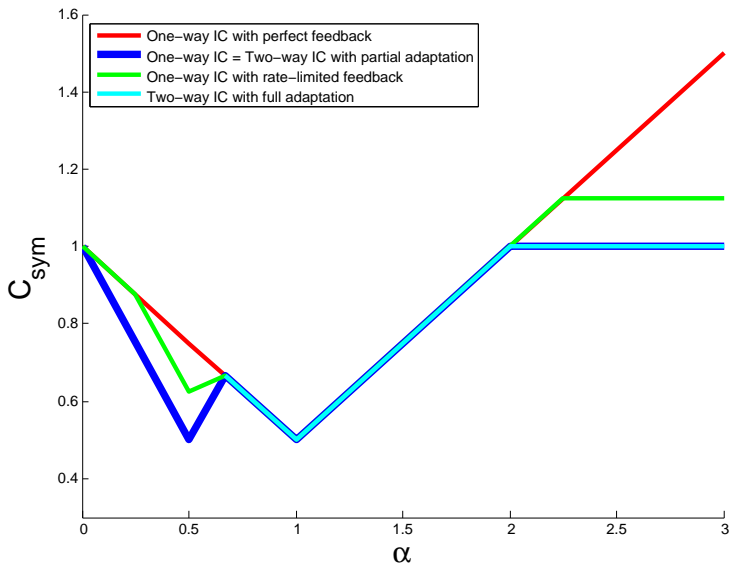}}
\vspace{-3.9cm}
\caption{$C_{sym}$ for various linear deterministic ICs as a function of $\alpha:= \frac{q}{p}$; $q$ interfering link strength, $p$ direct link strength.}
\label{fig:curves}
\end{figure*}
\label{curves}

\section{Gaussian Two-way Multiple-Access Broadcast Channel}
\label{MACBCG}

All previous channel models considered were deterministic. We now ask whether we may obtain insight into whether adaptation is useless / useful in certain noisy channels. We do so by considering the Gaussian two-way MAC/BC in this section, and the Gaussian two-way IC in the next.

We demonstrate that adaptation in the real Gaussian two-way MAC/BC can only improve the sum-capacity up to $1/2$ bit per direction. We show this by comparing non-adaptive inner bounds for this channel to outer bounds to the two-way Gaussian MAC/BC. Our outer bound for the $\rightarrow$ MAC direction is derived directly; the outer bound for the $\leftarrow$ BC direction 
follows by enhancing the BC channel by giving Tx 2 perfect output feedback and rendering the channel degraded, at which point the converse of \cite[Thm.2]{elgamal_deg_BC} follows.  

\subsection{Channel model}
At each channel use, the Gaussian two-way MAC/BC is described by the input/output relationships
\begin{align*}
&Y_1=X_2+Z_1\\
&Y_2=X_1+X_3+Z_2\\
&Y_3=X_2+Z_3,
\end{align*}
subject to power constraints $E[|X_j|^2]\leq P_j, j\in\{1,2,3\}$, and independent, identically distributed complex Gaussian noise $Z_j\sim \mathcal{CN}(0, N_j)$ at all nodes  $j\in (1,2,3)$, which may be done without loss of generality (notice both the arbitrary power and noise constraints). WLOG assume that $N_3\geq N_1$. 
 Note that we have removed the ``self-interference'' terms such as $X_1$ in the expression of $Y_1$ (for example) in contrast to the deterministic models considered. This is for ease of exposition, to make the parallels with the MAC and BC channels more direct. Note that in a Gaussian model, these ``self-interference'' terms can always be subtracted at a given node in any case. To contrast, in the Gaussian interference channel next, we will NOT eliminate the self-interference terms from the channel model, to demonstrate how they may be handled directly. 
Recall that the inputs of the Gaussian two-way MAC/BC are fully adaptive, i.e. 
\begin{align}
& X_{1,i}=f_1(M_{12},Y_1^{i-1}), \;\; X_{2,i}=f_2(M_{21}, M_{23}, Y_2^{i-1}), \;\;  X_{3,i}=f_3(M_{32},Y_3^{i-1}).
\end{align}

\subsection{The limited utility of adaptation in the Gaussian two-way MAC/BC}
We now have the following theorem. 

\medskip
\begin{theorem}
Adaptation in the Gaussian MAC/BC channel may only improve the sum-rate in the $\rightarrow$ and $\leftarrow$ directions by up to $1/2$ bit per direction. 
\label{thm:MACBCG}
\end{theorem}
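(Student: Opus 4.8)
The plan is to prove a two-sided sandwich: derive non-adaptive inner bounds on the sum-rates in each direction, derive adaptive outer bounds on the same sum-rates, and show the gap between them is at most $\frac{1}{2}$ bit per direction. The theorem concerns how much adaptation can \emph{improve} the sum-rate, so the strategy is to show that the best adaptive scheme (captured by the outer bound) beats the best non-adaptive scheme (captured by the inner bound) by no more than $\frac{1}{2}$ bit. I would treat the two directions separately, since they have fundamentally different structure: the $\rightarrow$ direction is a Gaussian MAC (two transmitters, one receiver), while the $\leftarrow$ direction is a Gaussian BC (one transmitter, two receivers).

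\emph{Forward (MAC) direction.} For the inner bound, I would use the non-adaptive Gaussian MAC sum-rate $R_{12}+R_{32} \le \frac{1}{2}\log\bigl(1 + \frac{P_1+P_3}{N_2}\bigr)$, achievable without adaptation by independent Gaussian codebooks. For the adaptive outer bound, I would run a Fano-type argument, $n(R_{12}+R_{32}-\epsilon) \le I(M_{12},M_{32};Y_2^n \mid M_{21},M_{23})$, expand in a single-letter chain using $h(Y_{2,i} \mid Y_2^{i-1},\ldots) \le h(Y_{2,i})$, and crucially cancel the self-interference term $X_{2,i}$ exactly as in the deterministic converses earlier in the paper (where conditioning on $M_{21},M_{23}$ and the past outputs lets one reconstruct $X_2^i$). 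The key point is that adaptation allows $X_{1,i}$ and $X_{3,i}$ to become correlated through the shared past $Y_2^{i-1}$; bounding $h(Y_{2,i})$ by the entropy of a Gaussian with variance $P_1 + P_3 + 2\sqrt{P_1 P_3} + N_2$ (the maximally correlated case) gives an outer bound of $\frac{1}{2}\log\bigl(1 + \frac{(\sqrt{P_1}+\sqrt{P_3})^2}{N_2}\bigr)$. The gap between this and the independent-input inner bound is then bounded above by $\frac{1}{2}$ bit, which is the familiar MAC-with-feedback $\frac{1}{2}$-bit gap (Ozarow-type); I would verify this gap is uniformly $\le \frac{1}{2}$ over all $P_1,P_3,N_2$.

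\emph{Backward (BC) direction.} Here I would follow the route the authors flag in the introduction to this section: enhance the channel by giving Tx~2 perfect output feedback from both receivers, which (together with $N_3 \ge N_1$) renders the BC physically degraded. Then the adaptive sum-rate is outer-bounded by the capacity of the \emph{degraded Gaussian BC with feedback}, for which the converse of El Gamal \cite{elgamal_deg_BC} applies and shows feedback does not enlarge the capacity region of a physically degraded BC. Thus the adaptive outer bound collapses to the ordinary degraded Gaussian BC sum-capacity, $R_{21}+R_{23} \le \frac{1}{2}\log\bigl(1 + \frac{P_2}{N_1}\bigr)$, matched by non-adaptive superposition coding. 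In this direction the ``gain'' from adaptation is in fact zero, comfortably within $\frac{1}{2}$ bit.

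\emph{Main obstacle.} The hardest step is the forward-direction outer bound: I must make the self-interference cancellation rigorous \emph{while} the inputs are adaptive and hence statistically coupled to the past, and then show the resulting correlated-Gaussian outer bound exceeds the independent-input inner bound by at most $\frac{1}{2}$ bit \emph{uniformly} in the channel parameters. The entropy-maximization step (arguing the worst case is jointly Gaussian with maximal correlation, via a conditional maximum-entropy argument under the per-symbol power constraints) is the delicate part, since one cannot simply assume the adaptive inputs are Gaussian; I would lean on the maximum-entropy principle applied to $h(Y_{2,i})$ subject only to the second-moment constraints, which holds regardless of the (possibly non-Gaussian, adaptively generated) input distribution. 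Bounding the gap then reduces to the elementary inequality $\frac{1}{2}\log\frac{1+(\sqrt a+\sqrt b)^2/N}{1+(a+b)/N} \le \frac{1}{2}$, which I would confirm holds for all nonnegative $a,b,N$.
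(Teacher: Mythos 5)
Your forward (MAC) direction is essentially the paper's own proof: Fano plus chain rule, reconstruction of $X_{2,i}$ from $(M_{21},M_{23},Y_2^{i-1})$, a maximum-entropy bound with maximally correlated inputs giving $\tfrac{1}{2}\log\bigl(1+\tfrac{P_1+P_3+2\sqrt{P_1P_3}}{N_2}\bigr)$, and the elementary inequality $2\sqrt{P_1P_3}\le P_1+P_3$ to get the $\tfrac{1}{2}$-bit gap. That part is sound.

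The backward (BC) direction has a genuine gap. You claim that giving Tx~2 perfect output feedback, ``together with $N_3\ge N_1$,'' renders the BC physically degraded, so that the converse of \cite{elgamal_deg_BC} applies and the adaptive outer bound \emph{collapses to} the ordinary degraded Gaussian BC sum-capacity $\tfrac{1}{2}\log\bigl(1+\tfrac{P_2}{N_1}\bigr)$, i.e.\ zero gain. This is wrong on two counts. First, feedback to the transmitter does nothing to the degradedness structure of the channel: with independent noises $Z_1, Z_3$, the channel $p(y_1,y_3|x_2)$ is only \emph{stochastically} degraded, never physically degraded, and El Gamal's converse requires physical degradedness. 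The distinction is not pedantic here --- it is exactly the regime where feedback is known to help (Ozarow--Leung showed feedback enlarges the capacity region of the stochastically degraded Gaussian BC), so a ``zero gain'' conclusion cannot follow from a degraded-BC-with-feedback argument alone. Second, the fix --- which is what the paper does --- is to \emph{enhance} the channel by handing Rx~1 the other receiver's output $Y_3^n$, which makes the channel trivially physically degraded; but this enhancement is lossy. Rx~1 then effectively sees noise $N_1N_3/(N_1+N_3)$ instead of $N_1$, which is why the resulting outer bound \eqref{BCoutG} carries the factor $\tfrac{N_1+N_3}{N_3}$ relative to the non-adaptive inner bound \eqref{eq:BCin}. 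The BC-direction gap is therefore not zero; it is bounded by $\tfrac{1}{2}$ bit precisely because $\tfrac{N_1+N_3}{N_3}\le 2$ under the convention $N_1\le N_3$. Your proof as written establishes nothing in this direction, because the key lemma you invoke does not apply to the channel you have; to repair it you must add the receiver-side enhancement and then carry the resulting factor-of-two noise improvement through the gap computation, as the paper does.
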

\begin{proof}
First let us consider the $\rightarrow$ direction. 
For achievability, let the $\rightarrow$ direction use the capacity achieving scheme for the non-adaptive Gaussian multiple access channel, whose sum-rate is  dominated by 
\begin{equation}
R_{12}+R_{32} \leq \frac{1}{2}\log\left(1+\frac{P_1+P_3}{N_2}\right)
\label{eq:MAC}
\end{equation}

For the converse, first consider the MAC direction, and follow steps along the lines of a MAC with feedback as in \cite{Ozarow:MAC-FB, willems_FB}, which are however not immediately applicable:
\begin{align}
n(R_{12}+R_{32}) & = H(M_{12}, M_{32}) = H(M_{12}, M_{32}|M_{21}, M_{23}) \\
& = H(M_{12}, M_{32}|M_{21}, M_{23}, Y_2^n) + I(M_{12}, M_{32};Y_2^n|M_{21}, M_{23}) \\
&\overset{(a)}{\leq} n\epsilon_n + \sum_{i=1}^n H(Y_{2,i}|Y_2^{i-1}, M_{21}, M_{23}) - H(Y_{2,i}|Y_2^{i-1}, M_{12}, M_{32}, M_{21}, M_{23}, Y_1^n, Y_3^n) \\
& \overset{(b)}{\leq}  n\epsilon_n + \sum_{i=1}^n H(Y_{2,i}|X_{2,i}) - H(Y_{2,i}|X_{1,i}, X_{2,i}, X_{3,i}) \\
& \overset{(c)}{\leq} n\epsilon_n + \sum_{i=1}^n H(X_{1,i}+X_{3,i}+Z_{2,i}) - H(Z_{2,i}) \\
& \overset{(d)}{\leq} n\epsilon_n + \frac{n}{2}\log\left(1+\frac{P_1+P_3+2\sqrt{P_1P_3}}{N_2}\right)  \label{eq:MACG2}
\end{align}
where (a) follows by Fano's inequality for the first term, by the chain rule of entropy for the 2nd and 3rd terms, and by conditioning reduces entropy in adding $Y_1^n$ and $Y_3^n$ to the 3rd term, (b) since for the 2nd term, given $M_{21}, M_{23}$ and $Y_2^{i-1}$ one can construct $X_{2,i}$ and then conditioning reduces entropy, and for the 3rd term since given all the terms in the conditioning we may create $X_{1,i}, X_{2,i}$, and $X_{3,i}$ and then use the memoryless property of the channel model, (c) follows by conditioning reduces entropy and by the memoryless channel, (d) since it suffices to consider $X_1,X_3$ to be jointly Gaussian and is outer bounded when they are maximally correlated, as adaptation may permit joint $p(x_1,x_3)$. 



Now, taking the difference between the outer bound to the adaptive two-way MAC/BC in the MAC direction in \eqref{eq:MACG2} and the non-adaptive inner bound of \eqref{eq:MAC} yields
\begin{align*}
\eqref{eq:MACG2} - \eqref{eq:MAC} & = \frac{1}{2}\log\left(\frac{N_2+P_1+P_3+2\sqrt{P_1 P_3}}{N_2+P_1+P_3} \right)\\
& = \frac{1}{2}\log\left(1+ \frac{2\sqrt{P_1 P_3}}{N_2+P_1+P_3} \right)\\
& \leq \frac{1}{2}\log\left(1+ \frac{P_1+P_3}{P_1+P_3} \right) = \frac{1}{2},
\end{align*}
where the inequality follows as $(\sqrt{P_1}-\sqrt{P_3})^2 \geq 0$ implies ${2}\sqrt{P_1P_3}\leq P_1 + P_3$, and we have decreased the denominator.

For the $\leftarrow$ direction use the capacity achieving scheme for the non-adaptive single-antenna Gaussian broadcast channel, which yields the rates, for $0\leq \alpha\leq 1$
\begin{equation}
R_{21} \leq \frac{1}{2}\log\left(1+\frac{\alpha P_2}{N_1}\right), \;\; R_{23} \leq \frac{1}{2}\log\left(1+\frac{(1-\alpha)P_2}{N_3+\alpha P_2}\right). \label{eq:BCin}
\end{equation}

For the converse, for the BC $\leftarrow$ direction we enhance the channel as follows:
\begin{itemize}
\item Give Tx 2 perfect output feedback, i.e. access to $Y_{1,i-1}$, $Y_{3,i-1}$ at time $i$ as well as access to $M_{12}, M_{32}$.  Together with feedback, this allows it to create $X_{1}^i, X_{3}^i$. 
\item Render the channel physically degraded by providing Rx 1 with $Y_3^n$. Then Rx 3's output is trivially a physically degraded version of Rx 1's output. This is where we use the fact that, WLOG $N_3\geq N_1$ (if the reverse had been true we would have given $Y_1^n$ to Rx 3 instead). This is crucial in ensuring a constant gap to a non-adaptive scheme. 
\end{itemize}
The converse of \cite[Thm. 2]{elgamal_deg_BC}, which shows that feedback does not change the capacity region of the physically degraded BC,  then follows along all the same steps with the notation correspondences \break (\cite{elgamal_deg_BC} $\leftrightarrow$ this paper) as follows: 
\[ W_1 \leftrightarrow M_{21}, \; W_2 \leftrightarrow M_{23}, \; Y^n \leftrightarrow (Y_1^n, Y_3^n), \; Z^n \leftrightarrow Y_3^n, \; X^n \leftrightarrow X_2^n \]
The key point in proving the converse is \cite[Lemma 3]{elgamal_deg_BC}, which follows in a straightforward manner even given the added adaptation constraint (i.e. $X_{2,i}$ is also a function of $Y_2^{i-1}$ which is not present in the original \cite[Thm. 2]{elgamal_deg_BC}), but we re-state and prove it here in our notation for clarity and completeness.
\begin{lemma}{\it Analogous to Lemma 3 of \cite{elgamal_deg_BC}.} For all $\lambda\geq 0$,
\begin{align*}
n(R_{21}+R_{23}) &\leq I(M_{23};Y_3^n)+\lambda I(M_{21};Y_1^n,Y_3^n|M_{23}) \\
& \leq \sum_{i=1}^n I(U_i;Y_{3,i})+\lambda I(X_{2,i};Y_{1,i},Y_{3,i}|U_i) 
\end{align*}
where $U_i := (M_{23}, Y_1^{i-1}, Y_3^{i-1})$.
\end{lemma}
\begin{proof}
For the first term,
\begin{align}
I(M_{23};Y_3^n) &= \sum_{i=1}^n I(M_{23};Y_{3,i}|Y_3^{i-1}) \\
& = \sum_{i=1}^n H(Y_{3,i}|Y_3^{i-1}) - H(Y_{3,i}|M_{23}, Y_3^{i-1}) \\
&\leq \sum_{i=1}^n H(Y_{3,i}) - H(Y_{3,i}|M_{23}, Y_3^{i-1}, Y_1^{i-1}) \\
& = \sum_{i=1}^n I(Y_{3,i};U_i)
\end{align}
by definition of $U_i : = (M_{23}, Y_3^{i-1}, Y_1^{i-1})$. For the second term,
\begin{align}
I(M_{21};Y_1^n, Y_3^n|M_{23}) & = \sum_{i=1}^n I(M_{21};Y_{1,i}, Y_{3,i}|M_{23}, Y_1^{i-1}, Y_3^{i-1}) \\
& = \sum_{i=1}^n I(M_{21};Y_{1,i}, Y_{3,i}|U_i) \\
&\leq \sum_{i=1}^n I(M_{21}, X_{2,i};Y_{1,i}, Y_{3,i}|U_i) \\
& = \sum_{i=1}^n H(Y_{1,i}, Y_{3,i}|U_i) - H(Y_{1,i}, Y_{3,i}|U_i, M_{21}, X_{2,i})\\
&=\sum_{i=1}^n I(Y_{1,i}, Y_{3,i};X_{2,i}|U_i)
\end{align}
where several steps in the proof of \cite[Lemma 3]{elgamal_deg_BC} are not needed as our channel is trivially degraded. 
\end{proof}
Following the same arguments as in \cite[Thm. 2]{elgamal_deg_BC}, the above Lemma yields an outer bound equivalent to the region in \eqref{eq:BCout}, where we note that in addition to $U_i = (M_{23}, Y_1^{i-1}, Y_3^{i-1})$ to construct $X_{2,i} = f(M_{21}, M_{23}, Y_2^{i-1}, M_{12}, M_{32}, Y_1^{i-1}, Y_3^{i-1}) \equiv f(U_i, M_{21}, Z_2^{i-1}, M_{12}, M_{32})$ we also need $M_{12}, M_{32}, M_{21}, Z_2^n$, but that, given the above definition of the random variable $U_i$, the factorization of the inputs as $p(u)p(x_2|u)$ still holds. Note that with some abuse of notation we have left the channel distribution as $p(y_1, y_3|x_2)p(y_3|y_1,y_3)$ to emphasize that Rx 1 has access to both $Y_1^n, Y_3^n$ (we have forced the channel to be degraded) and thus that Rx 3, with access to $Y_3^n$ only is trivially a degraded version of this. The outer bound for the $\leftarrow$ BC direction is thus given by the set of all non-negative $R_{21}, R_{23}$ such that
\begin{align}
R_{21} &\leq I(X_2;Y_1, Y_3|U), \;\;\;\;\;\; R_{23} \leq I(U;Y_3) \label{eq:BCout}
\end{align}
over all distributions of the form $p(u)p(x_2|u)p(y_1,y_3|x_2)p(y_3|y_1,y_3)$.  Evaluation for the Gaussian channel, as done in \cite{Ozarow:1984}, yields an outer bound of 
\begin{align}
R_{21} \leq \frac{1}{2}\log\left(1+\frac{\alpha P_2}{N_1}\frac{N_1+N_3}{N_3}\right), \;\;\;\; R_{23} \leq \frac{1}{2}\log\left( 1+\frac{(1-\alpha)P_2}{N_3+\alpha P_2} \right) \label{BCoutG}
\end{align}
for $0\leq \alpha \leq 1$.

Taking the difference between the sum of the outer bound to the adaptive two-way MAC/BC in the BC direction in \eqref{BCoutG} and the sum of the  non-adaptive inner bounds of \eqref{eq:BCin} yields
\begin{align*}
\eqref{BCoutG} - \eqref{eq:BCin} & = \frac{1}{2}\log\left(1+\frac{\alpha P_2}{N_1}\frac{N_1+N_3}{N_3}\right) - \frac{1}{2}\log\left(1+\frac{\alpha P_2}{N_1}\right)   \\
& \overset{(a)}{\leq}\frac{1}{2}\log\left(1+\frac{2 \alpha P_2}{N_1}\right) - \frac{1}{2}\log\left(1+\frac{\alpha P_2}{N_1}\right)  \leq  \frac{1}{2}.
\end{align*}
where (a) follows as $\frac{N_1+N_3}{N_3} = 1+\frac{N_1}{N_3} \leq 2$ since $N_1\leq N_3$. 
\end{proof}

\begin{remark}
We note that this result also implies that for the one-way Gaussian MAC with FB and the one-way BC with FB, feedback and adaptation of the nodes can only increase capacity by up to 1/2 bit (sum-rate) per direction. This fact has been partially noted in \cite{Changho2010}.
\end{remark}

\section{Gaussian Two-way Interference Channel}
\label{ICG}

We now consider the Gaussian two-way interference channel, and ask when non-adaptive schemes such as the celebrated Han and Kobayashi \cite{H+K} perform as well, or nearly as well, as adaptive schemes. 

We do not construct any inner bounds which employ adaptation; our focus is on showing when non-adaptive schemes perform ``well''. Rather, we derive an outer bounds for the Gaussian two-way IC under full adaptation (all 4 nodes may adapt) and several under partial adaptation (only 2 of the 4 may adapt) constraints. We then show that non-adaptive schemes sometimes achieve the capacity, or at least to within a constant gap of either the fully or partially adaptive schemes. 
We note that while the converses and the steps are new and exploit carefully chosen genies, when we evaluate these by further outer-bounding our outer-bounds, interestingly, we  sometimes re-obtain some of the outer bounds of the interference channel \cite{etkin_tse_wang} {\it or} the interference channel with feedback \cite{Changho2010}. This  in turn is sufficient to achieve capacity to within a constant gap, which we emphasize, sometimes is limited to {\it partial} adaptation for some of the weak interference regimes but this will be explicitly mentioned when it is the case. 


\subsection{Channel model, definitions, and partial adaptation lemma}
At each channel use, the Gaussian two-way IC is described by the input/output relationships
\begin{align*}
&Y_1=g_{11}X_1+g_{21}X_2+g_{41}X_4+Z_1\\
&Y_2=g_{12}X_1+g_{22}X_2+g_{32}X_3+Z_2\\
&Y_3=g_{23}X_2+g_{33}X_3+g_{43}X_4+Z_3\\
&Y_4=g_{14}X_1+g_{34}X_3+g_{44}X_4+Z_4,
\end{align*}
where $g_{jk}$, for $j,k\in \{1,2,3,4\}$ are the complex channel gains. We assume the power constraints $E[|X_j|^2]\leq P_j=1, j\in\{1,2,3,4\}$, and independent, identically distributed complex Gaussian noise $Z_j\sim \mathcal{CN}(0, 1)$ at all nodes  $j\in (1,2,3,4)$, which may be done without loss of generality. 
Furthermore, we define ${\tt SNR}_{12}=|g_{12}|^2, {\tt SNR}_{21}=|g_{21}|^2, {\tt SNR}_{34}=|g_{34}|^2, {\tt SNR}_{43}=|g_{43}|^2$, and ${\tt INR}_{14}=|g_{14}|^2, {\tt INR}_{41}=|g_{41}|^2, {\tt INR}_{23}=|g_{23}|^2, {\tt INR}_{32}=|g_{32}|^2$. Note that we have kept the ``self-interference'' terms such as $g_{11}X_1$ in the expression of $Y_1$ (for example). In this Gaussian model, it is clear that since node $1$ knows $X_1$ we may remove this self-interference term due to the additive nature of the channel. However,  we leave it in our expressions to emphasize precisely this fact. In other channels such as the two-way binary multiplier channel, where $Y=X_1 X_2$ one cannot ``undo'' ones' own channel, which is one source of difficulty for this elusive two-way channel. In all converses, the fact that we can cancel or subtract out a node's ``self-interference'' is shown explicitly. This is one of the reasons two-way channels of this form, as seen for example in the Gaussian two-way channel as well \cite{Han:1984}, are easier to deal with, which we emphasize. 

We say that the Gaussian two-way interference channel operates under ``full adaptation'' if we allow
\begin{align}
& X_{1,i}=f_1(M_{12},Y_1^{i-1}), \;\; X_{2,i}=f_2(M_{21},Y_2^{i-1})\\
& X_{3,i}=f_3(M_{34},Y_3^{i-1}), \;\; X_{4,i}=f_4(M_{43},Y_4^{i-1}).
\end{align}
Similarly, it operates under ``partial adaptation'' if we only allow the following:  
\begin{align}
& X_{1,i}=f_1(M_{12})\label{eq:p1}, \;\; X_{2,i}=f_2(M_{21},Y_2^{i-1})\\
& X_{3,i}=f_3(M_{34}), \;\; X_{4,i}=f_4(M_{43},Y_4^{i-1}),\label{eq:p2}
\end{align}
i.e.  nodes 1 and 3 are ``restricted'' \cite{Shannon:1961}.  By symmetry, we may alternatively allow nodes $2$ and $4$ to be restricted and $1,3$ to be fully adaptive; whether allowing $1,2$ or $1,4$ to be restricted and the complement fully adaptive remains an open problem.

We are interested in the symmetric capacity (or sum-rate), when all the SNRs equal a given ${\tt SNR}$, and all the INRs equal a given ${\tt INR}$. For full adaptation, due to the symmetry,  we consider the per-user rates $R_{sym} = \frac{R_{12}+R_{34}}{2} = \frac{R_{21}+ R_{43}}{2}$.  In partial adaptation, there is only partial symmetry (nodes 1 and 3 are fixed, while 2 and 4 are not),  and hence we will consider the per user rates $R_{sym\rightarrow} = \frac{R_{12}+R_{34}}{2}$ and $R_{sym\leftarrow} = \frac{R_{21}+R_{43}}{2}$ for the forward and reverse directions respectively. We will derive outer bounds for $R_{sym}$ under full adaptation and $R_{sym\rightarrow}$, $R_{sym\leftarrow}$ under partial adaptation, and show these to be achievable to within constant gaps by non-adaptive schemes. 

We first prove a modified version of Lemma \ref{lemma:partial} relevant in partial adaptation for the Gaussian channel.
\begin{lemma}
Under partial adaptation \eqref{eq:p1} -- \eqref{eq:p2}, for some deterministic functions $f_5$ and $f_6$, 
\begin{align}
X_{2,i}& = f_5(M_{12}, M_{21}, M_{34}, Z_2^{i-1}) \perp M_{43}, \;\; \forall i \label{eq:X2}\\
X_{4,i}& = f_6(M_{43}, M_{34}, M_{12},Z_4^{i-1}) \perp M_{21}, \;\; \forall i \label{eq:X4}
\end{align}
where $\perp$ denotes independence.
\label{lemma:partialG}
\end{lemma}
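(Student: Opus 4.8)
The plan is to prove Lemma \ref{lemma:partialG} by essentially mimicking the argument of Lemma \ref{lemma:partial}, but now carefully tracking the additive Gaussian noise through the iteration, since in the Gaussian model the received signal at node $2$ depends not only on the other users' inputs but also on its own noise realizations $Z_2^{i-1}$. The key observation is that, because nodes $1$ and $3$ are \emph{restricted} under partial adaptation \eqref{eq:p1}--\eqref{eq:p2}, their inputs $X_1^{i-1}$ and $X_3^{i-1}$ are deterministic functions of the messages $M_{12}$ and $M_{34}$ alone, and in particular carry \emph{no} dependence on $M_{43}$. This is the structural fact that allows the dependence on $M_{43}$ to be ``blocked'' from ever entering $X_2$.

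First I would write out the received signal explicitly: since node $2$ is adaptive, $X_{2,i} = f_2(M_{21}, Y_2^{i-1})$ with $Y_{2,i-1} = g_{12}X_{1,i-1} + g_{22}X_{2,i-1} + g_{32}X_{3,i-1} + Z_{2,i-1}$. Substituting the restricted forms $X_{1,i-1} = f_1(M_{12})$ (depending only on $M_{12}$) and $X_{3,i-1} = f_3(M_{34})$ (depending only on $M_{34}$), I conclude that there is a function $f^*$ with $X_{2,i} = f^*(M_{21}, M_{12}, M_{34}, Z_2^{i-1}, X_2^{i-1})$, where the explicit appearance of $Z_2^{i-1}$ is the novelty relative to the deterministic Lemma \ref{lemma:partial}. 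The next step is the induction: $X_{2,1}$ is a function of $M_{21}$ only (no past outputs at time $1$), and assuming $X_2^{i-1}$ is a function of $(M_{12}, M_{21}, M_{34}, Z_2^{i-2})$, substituting this into the expression for $X_{2,i}$ collapses the $X_2^{i-1}$ argument, yielding $X_{2,i} = f_5(M_{12}, M_{21}, M_{34}, Z_2^{i-1})$ for a suitable $f_5$. The independence $X_{2,i} \perp M_{43}$ then follows because $f_5$ has no functional dependence on $M_{43}$, and $M_{43}$ is independent of the variables $(M_{12}, M_{21}, M_{34}, Z_2^{i-1})$ on which $X_{2,i}$ depends (the messages are mutually independent, and the noise $Z_2$ is independent of all messages). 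The claim for $X_{4,i}$ is entirely symmetric, interchanging the roles of the two adaptive nodes $2 \leftrightarrow 4$ and using that $X_{4,i} = f_4(M_{43}, Y_4^{i-1})$ with $Y_4$ depending on the restricted inputs $X_1$ (from $M_{12}$) and $X_3$ (from $M_{34}$) plus noise $Z_4$.

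The main subtlety, and the one thing that must be handled with more care than in the deterministic case, is keeping track of the noise term through the induction so that the conditioning and independence claims are stated correctly: one must verify that $X_{2,i}$ depends on noise only through $Z_2^{i-1}$ and on no other node's noise, which holds precisely because the restricted nodes $1$ and $3$ inject no noise into their own transmitted signals (their inputs are deterministic functions of messages). I expect this to be essentially routine once the substitution structure is set up, since the ``self-interference'' term $g_{22}X_{2,i-1}$ only reintroduces already-accounted-for variables via the induction hypothesis. The genuinely load-bearing step remains the same as in Lemma \ref{lemma:partial}: the restriction of nodes $1$ and $3$ is what severs any functional path from $M_{43}$ to $X_2$ (respectively from $M_{21}$ to $X_4$), so the independence is a direct consequence of the functional form combined with mutual independence of the messages and noise.
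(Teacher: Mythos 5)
Your proposal is correct and follows essentially the same argument as the paper's proof: substitute the restricted inputs $X_1^{i-1}=f_1(M_{12})$ and $X_3^{i-1}=f_3(M_{34})$ into $Y_2^{i-1}$ to get $X_{2,i}=f^*(M_{21},M_{12},M_{34},X_2^{i-1},Z_2^{i-1})$, then iterate (induct) down to $X_{2,1}=f_2(M_{21})$, with independence from $M_{43}$ following from the mutual independence of the messages and the noise. Your explicit tracking of the noise term through the induction is exactly what the paper does implicitly when it adds $Z_2^{i-1}$ to the argument list of $f^*$, so there is no substantive difference.
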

\begin{proof}
Note that $X_{2,i} = f_2(M_{21}, Y_2^{i-1})$ and $Y_2^{i-1} = g_{12}X_{1}^{i-1} + g_{22}X_2^{i-1}+g_{32}X_3^{i-1}+Z_2^{i-1}$. Since $X_1^{i-1}$ and $X_3^{i-1}$ are functions only of $M_{12}$ and $M_{34}$ respectively, we may conclude that  there exists a function $f^*$ such that $X_{2,i} = f^*(M_{21}, M_{12}, M_{34}, X_2^{i-1},Z_2^{i-1})$. Iterating this argument, and noting that $X_{2,1}$ is only a function of $M_{21}$, we obtain the lemma. The result for $X_{4,i}$ follows similarly. That $X_{2,i}$ is independent of $M_{43}$ follows since $M_{43}$ is independent of all the arguments inside $f^*$. 
\end{proof}

\subsection{Outer bounds}
We now present two outer bounds for the Gaussian two-way IC under full and partial adaptation respectively. These bounds are either within a constant gap, or sufficient to show the capacity, for the different regimes. We derive general outer bounds, imposing symmetry only in the final step.

\begin{theorem} {\it Outer bound: full adaptation.}
For the Gaussian two-way symmetric IC under full adaptation, any achievable symmetric rate $R_{sym} = \frac{R_{12}+R_{34}}{2} = \frac{R_{21}+R_{43}}{2}$, achievable by each user,  satisfies, 
\begin{align}
R_{sym}
&\leq \frac{1}{2} \log \left(1+{\tt SNR}+{\tt INR}+2\sqrt{{\tt SNR}\times {\tt INR}}\right)+\frac{1}{2}\log \left(1+\frac{{\tt SNR}}{1+{\tt INR}}\right) \label{R_strong_E}
\end{align}
\label{thm:outer-full}
\end{theorem}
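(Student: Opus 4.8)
The plan is to bound the forward sum-rate $R_{12}+R_{34}=2R_{sym}$ directly by a genie-aided, Fano-based converse that parallels the deterministic sum-rate bound (IC$\rightarrow$ b) of Theorem \ref{thm:intf} (which, as noted there, already holds under \emph{full} adaptation); by the full symmetry of the problem the reverse direction yields the identical bound, so it suffices to treat the $\rightarrow$ direction. First I would write, via Fano's inequality, $n(R_{12}-\epsilon)\le I(M_{12};Y_2^n\mid M_{21},M_{43})$ and, handing the decoder of $M_{34}$ the asymmetric genie $(Y_2^n,M_{12})$, $n(R_{34}-\epsilon)\le I(M_{34};Y_4^n,Y_2^n\mid M_{12},M_{21},M_{43})$, conditioning on the reverse messages $M_{21},M_{43}$ throughout. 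Adding these and expanding into differential entropies, the two copies of $H(Y_2^n\mid M_{12},M_{21},M_{43})$ cancel through the chain rule, leaving $H(Y_2^n\mid M_{21},M_{43})+H(Y_4^n\mid Y_2^n,M_{12},M_{21},M_{43})-H(Y_4^n,Y_2^n\mid M_{12},M_{34},M_{21},M_{43})$.

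The next step is self-interference removal and single-letterization. Because node $2$ forms $X_{2,i}=f_2(M_{21},Y_2^{i-1})$ and node $4$ forms $X_{4,i}=f_4(M_{43},Y_4^{i-1})$, the sequences $X_2^i$ and $X_4^i$ are computable from the conditioning already present, so I may insert them into the conditioning and subtract the self-interference terms $g_{22}X_2$ and $g_{44}X_4$ from $Y_2$ and $Y_4$ at no cost --- exactly the step flagged in the model description and used in every converse above. Applying ``conditioning reduces entropy'' to strip the remaining past-output and message conditioning, and subtracting the noise entropies $h(Z_{2,i})$ and $h(Z_{4,i})$, I expect the bound to collapse to a sum of two single-letter pieces: a ``MAC-like'' piece $h(g_{12}X_{1,i}+g_{32}X_{3,i}+Z_{2,i})-h(Z_{2,i})$ and a ``private-rate'' piece of the form $h(g_{34}X_{3,i}+g_{14}X_{1,i}+Z_{4,i}\mid\text{genie})-h(g_{14}X_{1,i}+Z_{4,i})$, after which a standard Jensen/concavity averaging over the time index yields single symbols $X_1,X_3$.

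It then remains to evaluate the two pieces with Gaussian maximum-entropy bounds. For the first piece the crucial point --- and the place where adaptation surfaces in the outer bound --- is that full adaptation permits an arbitrary joint input law $p(x_1,x_3)$, so $X_1,X_3$ may be correlated; maximizing differential entropy under the unit-power covariance constraint, the worst case is maximal positive correlation, giving received power ${\tt SNR}+{\tt INR}+2\sqrt{{\tt SNR}\times{\tt INR}}=(\sqrt{{\tt SNR}}+\sqrt{{\tt INR}})^2$ and hence the first term $\tfrac12\log(1+{\tt SNR}+{\tt INR}+2\sqrt{{\tt SNR}\times{\tt INR}})$. For the second piece I would treat the residual interference $g_{14}X_1$ as additional Gaussian noise of power ${\tt INR}$ (again invoking that the Gaussian maximizes entropy under a covariance constraint), so that the desired signal of power ${\tt SNR}$ is seen above a floor of $1+{\tt INR}$, producing $\tfrac12\log(1+\tfrac{{\tt SNR}}{1+{\tt INR}})$. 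Summing the two pieces and dividing by $2$ gives the claimed bound on $R_{sym}$.

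The main obstacle is precisely this full-adaptation coupling: unlike the deterministic proof of Theorem \ref{thm:intf}, where inputs are reconstructed exactly from messages and the negative entropy term is made to vanish, here the inputs depend on noisy past outputs and so $X_1,X_3$ are neither independent nor jointly Gaussian. I therefore cannot ``reconstruct'' them, and must instead (i) argue that the negative term $H(Y_4^n,Y_2^n\mid M_{12},M_{34},M_{21},M_{43})$ is, after self-interference cancellation, lower-bounded by the appropriate noise entropy so that it subtracts off correctly, and (ii) justify replacing the true, possibly non-Gaussian and correlated, input distribution by the worst-case Gaussian with maximal cross-correlation. Making this replacement hold \emph{uniformly} over all admissible adaptive encoders --- so that the single cross term $2\sqrt{{\tt SNR}\times{\tt INR}}$ captures the entire benefit of adaptation --- is the delicate accounting step, and is what distinguishes this Gaussian converse from its deterministic counterpart.
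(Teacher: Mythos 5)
Your skeleton (Fano's inequality, the asymmetric genie $Y_2^n$ handed to the decoder of $M_{34}$, chain-rule cancellation, self-interference removal at nodes 2 and 4, Gaussian maximum-entropy evaluation) matches the paper's, but you are missing the one idea that makes the \emph{full}-adaptation converse go through: the paper conditions every term on an additional noise genie $Z_1^n$. With $(M_{12},M_{21},M_{43},Y_2^n,Y_4^{i-1},Z_1^n)$ one can iteratively rebuild $Y_1^{i-1}$ and hence $X_1^i$ --- this is exactly where full adaptation bites, since $X_{1,i}=f_1(M_{12},Y_1^{i-1})$ depends on the noise $Z_1^{i-1}$, so messages and channel outputs alone do not determine it. Once $X_1^i$ is reconstructible, $g_{14}X_{1,i}$ is stripped out of $Y_{4,i}$ and the second piece collapses to $H(g_{34}X_{3,i}+Z_{4,i}\mid g_{32}X_{3,i}+Z_{2,i},X_{4,i})-H(Z_{4,i})$, which involves only $X_3$: it decouples from the MAC-like piece, is maximized at $\lambda_{34}=0$, and yields $\log\left(1+\frac{{\tt SNR}}{1+{\tt INR}}\right)$. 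In particular, the $\frac{{\tt SNR}}{1+{\tt INR}}$ term does \emph{not} come from ``treating $g_{14}X_1$ as noise'' --- that step would require $X_1\perp X_3$, which full adaptation does not grant and which contradicts the maximal correlation $\lambda_{13}=1$ you invoke in the first piece; it comes from the genie's noisy look $g_{32}X_3+Z_2$ at the \emph{same} signal $X_3$.

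Neither of your two proposed repairs closes this hole. (i) Lower-bounding the negative term $H(Y_2^n,Y_4^n\mid \mbox{all messages})$ by the noise entropies is valid, but then the second piece retains both inputs, $H(g_{14}X_{1}+g_{34}X_{3}+Z_{4}\mid g_{12}X_{1}+g_{32}X_{3}+Z_{2})-H(Z_4)$, and the two pieces stay coupled through $\lambda_{13}=\rho$. The resulting per-user bound, $\max_{\rho}\frac{1}{2}\log\bigl[\bigl((1-\rho)(\sqrt{{\tt SNR}}-\sqrt{{\tt INR}})^2+1\bigr)\bigl((1+\rho)(\sqrt{{\tt SNR}}+\sqrt{{\tt INR}})^2+1\bigr)\bigr]$, is strictly larger than \eqref{R_strong_E} in weak interference: for ${\tt SNR}=100$, ${\tt INR}=1$ already $\rho=0$ gives $\frac{1}{2}\log(82\cdot 122)$ versus the theorem's $\frac{1}{2}\log(51\cdot 122)$, so this route proves a weaker statement than the one claimed. (ii) The ETW-style negative term $-H(g_{14}X_1+Z_4)$ in your ``private-rate'' piece is not derivable from your decomposition: the negative term you actually have is $-H(Y_2^n,Y_4^n\mid \mbox{all messages})$, and under full adaptation $X_1^n$ is still random given the messages, so every legitimate further conditioning yields entropies of $g_{14}X_{1,i}+Z_{4,i}$ that are \emph{smaller} than the unconditional one --- the inequality runs the wrong way for the subtraction you need. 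That genie structure is exactly what the paper uses for the \emph{partial}-adaptation bound \eqref{R_weak_E}, where Lemma \ref{lemma:partialG} lets inputs be rebuilt from messages alone, and is precisely why that bound is not claimed under full adaptation.
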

\begin{proof}
It is sufficient to consider $R_{12}+R_{34}$ due to symmetry. This bound is inspired by the corresponding sum-rate bound in the linear deterministic model, i.e., we add asymmetric genie $Y_2^n$ at node 4 and this resembles the bounding technique used by Suh and Tse for the interference channel with feedback \cite{Changho2010}. We also note that we could have equivalently provided node 4 with the genie $g_{32}X_3^n+Z_2^n$ instead of $Y_2^n$ (in addition to $M_{12},  M_{21}, M_{43}$ and $Z_1^n$) which more resembles the type of genie seen in ICs and ICs with feedback. We have given $Y_2^n$ as it is then easier to see how node 4 may create $X_2^n$ based on $M_{21}$ and $Y_2^n$, and the bounds work out to the same. 
Notice the genie $Z_1^n$ in the conditioning of both terms as well which is not seen in the feedback bounds \cite{Changho2010}; this is needed in order to, together with the genie $M_{12}, M_{21}, M_{43}, Y_2^n$, be able to create $X_1^n$ at node 4 (essentially, to create $Y_1^n$ to create $X_1^n$). 
\begin{align}
&n(R_{12}+R_{34}-\epsilon) \notag\\
&\leq I(M_{12};Y_2^n|M_{21},M_{43},Z_1^n)+I(M_{34};Y_4^n,Y_2^n|M_{12},M_{21},M_{43},Z_1^n)\notag\\
&\leq I(M_{12};Y_2^n|M_{21},M_{43},Z_1^n)+I(M_{34};Y_2^n|M_{21},M_{12},M_{43},Z_1^n)+H(Y_4^n|M_{21},M_{12},M_{43},Y_2^n,Z_1^n)-H(Z_4^n)\notag\\
&\overset{(a)}{=}I(M_{12};Y_2^n|M_{21},M_{43},Z_1^n)+I(M_{34};Y_2^n|M_{21},M_{12},M_{43},Z_1^n)\notag\\
&+\sum_{i=1}^n[H(g_{34}X_{3,i}+Z_{4,i}|M_{21},M_{12},M_{43},Y_4^{i-1},X_4^i,Y_2^n,X_2^n,Z_1^n,X_1^i)]-H(Z_4^n)\notag\\
&\overset{(b)}{\leq} \sum_{i=1}^n [H(Y_{2,i}|Y_2^{i-1},M_{21},X_{2,i})-H(Y_{2,i}|Y_2^{i-1},M_{12},M_{21},M_{43},Z_1^n)\notag\\
& \ \ +H(Y_{2,i}|Y_2^{i-1},M_{12},M_{21},M_{43},Z_1^n)-H(Z_{2,i})+H(g_{34}X_{3,i}+Z_{4,i}|X_{4,i}, g_{32}X_{3,i}+Z_{2,i},X_1^i,X_2^n)-H(Z_{4,i})]\notag\\
&\overset{(c)}{\leq} \sum_{i=1}^nH(g_{12}X_{1,i}+g_{32}X_{3,i}+Z_{2,i}|X_{2,i})-H(Z_{2,i})+H(g_{34}X_{3,i}+Z_{4,i}|X_{4,i}, g_{32}X_{3,i}+Z_{2,i})-H(Z_{4,i}) \label{(c)}
\end{align}
In step (a), $X_1^i$ in the conditioning of the third term is constructed from ($M_{12},X_2^n,X_4^i,Z_1^n$). In step (b), we used conditioning reduces entropy, the second and the third term cancelled each other and $g_{32}X_{3,i}+Z_{2,i}$ in the conditioning of the fifth term is decoded from $Y_2^n$. In step (c), we only keep the self-interference $X_{4,i}$ and drop the terms  $X_1^i, X_2^n$ in the conditioning of the third term. We could leave these and express the outer bound in terms of correlation coefficients between the inputs (which in general may be correlated due to full adaptation). However, in subsequent steps we will seek to maximize, or outer bound this outer bound to obtain a simple analytical expression, which amounts to setting certain correlation coefficients to $0$, or equivalently, dropping the   terms  $X_1^i, X_2^n$ in the conditioning. 
Further evaluation yields \eqref{R_strong_E}, for details please refer to Appendix \ref{Gsum}.

\end{proof}

\begin{remark} 
{\it Sum-rate bound:} Note that the final, evaluated symmetric, normalized sum-rate bound in \eqref{R_strong_E} has the same form as the IC with perfect output feedback outer bound \cite[upper bound on (7)]{Changho2010}, though they are arrived at using different genies (though similar in many senses as mentioned above). In both channel models, inputs may be arbitrarily correlated as no additional arguments for restricting the input distributions have been made, leading to similar bounding techniques using correlation coefficients. 

\end{remark}

\bigskip

\begin{theorem} {\it Outer bound: partial adaptation.}
For the Gaussian two-way IC under partial adaptation \eqref{eq:p1} -- \eqref{eq:p2}, in addition to the bounds in Theorem \ref{thm:outer-full}, we may also conclude that any achievable rates ($R_{12},R_{21},R_{34},R_{43}$), and $R_{sym\rightarrow} = \frac{R_{12}+R_{34}}{2}$ and $R_{sym\leftarrow} = \frac{R_{21}+R_{43}}{2}$ must satisfy, 
\begin{align}
&R_{12}\leq \log (1+{\tt SNR}_{12})\label{partial_single_rate1}\\
&R_{21}\leq \log (1+{\tt SNR}_{21})\\
&R_{34}\leq \log (1+{\tt SNR}_{34})\\
&R_{43}\leq \log (1+{\tt SNR}_{43})\label{partial_singe-rate4}
\end{align}
\begin{align}
R_{sym\rightarrow} 
 & \leq \log \left( 1+{\tt INR}+{\tt SNR}-\frac{{\tt INR}\times {\tt SNR}}{1+{\tt INR}}\right) \label{R_weak_E} 
 \end{align}
 \begin{equation}   R_{sym\leftarrow} \leq \left\{ \begin{array}{l}
\log \left(1+{\tt INR}+\frac{{\tt SNR}}{{\tt INR}}\right), \ \ \mbox{if} \ {\tt SNR}\leq {\tt INR}^3\\
\log \left(1+\frac{(\sqrt{{\tt SNR}}+\sqrt{{\tt INR}})^2}{1+{\tt INR}}\right), \ \ \mbox{if} \ {\tt SNR}>{\tt INR}^3
 \end{array}\right. \label{R_weak_BE}
\end{equation}
\label{thm:outer-partial}
\end{theorem}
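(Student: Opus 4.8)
The plan is to establish all of \eqref{partial_single_rate1}--\eqref{R_weak_BE} by the genie-aided converse technique already used in the linear deterministic partial-adaptation bound of Theorem \ref{thm:intf} and the Gaussian full-adaptation bound of Theorem \ref{thm:outer-full}, with the partial-adaptation structure of Lemma \ref{lemma:partialG} supplying the key fact that the restricted inputs $X_1^n,X_3^n$ are functions of $M_{12},M_{34}$ alone (hence mutually independent), while the adaptive inputs $X_2^n,X_4^n$ are deterministic functions of the messages and the noise. As in the previous converse, every step that brings a node's own signal into the conditioning exploits the additive structure to cancel self-interference, and every genie is chosen so that, after the adaptive inputs are reconstructed and removed, only entropies of the inputs travelling in a single direction survive. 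The single-rate bounds are the easiest: for $R_{12}$ I would start from Fano with $n(R_{12}-\epsilon)\le I(M_{12};Y_2^n|M_{21},M_{34})$, note that conditioning on $M_{34}$ lets receiver $2$ reconstruct the cross term $g_{32}X_3^n$ while its own $X_2^n$ is recovered causally from $M_{21}$ and $Y_2^{i-1}$, so that after subtracting both the channel reduces to a point-to-point complex Gaussian link and yields $R_{12}\le\log(1+{\tt SNR}_{12})$; the other three follow by symmetry.

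For the forward sum-rate \eqref{R_weak_E} I would use the Costa--El Gamal/Etkin--Tse--Wang genie \cite{elgamal_det_IC,etkin_tse_wang} that hands receiver $2$ a noisy copy of the interference $g_{14}X_1^n$ that its desired signal creates at receiver $4$ (and symmetrically for $M_{34}$), while also supplying the backward messages $M_{21},M_{43}$ so that $X_2^n,X_4^n$ can be built from Lemma \ref{lemma:partialG} and stripped from the conditioning. After the standard cancellation of the cross mutual-information terms, this leaves a single-letter sum of conditional entropies in $X_1,X_3$ only. The crucial point is that nodes $1$ and $3$ are restricted, so $X_1\perp X_3$, and the Gaussian maximization therefore carries no coherent-combining $2\sqrt{{\tt SNR}\,{\tt INR}}$ term; it collapses to \eqref{R_weak_E}, which is just $\log\!\big(1+{\tt INR}+{\tt SNR}/(1+{\tt INR})\big)$.

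The backward sum-rate \eqref{R_weak_BE} is where I expect the real difficulty, and it is the reason the statement splits on ${\tt SNR}\le{\tt INR}^3$. In the $\leftarrow$ direction the active transmitters are the adaptive nodes $2$ and $4$, so, unlike the forward case, their inputs $X_2^n,X_4^n$ may be correlated with one another through their common dependence on $M_{12},M_{34}$ in Lemma \ref{lemma:partialG}, and this correlation cannot simply be set to zero. I would again deploy the asymmetric genie giving receiver $1$ a noisy copy of the interference $g_{23}X_2^n$ created at receiver $3$, together with the forward messages $M_{12},M_{34}$ so that $X_1^n,X_3^n$ can be reconstructed and removed, and then bound the surviving conditional entropies of the correlated adaptive inputs. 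The threshold ${\tt SNR}={\tt INR}^3$ (equivalently $\alpha=1/3$ when ${\tt INR}={\tt SNR}^{\alpha}$) is where two different ways of bounding these residual entropies cross over: on the ${\tt SNR}\le{\tt INR}^3$ side one gets $\log(1+{\tt INR}+{\tt SNR}/{\tt INR})$, while on the ${\tt SNR}>{\tt INR}^3$ side the maximally-correlated Gaussian evaluation, which adaptation does permit here, produces the coherent-combining expression $\log\!\big(1+(\sqrt{{\tt SNR}}+\sqrt{{\tt INR}})^2/(1+{\tt INR})\big)$ — a direct reflection of the contrast with the forward bound, where correlation was forbidden. The main obstacle is thus controlling the correlation among the adaptive inputs tightly enough to obtain a computable single-letter bound in each regime and verifying that each expression genuinely dominates on its stated side of the threshold; the remainder is the routine Gaussian maximization already carried out for Theorem \ref{thm:outer-full} in Appendix \ref{Gsum}, and the outer bound will coincide in form with results for the IC with feedback \cite{Changho2010}.
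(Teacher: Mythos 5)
Your proposal follows essentially the same route as the paper's proof: the same Etkin--Tse--Wang-style noisy genies ($g_{14}X_1^n+Z_4^n$ and $g_{32}X_3^n+Z_2^n$ for the $\rightarrow$ direction, $g_{23}X_2^n+Z_3^n$ and $g_{41}X_4^n+Z_1^n$ for the $\leftarrow$ direction) together with the opposite-direction messages, Lemma \ref{lemma:partialG} to reconstruct and strip the adaptive inputs, independence of $X_1$ and $X_3$ to eliminate any coherent term in \eqref{R_weak_E}, and a maximization over the residual correlation of the adaptive inputs $X_2,X_4$ to produce the two-regime bound \eqref{R_weak_BE}.

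Two places where your plan is looser than the actual proof deserve mention. First, your claim that the backward single-rate bounds ``follow by symmetry'' from the $R_{12}$ argument is not correct as stated: partial adaptation destroys the $\rightarrow$/$\leftarrow$ symmetry. In the $R_{21}$ bound the interferer at receiver 1 is the \emph{adaptive} node 4, so conditioning on $M_{34},M_{43}$ alone does not allow $X_4^n$ to be reconstructed and cancelled; the paper's converse instead starts from $I(M_{21};Y_1^n|M_{12},M_{43},M_{34},Z_4^{n-1})$, i.e. it additionally hands over the noise $Z_4^{n-1}$ as a genie so that $X_{4,i}=f_6(M_{43},M_{34},M_{12},Z_4^{i-1})$ (Lemma \ref{lemma:partialG}) can be constructed. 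This is precisely the ``messages plus noise'' structure you invoke in your preamble, so the fix is within your stated toolkit, but the step as written would fail. Second, for \eqref{R_weak_BE} the threshold is not where ``two different ways of bounding cross over'': the paper derives a \emph{single} bound, \eqref{lambda}, parametrized by $\lambda_{24}=E[X_2X_4^*]$, and maximizes it over $|\lambda_{24}|\in[0,1]$ and the phase $\theta$. The stationary point is at $\theta=0$, $|\lambda_{24}|=\sqrt{{\tt SNR}\times{\tt INR}}/{\tt INR}^2$, which is feasible exactly when ${\tt SNR}\leq{\tt INR}^3$ (yielding the first expression), while for ${\tt SNR}>{\tt INR}^3$ the maximum sits at the boundary $|\lambda_{24}|=1$ (yielding the coherent expression). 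Your description gets the right endpoint behaviour, but the optimization that actually generates the threshold is left unproven, and it is the only genuinely new computation in this converse beyond what is already done in Appendix \ref{Gsum}.
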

\begin{proof}
For the single-rate bounds, it is sufficient to show the first two due to symmetry. Notice that we must treat the $\rightarrow$ and $\leftarrow$ directions separately however due to the asymmetry of the partial adaptation problem definition.
\begin{align*}
n(R_{12}-\epsilon)&\leq I(M_{12};Y_2^n|M_{21},M_{34})\\
&\leq H(Y_2^n|M_{21},M_{34})-H(Y_2^n|M_{21},M_{34},M_{12},X_1^n,X_2^n,X_3^n)\\
&\overset{(a)}{\leq} \sum_{i=1}^n [H(Y_{2,i}|Y_2^{i-1},M_{21},X_{2,i},M_{34},X_{3,i})-H(Z_{2,i})]\\
&\leq \sum_{i=1}^n [H(g_{12}X_{1,i}+Z_{2,i})-H(Z_{2,i})]\\
&\leq \sum_{i=1}^n [\log (1+{\tt SNR}_{12})] \\
 n(R_{21}-\epsilon)&\leq I(M_{21};Y_1^n|M_{12},M_{43},M_{34},Z_4^{n-1})\\
& \leq H(Y_1^n|M_{12},M_{34},M_{43},Z_4^{n-1})- H(Y_1^n|M_{12},M_{34},M_{43},Z_4^{n-1},M_{21},X_1^n,X_2^n,X_4^n)\\
&\overset{(b)}{\leq} \sum_{i=1}^n [H(Y_{1,i}|M_{12},M_{34},M_{43},Z_4^{n-1},Y_1^{i-1},X_{1,i},X_{4,i})-H(Z_{1,i})]\\
&\leq \sum_{i=1}^n [H(g_{21}X_{2,i}+Z_{1,i})-H(Z_{1,i})]\\
&\leq \sum_{i=1}^n [\log (1+{\tt SNR}_{21})]
\end{align*}
where (a) follows from the definition of partial adaptation and (b) follows by the same reason, as well as Lemma \ref{lemma:partialG}.

%

Next, we consider the sum-rate bounds \eqref{R_weak_E} and \eqref{R_weak_BE}, which are inspired by the techniques used by Etkin, Tse and Wang for the interference channel \cite{etkin_tse_wang}.
For the $\rightarrow$ direction of the symmetric rate, 
\begin{align}
&n(R_{12}+R_{34}-\epsilon)\leq I(M_{12};Y_2^n,g_{14}X_1^n+Z_4^n,M_{21},M_{43})+I(M_{34};Y_4^n,g_{32}X_3^n+Z_2^n,M_{21},M_{43})\notag \\
& \overset{(a)}{=} H(Y_2^n|g_{14}X_1^n+Z_4^n,M_{43},M_{21})+H(g_{14}X_1^n+Z_4^n|M_{43},M_{21})-H(Y_2^n,g_{14}X_1^n+Z_4^n|M_{12},M_{21},M_{43})\notag \\
& +H(Y_4^n|g_{32}X_3^n+Z_2^n,M_{43},M_{21})+H(g_{32}X_3^n+Z_2^n|M_{43},M_{21})-H(Y_4^n,g_{32}X_3^n+Z_2^n|M_{34},M_{21},M_{43})\notag \\
& \overset{(b)}{=}H(Y_2^n|g_{14}X_1^n+Z_4^n,M_{43},M_{21})+\sum_{i=1}^n[H(g_{14}X_{1,i}+Z_{4,i}|g_{14}X_1^{i-1}+Z_4^{i-1},M_{43},M_{21},M_{34})\notag \\
&-H(Y_{2,i},g_{14}X_{1,i}+Z_{4,i}|Y_2^{i-1},g_{14}X_1^{i-1}+Z_4^{i-1},M_{12},M_{21},M_{43},X_2^i,X_1^i)]\notag \\
&+H(Y_4^n|g_{32}X_3^n+Z_2^n,M_{43},M_{21})+\sum_{i=1}^n[H(g_{32}X_{3,i}+Z_{2,i}|g_{32}X_3^{i-1}+Z_2^{i-1},M_{43},M_{21},M_{12})\notag \\
&-H(Y_{4,i},g_{32}X_{3,i}+Z_{2,i}|Y_4^{i-1},g_{32}X_3^{i-1}+Z_2^{i-1},M_{34},M_{21},M_{43},X_4^i,X_3^i)]\notag 
\end{align}
\begin{align}
& \overset{(c)}{=}H(Y_2^n|g_{14}X_1^n+Z_4^n,M_{43},M_{21})+\sum_{i=1}^n[H(g_{14}X_{1,i}+Z_{4,i}|g_{14}X_1^{i-1}+Z_4^{i-1},M_{43},M_{21},M_{34},X_3^i,Y_4^{i-1},X_4^i,\notag \\
& \ \ g_{32}X_3^{i-1}+Z_2^{i-1})-H(g_{32}X_{3,i}+Z_{2,i},Z_{4,i}|Y_2^{i-1},g_{14}X_1^{i-1}+Z_4^{i-1},M_{12},M_{21},M_{43},X_2^i,X_1^i,g_{32}X_3^{i-1}+Z_2^{i-1})]\notag \\
&+H(Y_4^n|g_{32}X_3^n+Z_2^n,M_{43},M_{21})+\sum_{i=1}^n[H(g_{32}X_{3,i}+Z_{2,i}|g_{32}X_3^{i-1}+Z_2^{i-1},M_{43},M_{21},M_{12},X_1^i,Y_2^{i-1},X_2^i,\notag \\
& \ \ g_{14}X_1^{i-1}+Z_4^{i-1})-H(g_{14}X_{1,i}+Z_{4,i},Z_{2,i}|Y_4^{i-1},g_{32}X_3^{i-1}+Z_2^{i-1},M_{34},M_{21},M_{43},X_4^i,X_3^i,g_{14}X_1^{i-1}+Z_4^{i-1})]\notag \\
& \overset{(d)}{=}\sum_{i=1}^n[H(Y_{2,i}|Y_2^{i-1},g_{14}X_1^n+Z_4^n,M_{43},M_{21})-H(Z_{2,i})+H(Y_{4,i}|Y_4^{i-1},g_{32}X_3^n+Z_2^n,M_{43},M_{21})-H(Z_{4,i})]\notag \\
& \leq \sum_{i=1}^n [H(g_{12}X_{1,i}+g_{32}X_{3,i}+Z_{2,i}|g_{14}X_{1,i}+Z_{4,i},X_{2,i})-H(Z_{2,i})\notag \\
& \ \ +H(g_{34}X_{3,i}+g_{14}X_{1,i}+Z_{4,i}|g_{32}X_{3,i}+Z_{2,i},X_{4,i})-H(Z_{4,i})] \label{last}
\end{align}
In the first step, we have given $(g_{14}X_1^n+Z_4^n)$ and $(g_{32}X_3^n+Z_2^n)$ as side information. 
Step (a) follows from the independence of the messages. In step (b), the 2nd and 5th terms follow since $g_{14}X_{1,i}$ and $g_{32}X_{3,i}$ are functions only of $M_{12}$ and $M_{34}$, and the 3rd and 6th terms follow from the definition of partial adaptation. For (c), in the conditioning of the 2nd term, we are able to add $(X_3^i,g_{32}X_3^{i-1}+Z_2^{i-1})$ due to partial adaptation constraints,  and $(Y_4^{i-1}, X_4^i)$ are constructed from $(g_{14}X_1^{i-1}+Z_4^{i-1},M_{43},X_3^i)$. The 5th term follows similarly. In step (d), $-H(Z_{2,i})$ and $-H(Z_{4,i})$ are obtained from a portion of the 6th and 3rd terms in (c) respectively using the chain rule (noises are independent from other terms), and the remainder (chain rule) of the 6th and 3rd terms are cancelled by the 2nd and 5th terms respectively.


To obtain \eqref{R_weak_E} we continue to outer bound \eqref{last} in terms of ${\tt SNR}$ and ${\tt INR}$, using the fact that Gaussians maximize entropy subject to variance constraints. Specifically, one may intuitively see that, if one defines $\lambda_{jk} = E[X_j X_k^*]$, that one may express \eqref{last} in terms of $\lambda_{12}, \lambda_{13}, \lambda_{14}, \lambda_{34}, \lambda_{23}$. One also notices from the conditional entropy expression in \eqref{last} that taking $ \lambda_{14} = \lambda_{23} = \lambda_{12}  = \lambda_{34} = 0$, and since $\lambda_{13} = 0$ (naturally, by partial adaptation) will maximize the outer bound. This may alternatively be worked out by calculating the conditional covariance matrices directly (as we will show for the next bound on $R_{\leftarrow}$). In this case then,  for each $i$, we may bound
\begin{align*}
H(g_{12}X_{1}&+g_{32}X_{3}+Z_{2}|g_{14}X_{1}+Z_{4},X_{2}) - H(Z_2)  \leq H(g_{12}X_{1}+g_{32}X_{3}+Z_{2}|g_{14}X_{1}+Z_{4}) - H(Z_2) \\
&\leq \log 2\pi e(\mbox{Var}(g_{12}X_{1}+g_{32}X_{3}+Z_{2}|g_{14}X_{1}+Z_{4}))-\log 2\pi e(\mbox{Var}(Z_2)) \\
& \leq \log\left(1+{\tt SNR}+{\tt INR} - \frac{{\tt SNR}\times {\tt INR}}{1+{\tt INR}}\right),
\end{align*}
which together with the symmetric expressions for the third and fourth terms in \eqref{last} yield \eqref{R_weak_E}.

\bigskip
For the $\leftarrow$ direction, we are similarly able to obtain:
\begin{align}
&n(R_{21}+R_{43}-\epsilon)\leq I(M_{21};Y_1^n,g_{23}X_2^n+Z_3^n,M_{12},M_{34})+I(M_{43};Y_3^n,g_{41}X_4^n+Z_1^n,M_{12},M_{34})\notag \\
& = H(Y_1^n|g_{23}X_2^n+Z_3^n,M_{34},M_{12})+H(g_{23}X_2^n+Z_3^n|M_{34},M_{12})-H(Y_1^n,g_{23}X_2^n+Z_3^n|M_{21},M_{12},M_{34})\notag \\
& +H(Y_3^n|g_{41}X_4^n+Z_1^n,M_{34},M_{12})+H(g_{41}X_4^n+Z_1^n|M_{34},M_{12})-H(Y_3^n,g_{41}X_4^n+Z_1^n|M_{43},M_{12},M_{34})\notag 
\end{align}
\begin{align}
& \overset{(a)}{=}H(Y_1^n|g_{23}X_2^n+Z_3^n,M_{34},M_{12})+\sum_{i=1}^n[H(g_{23}X_{2,i}+Z_{3,i}|g_{23}X_2^{i-1}+Z_3^{i-1},M_{34},M_{12},M_{43})\notag \\
&-H(Y_{1,i},g_{23}X_{2,i}+Z_{3,i}|Y_1^{i-1},g_{23}X_2^{i-1}+Z_3^{i-1},M_{21},M_{12},M_{34},X_1^i,Z_2^{i-1},X_2^i)]\notag \\
&+H(Y_3^n|g_{41}X_4^n+Z_1^n,M_{34},M_{12})+\sum_{i=1}^n[H(g_{41}X_{4,i}+Z_{1,i}|g_{41}X_4^{i-1}+Z_1^{i-1},M_{34},M_{12},M_{21})\notag \\
&-H(Y_{3,i},g_{41}X_{4,i}+Z_{1,i}|Y_3^{i-1},g_{41}X_4^{i-1}+Z_1^{i-1},M_{43},M_{12},M_{34},X_3^i,Z_4^{i-1},X_4^i)]\notag \\
& =H(Y_1^n|g_{23}X_2^n+Z_3^n,M_{34},M_{12})+\sum_{i=1}^n[H(g_{23}X_{2,i}+Z_{3,i}|g_{23}X_2^{i-1}+Z_3^{i-1},M_{34},M_{12},M_{43},Z_4^{i-1},g_{41}X_4^{i-1}+Z_1^{i-1},\notag \\
& \ \ X_3^i,X_4^i,Y_3^{i-1})-H(g_{41}X_{4,i}+Z_{1,i},Z_{3,i}|Y_1^{i-1},g_{23}X_2^{i-1}+Z_3^{i-1},M_{21},M_{12},M_{34},Z_2^{i-1},X_1^i,X_2^i,g_{41}X_4^{i-1}+Z_1^{i-1})]\notag \\
&+H(Y_3^n|g_{41}X_4^n+Z_1^n,M_{34},M_{12})+\sum_{i=1}^n[H(g_{41}X_{4,i}+Z_{1,i}|g_{41}X_4^{i-1}+Z_1^{i-1},M_{34},M_{12},M_{21},Z_2^{i-1},g_{23}X_2^{i-1}+Z_3^{i-1},\notag \\
& \ \ X_1^i,X_2^i,Y_1^{i-1})-H(g_{23}X_{2,i}+Z_{3,i},Z_{1,i}|Y_3^{i-1},g_{41}X_4^{i-1}+Z_1^{i-1},M_{43},M_{12},M_{34},Z_4^{i-1},X_3^i,X_4^i,g_{23}X_2^{i-1}+Z_3^{i-1})]\notag \\
& = \sum_{i=1}^n [H(Y_{1,i}|Y_1^{i-1},g_{23}X_2^n+Z_3^n,M_{34},M_{12})-H(Z_{1,i})+H(Y_{3,i}|Y_3^{i-1},g_{41}X_4^n+Z_1^n,M_{34},M_{12})-H(Z_{3,i})]\notag \\
&\leq \sum_{i=1}^n [H(g_{21}X_{2,i}+g_{41}X_{4,i}+Z_{1,i}|g_{23}X_{2,i}+Z_{3,i},X_{1,i})-H(Z_{1,i})\notag \\
& \ \ +H(g_{43}X_{4,i}+g_{23}X_{2,i}+Z_{3,i}|g_{41}X_{4,i}+Z_{1,i},X_{3,i})-H(Z_{3,i})]\label{last2}
\end{align}
The slight differences compared to the previous outer bound are: in step (a), we add $M_{43}$ in the conditioning part of the second term since it is independent of $X_{2,i}$ according to Lemma \ref{lemma:partialG}; in the conditioning part of the third term we add $Z_2^{i-1}$ (independent), which, together with $(M_{12},M_{21},M_{34})$ allow us to construct $X_2^i$. Similar arguments can be made for the fifth and sixth term. 

We again proceed to outer bound \eqref{last2} to obtain \eqref{R_weak_BE}. 
It is sufficient to evaluate the first two terms in \eqref{last2} due to symmetry. Once again, we could outer bound \eqref{last2} in terms of the conditional covariance matrices and then proceed to select values of the correlation coefficients (complex) $\lambda_{ij} : = E[X_i X_j^*]$  which maximize this outer bound. A more intuitive method is to note that  the conditional entropies in \eqref{last2} will be maximized if $\lambda_{14} = \lambda_{32}=0$, and $\lambda_{12}=\lambda_{34}=0$ (similar to \eqref{r12r34}), which may also be obtained by dropping $X_{1,i}, X_{3,i}$ in the conditioning terms. At that point, we are only left with the coefficient $\lambda_{24} = E[X_2X_4^*]$, (which in contrast to the $\rightarrow$ bound is not automatically $0$ due to the possible adaptation in the $\leftarrow$ direction. Furthermore, setting it to zero cannot be argued intuitively as we see a tradeoff.) yielding the following bound for $R_{sym \leftarrow} = \frac{R_{21}+R_{43}}{2}$ by symmetry: 
\begin{align}
R_{sym\leftarrow}&\leq H(g_{21}X_{2}+g_{41}X_{4}+Z_{1}|g_{23}X_{2}+Z_{3},X_{1})-H(Z_{1})\notag \\
&\leq H(g_{21}X_{2}+g_{41}X_{4}+Z_{1}|g_{23}X_{2}+Z_{3})-H(Z_{1})\notag \\
&\leq \log 2\pi e\left(\mbox{Var}(g_{21}X_{2}+g_{41}X_{4}+Z_{1}|g_{23}X_{2}+Z_{3})\right)-\log 2\pi e (\mbox{Var}(Z_1))\notag \\
&\leq \log \left( 1+{\tt INR}+{\tt SNR}+2|\lambda_{24}|\cos \theta\sqrt{{\tt SNR}\times {\tt INR}}-\frac{{\tt SNR}\times {\tt INR} + {\tt INR}^2|\lambda_{24}|^2 + 2\sqrt{{\tt SNR}}{\tt INR}^{3/2}|\lambda_{24}|\cos\theta}{1+{\tt INR}}\right).\label{lambda}
\end{align}
where $\theta$ is the angle of $g_{21}g_{41}^*\lambda_{24}$. To maximize \eqref{lambda}, we take the partials of the expression with respect to $|\lambda_{24}|$ and $\theta$ and set these to 0. For these to equal 0 for all ${\tt SNR}$ and ${\tt INR}$ we must have $\theta = 0$ and  $|\lambda_{24}|=\frac{\sqrt{{\tt SNR}\times {\tt INR}}}{{\tt INR}^2}$ (discussed next).  
Note that we must constrain $|\lambda_{24}|\in [0,1]$. In the interval $|\lambda_{24}| \in \left[0, \frac{\sqrt{\tt SNR\times INR}}{{\tt INR}^2}\right]$ one may verify that the function is increasing in $|\lambda_{24}|$. Thus, if $ \frac{\sqrt{\tt SNR\times INR}}{{\tt INR}^2} \leq 1$, $(|\lambda_{24}| =  \frac{\sqrt{\tt SNR\times INR}}{{\tt INR}^2}, \theta =0)$ maximizes \eqref{lambda}; this happens if ${\tt SNR}\leq {\tt INR}^3$, and yields the first bound in \eqref{R_weak_BE}. Otherwise, for ${\tt SNR}>{\tt INR}^3$, $(\lambda_{24} = 1, \theta=0)$ maximizes \eqref{lambda}, yielding the second equation in \eqref{R_weak_BE}. 
\end{proof}

\begin{remark} 
 The sum-rate bound for $R_{sym\rightarrow}$ of \eqref{R_weak_E} has the same form as Etkin, Tse and Wang's outer bound for one-way Gaussian interference channel \cite[(12)]{etkin_tse_wang} which is useful in weak interference.  The sum-rate bound for $R_{sym\leftarrow}$ is quite different, and we note that it may be verified that \eqref{R_weak_BE} is always at least as large as \eqref{R_weak_E}, as one might expect given the partial adaptation constraints on nodes in the $\rightarrow$ direction, but none on the nodes in the $\leftarrow$ direction.   
\end{remark}

\bigskip

We next show that these outer bounds, derived for the fully adaptive or partially adaptive models, may be achieved to within a constant gap or capacity by {\it non-adaptive} schemes -- i.e. the simultaneous decoding or the Han and Kobayashi scheme operating in the two directions independently. We break our analysis into three sub-sections: 1) very strong interference, 2) strong interference, and 3) weak interference. The overall finite gap results are summarized in Table \ref{table:gaps}.

\subsection{Very Strong Interference: ${\tt INR}\geq {\tt SNR}(1+{\tt SNR})$}
We first show that a non-adaptive scheme may achieve the capacity for the two-way Gaussian IC under a partially adaptive model in very strong interference. For the symmetric two-way Gaussian IC, define ``very strong interference'' as the class of channels for which ${\tt INR}\geq {\tt SNR}(1+{\tt SNR})$, as in \cite[below equation (21)]{etkin_tse_wang}.
%
It is well known that the capacity region of the one-way Gaussian IC in very strong interference is that of two parallel Gaussian point-to-point channels \cite{Carleial1975}, which may be achieved by having each receiver first decode the interfering signal, treating its own as noise, subtracting off the decoded interference, and decoding its own message. Given that the interference is so strong,  this may be done without a rate penalty. We ask whether the same is true for the two-way Gaussian IC with partial adaptation. The answer is affirmative and the capacity region is given by the following theorem:


%

%




\begin{theorem}
\label{2wIC_very_strong}
The capacity region for the two-way Gaussian interference channel with partial adaptation in very strong interference is the set of rate pairs ($R_{12},R_{21},R_{34},R_{43}$), such that \eqref{partial_single_rate1}--\eqref{partial_singe-rate4} are satisfied. 
\end{theorem}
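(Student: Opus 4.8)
The plan is to sandwich the capacity region between the single-rate outer bounds already in hand and a matching \emph{non-adaptive} inner bound, so that the two coincide exactly in the very strong interference regime. The converse requires no new work: the bounds \eqref{partial_single_rate1}--\eqref{partial_singe-rate4} of Theorem \ref{thm:outer-partial} already hold for any partially adaptive scheme, so every achievable tuple lies in the box $R_{12}\le\log(1+{\tt SNR}_{12})$, $R_{21}\le\log(1+{\tt SNR}_{21})$, $R_{34}\le\log(1+{\tt SNR}_{34})$, $R_{43}\le\log(1+{\tt SNR}_{43})$. It therefore suffices to exhibit a non-adaptive scheme attaining the corner of this box, since for a rectangular (product) constraint set achievability of the corner implies achievability of every point below it.

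For achievability I would first exploit the additive structure to cancel self-interference at every receiver: node 2 subtracts $g_{22}X_2$ from $Y_2$, node 1 subtracts $g_{11}X_1$ from $Y_1$, and likewise at nodes 3 and 4. After this cancellation the four observations decouple into two independent one-way Gaussian interference channels that share no common input --- the $\rightarrow$ channel carried by $(X_1,X_3)$ and seen at receivers 2 and 4, and the $\leftarrow$ channel carried by $(X_2,X_4)$ and seen at receivers 1 and 3. I would then let each transmitter use an i.i.d.\ $\mathcal{CN}(0,1)$ point-to-point codebook at rate $\log(1+{\tt SNR})$, with no adaptation, and have each receiver first decode the interfering codeword while treating its own desired signal as noise, subtract it, and finally decode its intended message interference-free.

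The crux is to verify that this interference-decoding step succeeds at the full point-to-point rate, and this is precisely where the hypothesis enters. Treating the desired signal as noise, the interferer is decodable provided $\log(1+{\tt SNR}) \le \log\!\left(1+\frac{{\tt INR}}{1+{\tt SNR}}\right)$, which rearranges to ${\tt INR}\ge{\tt SNR}(1+{\tt SNR})$, the very definition of very strong interference. Under this condition every receiver strips its interferer without penalty and then achieves $\log(1+{\tt SNR})$ on its direct link, so the corner $(R_{12},R_{21},R_{34},R_{43}) = (\log(1+{\tt SNR}),\log(1+{\tt SNR}),\log(1+{\tt SNR}),\log(1+{\tt SNR}))$ is attained simultaneously in both directions by a non-adaptive scheme. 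Combined with the converse this pins down the region and shows partial (indeed full) adaptation to be useless here.

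The step I expect to be most delicate is not the rate arithmetic but the bookkeeping of the sandwich: I must confirm that the extra sum-rate outer bounds \eqref{R_weak_E}--\eqref{R_weak_BE} of Theorem \ref{thm:outer-partial} (and those of Theorem \ref{thm:outer-full}) do not cut into the box. This is automatic --- since the box corner is achievable and the capacity region is contained in the single-rate box, we obtain $\text{box}\subseteq\text{capacity}\subseteq\text{single-rate box}$, forcing equality and rendering the sum-rate constraints inactive --- but it is worth stating explicitly, alongside the observation that the clean decoupling of the two directions rests entirely on the additivity that lets each node cancel its own transmitted signal.
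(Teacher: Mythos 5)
Your proof is correct and follows essentially the same route as the paper: the converse is exactly the single-rate bounds \eqref{partial_single_rate1}--\eqref{partial_singe-rate4} of Theorem \ref{thm:outer-partial}, and achievability is the same non-adaptive scheme (i.i.d.\ $\mathcal{CN}(0,1)$ codebooks, self-interference cancellation, decode-the-interferer-first treating the desired signal as noise), with your decodability condition $\log(1+{\tt SNR})\leq\log\bigl(1+\frac{{\tt INR}}{1+{\tt SNR}}\bigr)$ being precisely the very strong interference hypothesis ${\tt INR}\geq{\tt SNR}(1+{\tt SNR})$. One caution: your parenthetical claim that this also shows \emph{full} adaptation to be useless is not supported by this argument, since the single-rate converse bounds rest on the partial-adaptation structure (Lemma \ref{lemma:partialG} and the fact that $X_1,X_3$ are functions of their messages only), so they have not been established under full adaptation and the conclusion must remain restricted to partial adaptation, as in the paper.
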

\begin{proof}
Each node may ignore its ability to adapt,  and rather transmit using a ${\cal CN}(0,1)$ Gaussian random code. Each receiver may cancel its own self-interference, and then proceed to first decode the single interfering term before decoding its own message. This standard non-adaptive scheme may achieve the outer bound in \eqref{partial_single_rate1}--\eqref{partial_singe-rate4} in Theorem \ref{thm:outer-partial}.
\end{proof}

Interestingly, the capacity region of the two-way Gaussian interference channel with partial adaptation in very strong interference, is equivalent to the capacity regions of two one-way Gaussian interference channels with very strong interference in parallel and is achieved using a non-adaptive scheme. This allows us to conclude that {\it partial} adaptation is useless in this symmetric and very strong interference regime.

\subsection{Strong Interference: ${\tt SNR}\leq {\tt INR}\leq {\tt SNR}(1+{\tt SNR})$}
In this regime, we are able to show that a non-adaptive scheme may achieve capacity to within a constant gap of any {\it fully} adaptive scheme (in contrast to any {\it partially} adaptive scheme in the last subsection).  A symmetric two-way Gaussian IC,  as in \cite{etkin_tse_wang}, is said to be in ``strong interference'' when ${\tt INR}\geq {\tt SNR}$.

The capacity region of one-way Gaussian interference channel in strong interference is given by \cite{sato_strong}, and for symmetric channels, the capacity region when the interference is strong but not very strong, i.e. ${\tt SNR}\leq {\tt INR}\leq {\tt SNR}(1+{\tt SNR})$,  may be written as 
\begin{align}
R_{sym} = \frac{R_{12}+R_{34}}{2} \leq \frac{1}{2}\log (1+{\tt SNR}+{\tt INR}).\label{R_sato}
\end{align} 
We note that this rate is achievable for the two-way Gaussian IC  by using the simultaneous non-unique decoding scheme for the interference channel in strong interference \cite{ElGamalKim:book, sato_strong, ahlswede1974}) in the $\rightarrow$ and $\leftarrow$ directions, and noting that any self-interference may be canceled. This is a non-adaptive scheme.

We will show that this non-adaptive scheme which achieves \eqref{R_sato} in each direction also achieves to within 1 bit (per user, per direction) of our fully adaptive outer bound \eqref{R_strong_E} in strong but not very strong interference. 

\begin{theorem}
The capacity region for two-way symmetric Gaussian interference channel with full adaptation in strong (but not very strong) interference is within 1 bit to \eqref{R_sato} (per user, per direction)
\end{theorem}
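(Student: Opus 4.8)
The plan is to prove this by directly bounding the difference between the fully-adaptive outer bound \eqref{R_strong_E} and the non-adaptive achievable symmetric rate \eqref{R_sato} throughout the strong (but not very strong) regime ${\tt SNR}\leq {\tt INR}\leq {\tt SNR}(1+{\tt SNR})$. Since the simultaneous non-unique decoding scheme achieves \eqref{R_sato} in each of the $\rightarrow$ and $\leftarrow$ directions (with self-interference cancelled, as noted just before the statement), and since the fully-adaptive outer bound of Theorem \ref{thm:outer-full} applies to $R_{sym}$ in either direction by the symmetry of the model, it will suffice to bound the single scalar gap $\eqref{R_strong_E}-\eqref{R_sato}$ and show it is at most one bit.

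First I would split this gap into two logarithmic pieces, pairing the achievable rate against the first term of the outer bound:
\begin{align*}
\eqref{R_strong_E}-\eqref{R_sato} &= \frac{1}{2}\log\frac{1+{\tt SNR}+{\tt INR}+2\sqrt{{\tt SNR}\times {\tt INR}}}{1+{\tt SNR}+{\tt INR}} + \frac{1}{2}\log\left(1+\frac{{\tt SNR}}{1+{\tt INR}}\right)\\
&= \frac{1}{2}\log\left(1+\frac{2\sqrt{{\tt SNR}\times {\tt INR}}}{1+{\tt SNR}+{\tt INR}}\right)+\frac{1}{2}\log\left(1+\frac{{\tt SNR}}{1+{\tt INR}}\right).
\end{align*}
For the first term I would invoke the arithmetic--geometric mean inequality $2\sqrt{{\tt SNR}\times {\tt INR}}\leq {\tt SNR}+{\tt INR}$, which gives $\frac{2\sqrt{{\tt SNR}\times {\tt INR}}}{1+{\tt SNR}+{\tt INR}}\leq \frac{{\tt SNR}+{\tt INR}}{1+{\tt SNR}+{\tt INR}}\leq 1$, so the first term is at most $\frac{1}{2}\log 2=\frac{1}{2}$. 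For the second term I would use the strong-interference hypothesis ${\tt INR}\geq {\tt SNR}$, which gives $\frac{{\tt SNR}}{1+{\tt INR}}\leq \frac{{\tt SNR}}{1+{\tt SNR}}\leq 1$, so this term is likewise at most $\frac{1}{2}$. Adding the two bounds yields a total gap of at most one bit per user per direction, as claimed.

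There is no substantive obstacle here beyond these two elementary inequalities; the heavy lifting is already done by Theorem \ref{thm:outer-full}, which supplies the adaptive outer bound, and by the achievability of \eqref{R_sato}. The one point I would emphasize is \emph{where} each bound comes from: the estimate on the first term uses only AM--GM and holds for all channel parameters, whereas the estimate on the second term genuinely requires ${\tt INR}\geq {\tt SNR}$. Outside the strong regime the quantity $\frac{1}{2}\log\left(1+{\tt SNR}/(1+{\tt INR})\right)$ can exceed $\frac{1}{2}$, which is precisely why this clean one-bit guarantee is confined to strong interference and why a separate, Han--Kobayashi based argument with a larger constant gap is needed for the weak regime treated subsequently.
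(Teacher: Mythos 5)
Your proposal is correct and follows essentially the same route as the paper's own proof: both bound the gap $\eqref{R_strong_E}-\eqref{R_sato}$ by splitting it into two half-bit pieces, controlling the first via the AM--GM fact $2\sqrt{{\tt SNR}\times{\tt INR}}\leq {\tt SNR}+{\tt INR}$ and the second via the strong-interference hypothesis ${\tt INR}\geq{\tt SNR}$. The only cosmetic difference is that the paper writes the AM--GM step as $1+{\tt SNR}+{\tt INR}+2\sqrt{{\tt SNR}\times{\tt INR}}\leq 2(1+{\tt SNR}+{\tt INR})$ rather than folding it into a ratio, so there is nothing substantive to distinguish the two arguments.
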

\begin{proof}
\begin{align*}
 \eqref{R_strong_E} - \eqref{R_sato}
&= \frac{1}{2} \log (1+{\tt SNR}+{\tt INR}+2\sqrt{{\tt SNR}\times {\tt INR}})+\frac{1}{2}\log \left(1+\frac{{\tt SNR}}{1+{\tt INR}}\right)-\frac{1}{2}\log \left(1+{\tt SNR}+{\tt INR}\right)\\
&\overset{(a)}{\leq} \frac{1}{2} \log 2(1+{\tt SNR}+{\tt INR})+\frac{1}{2}\log \left(1+\frac{{\tt SNR}}{1+{\tt INR}}\right)-\frac{1}{2}\log (1+{\tt SNR}+{\tt INR})\\
&\overset{(b)}{\leq} \frac{1}{2}+\frac{1}{2}\log \left(1+\frac{{\tt INR}}{{\tt INR}}\right)\\
&=1
\end{align*}
In step (a), we use the fact that $1+{\tt SNR}+{\tt INR}+2\sqrt{{\tt SNR}\times {\tt INR}}\leq 2(1+{\tt SNR}+{\tt INR})$. Step (b) follows from the condition of strong interference ${\tt INR}\geq {\tt SNR}$. Since our bound \eqref{R_strong_E} is valid for the symmetric assumptions of full adaptation, we conclude that the non-adaptive schemes' gap to the fully adaptive outer bound for each user, for each direction is at most 1 bit.
\end{proof}




\begin{remark} Note that if we were to evaluate the fully adaptive outer bound of \eqref{R_strong} under partial adaptation constraints instead, i.e., $X_1$ and $X_3$ are only functions of $M_{12}$ and $M_{34}$ respectively,  then we would be able to set $\lambda_{13}$ in \eqref{R_strong} equal to 0, yielding a new outer bound $R_{sym\rightarrow} \leq \frac{1}{2}\log(1+{\tt SNR}+{\tt INR}) + \frac{1}{2}\log\left(1+\frac{{\tt SNR}}{1+{\tt INR}}\right)$. In this case a gap of $\frac{1}{2}$ bit instead of 1 bit may be shown for  $R_{sym \rightarrow}$. However, due to the asymmetry of partial adaptation ($\lambda_{24}$ in general not equal to 0), in the opposite direction,  we would still have a 1 bit gap for $R_{sym\leftarrow}$. 
\end{remark}

\subsection{Weak Interfererence: ${\tt INR}\leq {\tt SNR}$}
We now show that the well known Han and Kobayashi scheme employed in the $\rightarrow$ and $\leftarrow$ directions may achieve to within a constant number of bits of the fully or partially adaptive (depends on the channel regimes, or relative ${\tt SNR}$ and ${\tt INR}$ values) capacity region for the two-way Gaussian IC. 

\begin{theorem}
 A non-adaptive scheme may achieve to within a $2$ bit per user per direction of partially adaptive capacity region for the two-way Gaussian IC in weak interference. In some channel regimes, this non-adaptive scheme also achieves to within a constant gap of any {\it fully} adaptive scheme.
\end{theorem}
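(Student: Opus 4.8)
The plan is to pair a non-adaptive inner bound, obtained by running the Han and Kobayashi scheme of \cite{etkin_tse_wang} independently in the $\rightarrow$ and $\leftarrow$ directions, against the partially adaptive outer bounds \eqref{R_weak_E}--\eqref{R_weak_BE} of Theorem \ref{thm:outer-partial} (and, in suitable sub-regimes, the fully adaptive bound \eqref{R_strong_E} of Theorem \ref{thm:outer-full}). First I would invoke the observation used in all previous converses: by linearity each node may subtract its own self-interference $g_{jj}X_j$, so that once self-interference is removed each direction physically collapses to a one-way symmetric Gaussian IC and the two directions no longer interfere. Hence the non-adaptive per-user symmetric rate $R_{sym}^{\text{HK}}$ achieved by the ETW power split (the private power chosen so the private interference sits at the noise floor, ${\tt INR}\cdot P_{\text{priv}}\approx 1$) is simultaneously available in each direction with no loss, and it suffices to focus on the binding symmetric sum-rate bounds since these dominate the single-rate bounds \eqref{partial_single_rate1}--\eqref{partial_singe-rate4} in weak interference.

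Next I would treat the two directions separately, because partial adaptation breaks the symmetry. For the $\rightarrow$ direction, the outer bound \eqref{R_weak_E} has exactly the same form as Etkin--Tse--Wang's weak-interference outer bound \cite[(12)]{etkin_tse_wang}; the one-way gap analysis of \cite{etkin_tse_wang} therefore transfers directly, yielding a gap of at most one bit per user between $R_{sym\rightarrow}$ in \eqref{R_weak_E} and $R_{sym}^{\text{HK}}$. For the $\leftarrow$ direction I would instead compare $R_{sym}^{\text{HK}}$ against the looser bound \eqref{R_weak_BE}, splitting into the two cases ${\tt SNR}\le{\tt INR}^3$ and ${\tt SNR}>{\tt INR}^3$ that define it. In each case I would crudely upper bound the extra term (the ${\tt SNR}/{\tt INR}$ contribution in the first case, the $(\sqrt{{\tt SNR}}+\sqrt{{\tt INR}})^2/(1+{\tt INR})$ contribution in the second) under the standing weak-interference assumption ${\tt INR}\le{\tt SNR}$, and show that the excess of \eqref{R_weak_BE} over $R_{sym}^{\text{HK}}$ beyond the forward one-bit gap is itself at most one bit, so that the total gap is at most two bits per user per direction.

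Finally, for the fully adaptive claim I would compare $R_{sym}^{\text{HK}}$ against \eqref{R_strong_E}. Since \eqref{R_strong_E} was tailored to the strong regime, it is generally loose for ${\tt INR}\le{\tt SNR}$; I would identify the sub-interval of the weak regime (near the boundary ${\tt INR}\approx{\tt SNR}$) in which the correction term $\tfrac12\log(1+{\tt SNR}/(1+{\tt INR}))$ of \eqref{R_strong_E} stays $O(1)$ and the leading term $\tfrac12\log(1+{\tt SNR}+{\tt INR}+2\sqrt{{\tt SNR}\cdot{\tt INR}})$ matches $R_{sym}^{\text{HK}}$ up to a constant, and conclude a constant-gap statement against the fully adaptive outer bound there.

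The main obstacle will be the $\leftarrow$ direction. Because nodes $2$ and $4$ may adapt while nodes $1$ and $3$ may not, the bound \eqref{R_weak_BE} retains a genuine dependence on the cross-correlation $\lambda_{24}$ and is strictly larger than the forward bound \eqref{R_weak_E}; establishing that this adaptation gain over the non-adaptive rate never costs more than one additional bit in \emph{either} of the two ${\tt SNR}$ versus ${\tt INR}^3$ regimes—so the total stays within two bits—is the delicate part, whereas the forward direction is essentially inherited from \cite{etkin_tse_wang} and the fully adaptive comparison reduces to a short algebraic estimate.
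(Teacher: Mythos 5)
There is a genuine gap in your forward-direction argument, and it is the crux of the theorem. You claim that because \eqref{R_weak_E} has the same form as Etkin--Tse--Wang's weak-interference bound, their one-bit gap analysis ``transfers directly'' to show \eqref{R_weak_E} is within one bit of the HK rate throughout weak interference. This is false: the HK rate \eqref{HK} is the \emph{minimum} of two terms, $R_{HK1}$ and $R_{HK2}$, and ETW's one-bit result works by pairing each of several outer bounds with the HK term it structurally matches; a bound of the form \eqref{R_weak_E} is only within a constant of the HK rate when $R_{HK2}=\log\left(1+{\tt INR}+\frac{{\tt SNR}}{{\tt INR}}\right)-1$ is the active (smaller) term. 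Writing ${\tt INR}={\tt SNR}^{\alpha}$, in the sub-regime where $R_{HK1}$ is active (roughly $2/3\leq\alpha\leq 1$), the inner bound scales like $\left(1-\frac{\alpha}{2}\right)\log{\tt SNR}$ while \eqref{R_weak_E} scales like $\alpha\log{\tt SNR}$, so the difference grows like $\left(\frac{3\alpha}{2}-1\right)\log{\tt SNR}$, which is unbounded (e.g.\ at ${\tt INR}={\tt SNR}$ the gap is about $\frac{1}{2}\log{\tt SNR}$). The same failure infects your backward direction, since \eqref{R_weak_BE} is at least as large as \eqref{R_weak_E}. The paper avoids this by case-splitting on which HK term is active: when $R_{HK1}$ is active it compares against the \emph{fully} adaptive bound \eqref{R_strong_E}, whose two-term structure matches $R_{HK1}$ and scales as $\left(1-\frac{\alpha}{2}\right)\log{\tt SNR}$, yielding a 1.5-bit gap valid in both directions; only when $R_{HK2}$ is active does it use \eqref{R_weak_E} (1 bit, forward) and \eqref{R_weak_BE} (1 or 2 bits, backward, split on ${\tt SNR}\lessgtr{\tt INR}^3$ as you propose). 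So \eqref{R_strong_E} is not, as you frame it, an optional refinement ``near ${\tt INR}\approx{\tt SNR}$'' serving only the fully-adaptive clause --- it is indispensable even for the 2-bit partial-adaptation claim, and its use is precisely what produces the ``constant gap to fully adaptive schemes in some regimes'' statement.

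A secondary omission: you never treat ${\tt INR}<1$. There the HK expression \eqref{HK} is not the inner bound used (your noise-floor power split degenerates to all-private signaling), and the paper instead compares the rate \eqref{R_INR<1} against \eqref{R_strong_E}, obtaining a 1-bit gap under full adaptation. Your backward-direction plan when $R_{HK2}$ is active does match the paper's, and your observation that self-interference cancellation decouples the two directions is correct; but without the case structure on $R_{HK1}$ versus $R_{HK2}$ and on ${\tt INR}\gtrless 1$, the proposal does not establish the theorem.
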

\begin{proof}
As for the one-way IC \cite{etkin_tse_wang}, we break our proof into two regimes: ${\tt INR}\geq 1$ or ${\tt INR}<1$. 

\subsubsection{${\tt INR}\geq 1$}
Outer bounds have already been derived. Consider now using the specific choice of the Han and Kobayashi  (HK) strategy utilized for the symmetric one-way IC as in \cite[(4)]{etkin_tse_wang}  in each direction. That is, view nodes 1,2 as transmitters and 3,4 as receivers in the $\rightarrow$ direction and employ the particular choice of the HK scheme where private messages are encoded at the level of the noise, and similarly for the $\leftarrow$ direction consider nodes 3,4 as transmitters and 1,2 as receivers. Due to the additive nature of the channel and each node's ability to first cancel out their self-interference, one may achieve the following rates per user, per node for each direction when ${\tt INR}\geq 1$ for the symmetric two-way Gaussian IC:
\begin{align}
& R_{HK}=\min \left\{\frac{1}{2}\log (1+{\tt INR}+{\tt SNR})+\frac{1}{2}\log \left(2+\frac{{\tt SNR}}{{\tt INR}}\right)-1, \log \left(1+{\tt INR}+\frac{{\tt SNR}}{{\tt INR}}\right)-1\right\} \label{HK} \\
& =: \min\{R_{HK1}, R_{HK2}\}. \label{HKdef}
\end{align}

\smallskip
\noindent {\bf If the first term in \eqref{HK} is active} we show a constant gap to the outer bound \eqref{R_strong_E},
\begin{align*}
& \eqref{R_strong_E}-R_{HK1}\\
&=\frac{1}{2} \log \left(1+{\tt SNR}+{\tt INR}+2\sqrt{{\tt SNR}\times {\tt INR}}\right)+\frac{1}{2}\log \left(1+\frac{{\tt SNR}}{1+{\tt INR}}\right)\\
&-\frac{1}{2}\log (1+{\tt INR}+{\tt SNR})-\frac{1}{2}\log \left(2+\frac{{\tt SNR}}{{\tt INR}}\right)+1\\
&\leq \frac{1}{2} \log 2(1+{\tt SNR}+{\tt INR})-\frac{1}{2}\log (1+{\tt INR}+{\tt SNR})+\frac{1}{2}\log \left(1+\frac{{\tt SNR}}{{\tt INR}}\right)-\frac{1}{2}\log \left(2+\frac{{\tt SNR}}{{\tt INR}}\right)+1\\
&\leq \frac{1}{2}\log (2)+\frac{1}{2}\log (1)+1\\
&=1.5
\end{align*}

\begin{remark} Since our bound \eqref{R_strong_E} is derived assuming full adaptation,  we may conclude that this gap holds for both $R_{sym\rightarrow}$ and $R_{sym\leftarrow}$ (i.e. holds for $R_{sym}$). 
If  we were to consider partial adaptation ($\lambda_{13}=0$), this gap could be reduced to 1 bit instead of 1.5 bits for $R_{sym\rightarrow}$, but would remain 1.5 bits for $R_{sym\leftarrow}$ as $\lambda_{24}\neq 0$ in general for partial adaptation.\end{remark}

 \smallskip
\noindent {\bf If the second term in \eqref{HK} is active,} {we use outer bound \eqref{R_weak_E} for the forward direction, to bound the gap for $R_{sym\rightarrow}$ as }
\begin{align*}
 \eqref{R_weak_E}-R_{HK2}& =\log \left(1+{\tt INR}+{\tt SNR}-\frac{{\tt INR}\times {\tt SNR}}{1+{\tt INR}}\right)-\log \left(1+{\tt INR}+\frac{{\tt SNR}}{{\tt INR}}\right)+1\\\
& =\log \left(\frac{(1+{\tt INR})^2+{\tt SNR}}{1+{\tt INR}}\frac{{\tt INR}}{{\tt INR}(1+{\tt INR})+{\tt SNR}}\right)+1\\
&= \log \left(\frac{{\tt INR}(1+{\tt INR})^2+{\tt SNR}\times {\tt INR}}{{\tt INR}(1+{\tt INR})^2+{\tt SNR}(1+{\tt INR})}\right)+1\\
&\leq \log (1)+1\\
&=1
\end{align*}
Since our bound \eqref{R_weak_E}  has the same form as the ETW bound \cite{etkin_tse_wang}, the capacity of the Gaussian two-way interference channel with partial adaptation in the forward direction is also to within 1 bit of the specific HK rate \eqref{HK}, \eqref{HKdef} when ${\tt INR}\geq 1$.


{We use outer bound \eqref{R_weak_BE} for the backward direction, to bound the gap for $R_{sym\leftarrow}$,} noting that we need to consider both cases separately. 
{If the first term in \eqref{R_weak_BE} is relevant (${\tt SNR}\leq {\tt INR}^3$):}
\begin{align*}
  \eqref{R_weak_BE} -R_{HK2}
&=\log \left(1+{\tt INR}+\frac{{\tt SNR}}{{\tt INR}}\right)-\log \left(1+{\tt INR}+\frac{{\tt SNR}}{{\tt INR}}\right)+1 =1
\end{align*}

{If the second term in \eqref{R_weak_BE} is relevant (${\tt SNR}\geq {\tt INR}^3$):} 
\begin{align*}
&  \eqref{R_weak_BE} -R_{HK2}\\
&=\log \left(1+\frac{(\sqrt{{\tt SNR}}+\sqrt{{\tt INR}})^2}{1+{\tt INR}}\right)-\log \left(1+{\tt INR}+\frac{{\tt SNR}}{{\tt INR}}\right)+1\\
&=\log \left(\frac{(1+2{\tt INR}+{\tt SNR}+2\sqrt{{\tt SNR}\times {\tt INR}}){\tt INR}}{(1+{\tt INR})((1+{\tt INR}){\tt INR}+{\tt SNR})}\right)+1\\
&\overset{(a)}{\leq }\log \left(\frac{(2(1+{\tt INR}+{\tt SNR})+{\tt INR}){\tt INR}}{(1+{\tt INR})((1+{\tt INR}){\tt INR}+{\tt SNR})}\right)+1\\
&= \log \left(\frac{2{\tt INR}+2{\tt SNR}\times {\tt INR}+3{\tt INR}^2}{{\tt INR}+2{\tt INR}^2+{\tt SNR}+{\tt INR}^3+{\tt SNR}\times {\tt INR}}\right)+1\\
&\leq \log\left(\frac{2({\tt INR}+{\tt SNR}\times {\tt INR}+2{\tt INR}^2+{\tt SNR}+{\tt INR}^3)}{{\tt INR}+{\tt SNR}\times {\tt INR}+2{\tt INR}^2+{\tt SNR}+{\tt INR}^3}\right)+1\\
& = \log (2)+1\\
&=2
\end{align*} 
where (a) follows the fact that $1+{\tt SNR}+{\tt INR}+2\sqrt{{\tt SNR}\times {\tt INR}}\leq 2(1+{\tt SNR}+{\tt INR})$.

\begin{remark} We have shown that the capacity region of the Gaussian two-way interference channel with partial adaptation (fix $X_1$ and $X_3$) is within at most 2 bits per user per direction to the region achieved by two simultaneous HK schemes in opposite directions when ${\tt INR}\geq 1$ for both directions. Again,  we may conclude that partial adaptation cannot significantly increase the capacity for Gaussian two-way IC with weak interference. 
\end{remark}

\subsubsection{${\tt INR}<1$} In this case, a symmetric version of the HK scheme may be obtained 
from  \cite[(69)]{etkin_tse_wang}, for which each of the four users may achieve the following rate:
\begin{align}
& R_{{\tt INR}<1}\leq \log \left(1+\frac{{\tt SNR}}{1+{\tt INR}}\right)\label{R_INR<1}
\end{align}
{\bf We show that this achieves to within 1 bit of the outer bound \eqref{R_strong_E}} 
\begin{align*}
\eqref{R_strong_E}-R_{{\tt INR}<1}
&\leq \frac{1}{2} \log \left(1+{\tt SNR}+{\tt INR}+2\sqrt{{\tt SNR}\times {\tt INR}}\right)+\frac{1}{2}\log \left(1+\frac{{\tt SNR}}{1+{\tt INR}}\right)-\log \left(1+\frac{{\tt SNR}}{1+{\tt INR}}\right)\\
& =\frac{1}{2} \log \left(1+{\tt SNR}+{\tt INR}+2\sqrt{{\tt SNR}\times {\tt INR}}\right)-\frac{1}{2}\log \left(1+\frac{{\tt SNR}}{1+{\tt INR}}\right)\\
& \leq \frac{1}{2}\log \left(\frac{2(1+{\tt SNR}+{\tt INR})(1+{\tt INR})}{1+{\tt SNR}+{\tt INR}}\right)\\
& \overset{(a)}{\leq} \frac{1}{2}\log (4)\\
&=1
\end{align*}
where (a) uses the condition ${\tt INR}<1$. Since \eqref{R_strong_E} was obtained for full adaptation, we conclude that the capacity of the Gaussian two-way IC is to within 1 bit to the HK region when ${\tt INR}<1$ for both directions.
\end{proof}

We summarize our results for constant gaps in Table \ref{table:gaps}.
\begin{table}
\resizebox{18cm}{!} {
\begin{tabular}{|c|c|c|c|c|c|}
\hline
{\bf Two-way Interference} & \multicolumn{5}{r|}{{\bf Constant Gaps per user per direction, in bits (to outer bound)}}\\ \hline
Very Strong & \multicolumn{5}{r|}{0 (partial)}\\ \hline
Strong & \multicolumn{5}{r|}{1 (full)}\\ \hline
 & ${\tt INR}<1$ & \multicolumn{4}{r|}{1 (full)}\\ \cline{2-6}
 &  & HK1 is active & \multicolumn{3}{r|}{1.5 (full)}\\ \cline{3-6}
 Weak & ${\tt INR}\geq 1$ &  & $\rightarrow$ direction & \multicolumn{2}{r|}{1 (partial)}\\ \cline{4-6}
  & & HK2 is active & $\leftarrow$ direction & ${\tt SNR}\leq {\tt INR}^3$ & 1 (partial)\\ \cline{5-6}
  & & & &  ${\tt SNR}> {\tt INR}^3$ & 2 (partial)\\ \hline
\end{tabular}
}
\caption{Constant gaps between non-adaptive symmetric Han and Kobayashi schemes in each direction and partially or fully adaptive outer bounds for the two-way Gaussian IC.}
\label{table:gaps}
\end{table}


\subsection{Final comments on adaptation versus no-adaptation, and versus perfect output feedback}

In the above, we have highlighted classes of two-way interference channels for which adaptation is useless. For the Gaussian channel, only highly symmetric scenarios were considered. The conclusions made for such 
symmetric scenarios, while insightful, do not tell the whole story. That is, it must be noted that adaptation can provide unbounded gains over non-adaptation for certain channels. A simple example of a channel in which we 
see this is shown in Fig. \ref{fig:ex1}. Here the forward channel has no direct links, while the reverse channel has no cross-over links. In this scenario, a non-adaptive scheme would not be able to achieve any positive rate 
for $R_{12}$ and $R_{34}$. However, the adaptive scheme would be able to achieve positive rates for $R_{12}$ and $R_{34}$ by ``routing'' the messages. For example, for message $M_{12}$ to travel from Tx 1 to Rx 2, 
using adaptation it may do so by taking the path: Tx 1 $\rightarrow$  cross-over link to Rx 4 $\rightarrow$  direct reverse link from Tx 4 to Rx 3 $\rightarrow$  cross-over link Tx 3 to Rx 2. An adaptive network may thus 
provide unbounded gain over a non-adaptive network for at least some of the rates. Note however that if the reverse direction is ``routing'' messages for the forward direction, its own message rates will decrease. 

\begin{figure}
\centerline{\includegraphics[width=7cm]{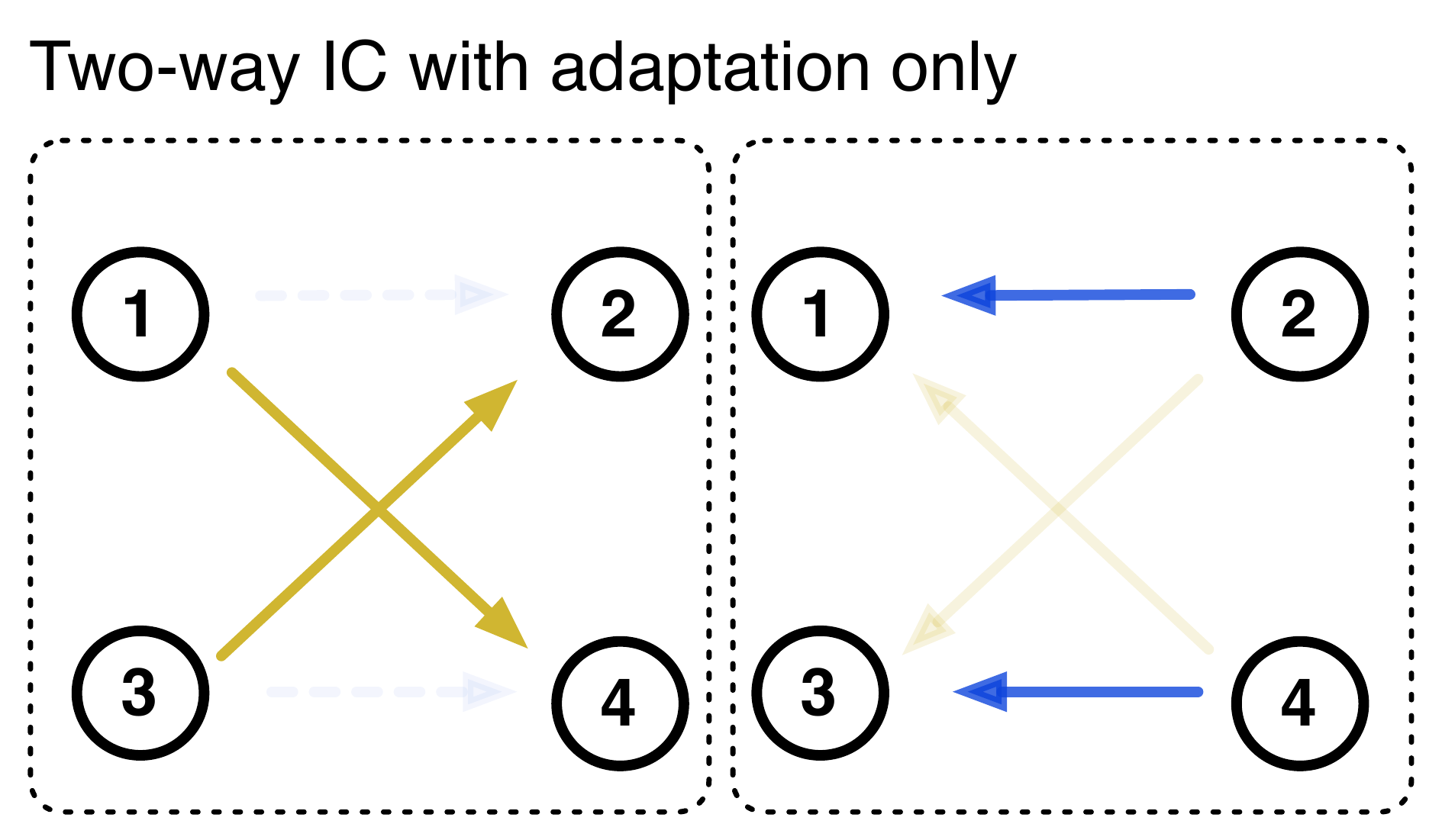}}
\caption{Example of a channel in which adaptation yields unbounded gain over non-adaptation. The non-adaptive scheme would not be able to achieve any non-trivial rates for $R_{12}$ and $R_{34}$, while the adaptive scheme would be able to achieve strictly positive rates (depending on the noise in the different links).}
\label{fig:ex1}
\end{figure}


One can also find examples of networks where perfect output feedback provides unbounded gain over adaptation.
Note that in most scenarios where adaptation is useless, perfect output feedback is known to be useless as well. For example, for the symmetric linear deterministic two-way IC shown in Fig. \ref{fig:curves}, the symmetric sum-rate of the linear deterministic one-way IC and one-way IC {with feedback} are identical for $\frac{2}{3}\leq \alpha \leq 2$, indicating that feedback is useless. In this regime, adaptation was also shown to be useless. In the two-way MAC/BC, for all deterministic models, perfect output feedback may be shown to be useless, and adaptation was also useless.  
One might ask whether feedback and adaptation being useless always go hand in hand. The following example shows the intuitive fact that adaptation being useless does not imply that feedback is useless. 
To show that feedback may provide an unbounded gain over pure adaptation, consider the two-way IC in Fig. \ref{fig:ex2}. For message $M_{12}$ to travel from Tx 1 to Rx 2, using feedback it may do so by taking the path: Tx 1 $\rightarrow$  cross-over link to Rx 4 $\rightarrow$  feedbacks to Tx 3 $\rightarrow$  cross-over link to Rx 2. However, clearly if we employ only adaptation over these forward and reverse links, $M_{12}$ is only able to be decoded by Rx 4 and even with adaptation has no possible way to reach Rx 2. So, feedback may improve the capacity in an unbounded way over adaptation, at least when feedback is ``free'', or perfect (not over other interfering links in the reverse direction). An alternative example of when feedback may outperform adaptation is in the symmetric linear deterministic IC: for $\alpha>2$, as seen in Fig. \ref{fig:curves}, feedback outperforms adaptation. This is intuitive, as feedback is provided over perfect, infinite capacity links, whereas adaptation must take place over the same links over which the data travels. 
Whether  feedback being useless necessarily implies that adaptation is also useless is an interesting open question; all known examples for additive channels seem to suggest this but it has not been rigorously shown.

\begin{figure}
\centerline{\includegraphics[width=14cm]{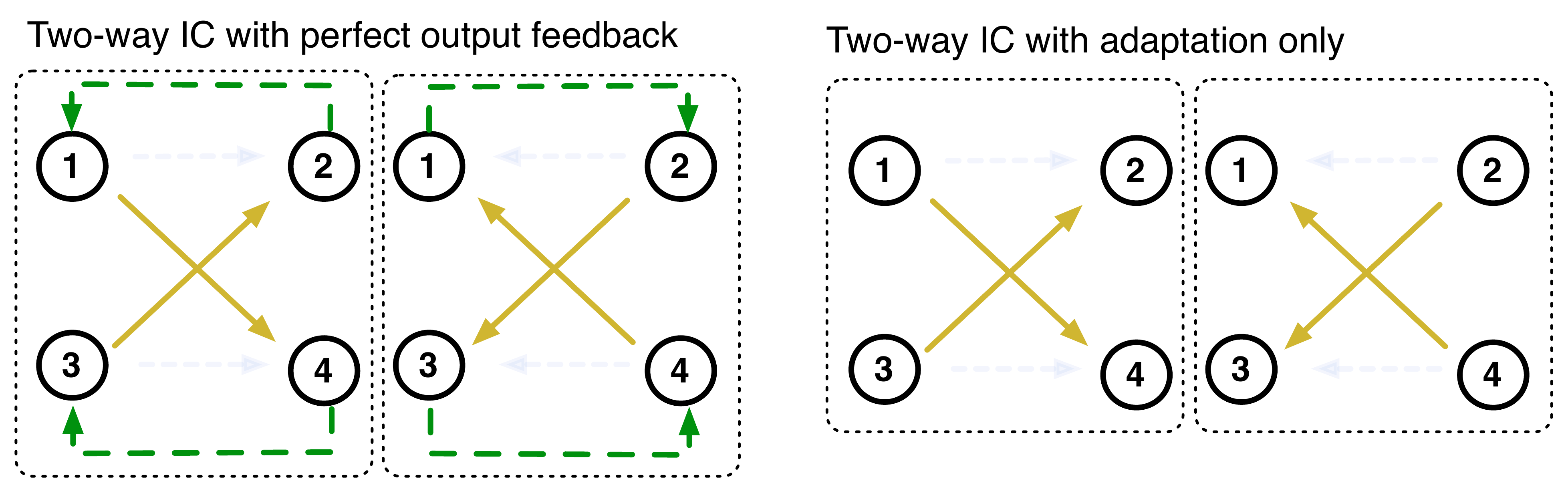}}
\caption{An example of a channel in which perfect output feedback (denoted by the dashed arrows in the left figure) would be able to achieve strictly positive rates for $R_{12}$ and $R_{34}$, but the adaptive scheme for the same channel on the right would not be able to achieve non-zero rates for any of the links.}
\label{fig:ex2}
\end{figure}

\section{Conclusion}
\label{conclusion}
In this work, we have demonstrated a few examples of two-way multi-user channels for which adaptation, or the ability of nodes to adapt their current channel inputs based on previously received channel outputs, is useless from a capacity region perspective, i.e. non-adaptive schemes achieve outer bounds derived for the fully adaptive models. 
Specifically, we obtained the capacity regions of the two-way MAC/BC channel, the two-way Z channel, and the two-way IC of binary modulo-2 addition model, the ``deterministic, invertible and cardinality constrained''  model, and the linear deterministic model. Interestingly, adaptation (full or partial) is not needed to attain the capacity regions even though it is permitted. For noisy channels, we first showed that adaptation can only increase the sum-rate of the two-way Gaussian MAC/BC by up to $\frac{1}{2}$ bit per direction. We then considered the Gaussian two-way IC with 4 terminals and 4 messages. There, it was shown that partial adaptation is useless in very strong interference, and for all other regimes non-adaptive schemes achieved to within constant gaps of fully, or partially, adaptive schemes.
We have demonstrated several examples where adaptation is useless -- the question of when adaptation is useless in general networks remains a challenging open question. However, based on some of the examples seen here, we believe that the following properties may be needed to make the claim that ``adaptation is useless'' for a particular network: 1) the self-interference can be cancelled (excludes the binary multiplier channel), 2)  no loop in the networks (excludes the relaying of data along stronger paths), and 3)  no ``coherent'' gains (excluding possible gains by having users use adaptation to create joint input distributions in for example Gaussian networks). 

\section{Appendix}

\subsection{Proof of Theorem \ref{mod2thm:z}}
\label{mod2Z}

\begin{proof}
Time-sharing may again be used to achieve this region.   For the converse,

\smallskip
\noindent{\it Proof of bound \eqref{z1}:}
\begin{align*}
& n(R_{12}+R_{32}+R_{34}-\epsilon)\\
 & \leq I(M_{12};Y_2^n|M_{21},M_{23},M_{43})+I(M_{32},M_{34};Y_4^n,Y_2^n|M_{43},M_{12},M_{21},M_{23})\\
&\overset{(a)}{\leq} I(M_{12};Y_2^n|M_{21},M_{23},M_{43})+I(M_{32},M_{34};Y_2^n|M_{43},M_{12},M_{21},M_{23})\\
& \ \ +I(M_{32},M_{34};Y_4^n|M_{43},M_{12},M_{21},M_{23},Y_2^n)\\
&\overset{(b)}{\leq} H(Y_2^n|M_{21},M_{23},M_{43})-H(Y_2^n|M_{12},M_{21},M_{23},M_{43})+H(Y_2^n|M_{12},M_{21},M_{23},M_{43})\\
& \ \ +H(Y_4^n|M_{43},M_{12},M_{21},M_{23},Y_2^n)\\
&\overset{(c)}{\leq} \sum_{i=1}^n [H(Y_{2,i})+H(Y_{4,i}|M_{12},M_{21},M_{23},M_{43},Y_4^{i-1},Y_2^n)]\\
& \overset{(d)}{=} \sum_{i=1}^n [H(Y_{2,i})+H(X_{3,i}\oplus X_{4,i}|M_{12},M_{21},M_{23},M_{43},Y_4^{i-1},X_4^i,X_3^{i-1},Y_2^n,X_2^n)]\\
& \overset{(e)}{=} \sum_{i=1}^n [H(Y_{2,i})+H(X_{3,i}|M_{12},M_{21},M_{23},M_{43},Y_4^{i-1}, X_4^i,X_3^{i-1},X_1^n \oplus X_2^n\oplus X_3^n,X_2^n,X_1^n)]\\
&=\sum_{i=1}^n [H(Y_{2,i})]\leq n
\end{align*}

where (a) follows from the chain rule. We drop two negative entropy terms in inequality (b) and notice that the second and the third entropy terms cancel each other. In  (c), we apply the chain rule first, then we drop the conditioning part of the first entropy term. In (d), we construct $X_4^i=f_4(M_{43},Y_4^{i-1})$ and  note that $X_3^{i-1}$ may be obtained from $Y_4^{i-1}=X_3^{i-1}\oplus X_4^{i-1}$, given  $X_4^{i-1}$. Adding $X_2^n$ follows from the fact $X_2^n=f_2(M_{21},M_{23},Y_2^{n-1})$. In (e), we cancel $X_{4,i}$ in the second entropy term since we know $X_4^i$. In addition, given $M_{12}$ and $X_2^n$, we may construct $X_1^n$ as illustrated in Fig. \ref{fig:zchain1}. Now, we may obtain $X_3^n$ from $Y_2^n=X_1^n\oplus X_2^n\oplus X_3^n$, so that the second entropy term in zero. Bound  \eqref{z2} follows by symmetry.

\begin{figure}
\begin{center}
\includegraphics[width=3.6in]{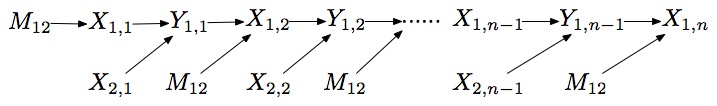}
\caption{The Markov chain used in the outer bound proof of Theorem \ref{mod2thm:z}.} 
\label{fig:zchain1}
\end{center}
\end{figure}

\end{proof}

\subsection{Evaluation of the sum-rate outer bound with full adaptation in Gaussian two-way IC of Theorem \ref{thm:outer-full}}
\label{Gsum}

Letting $E[X_jX_k^*] = \lambda_{jk}$, 
suppressing the subscript $i$, and assuming a symmetric channel, the first two terms  in \eqref{(c)} may be bounded as
\begin{align*}
& H(g_{12}X_1+g_{32}X_3+Z_2|X_2)-H(Z_2)\\
&\leq H(g_{12}X_1+g_{32}X_3+Z_2)-H(Z_2)\\
& \leq \log 2\pi e (\mbox{Var}(g_{12}X_1)+\mbox{Var}(g_{32}X_3)+2\mbox{Cov}(g_{12}X_1,g_{32}X_3)+1)-\log 2\pi e(1)\\
& =\log  ({\tt SNR}+{\tt INR}+2|\lambda_{13}|\cos \theta\sqrt{{\tt SNR}\times {\tt INR}}+1)
\end{align*} 
where $\theta$ is the angle of $g_{12}g_{32}^*\lambda_{13}$.

Similarly, the last two terms may be bounded as
\begin{align}
H(g_{34}X_3+Z_4|g_{32}X_3+Z_2,X_4)-H(Z_4)&\leq \log \left(\frac{\mbox{Var}(g_{34}X_3+Z_4|g_{32}X_3+Z_2,X_4)}{\sigma_4^2}\right)\notag \\
&\leq\log \left(1+\frac{{\tt SNR}(1-|\lambda_{34}|^2)}{{\tt INR}(1-|\lambda_{34}|^2)+1}\right).\label{r12r34}
\end{align}

Combining all terms, we obtain
\begin{align}
R_{sym}=\frac{R_{12}+R_{34}}{2}&\leq \frac{1}{2}\log  ({\tt SNR}+{\tt INR}+2|\lambda_{13}|\cos \theta\sqrt{{\tt SNR}\times {\tt INR}}+1) \notag \\
&+\frac{1}{2}\log \left(1+\frac{{\tt SNR}(1-|\lambda_{34}|^2)}{{\tt INR}(1-|\lambda_{34}|^2)+1}\right) \label{R_strong} 
\end{align}
To obtain \eqref{R_strong_E} one may verify  that \eqref{R_strong} is maximized at $\lambda_{34} = 0$ and $\lambda_{13}=1, \theta=0$.


\bibliographystyle{IEEEtran}
\bibliography{refs}

\end{document}